\documentclass[a4paper,12pt]{article}
\usepackage{amsmath, amssymb, amsthm}
\usepackage{bbm}
\usepackage{natbib}
\usepackage{graphicx}
\usepackage{enumerate}
\usepackage[usenames]{color} 
\usepackage{exscale}
\usepackage{bm}
\usepackage[normalem]{ulem}
\usepackage{mathrsfs}

\definecolor{darkgreen}{rgb}{.2,.8,0}


\def\indf#1{\mathbbmss 1_{  #1  }} 

\def\real{\operatorname{Re}}
\def\imag{\operatorname{Im}}
\def\IdM{\text{Id}}
\def\bS{\mathbb{S}}
\def\bR{\mathbb{R}}
\def\bT{\mathbb{T}}
\def\bC{\mathbb{C}}

\def\wt#1{\widetilde #1}

\newtheorem{theorem}{Theorem}[section]
\newtheorem{lemma}[theorem]{Lemma}
\newtheorem{proposition}[theorem]{Proposition}
\newtheorem{corollary}[theorem]{Corollary}

\theoremstyle{remark}
\newtheorem{remark}[theorem]{Remark}

\theoremstyle{definition}
\newtheorem{definition}[theorem]{Definition}
\newtheorem{example}[theorem]{Example}

\topmargin=-1cm
\textheight=23.1cm
\oddsidemargin=-0.6truecm
\evensidemargin=-0.6truecm
\textwidth=17.3cm
\renewcommand{\baselinestretch}{1}\normalsize

\begin{document}

\title{Multivariate transient price  impact and matrix-valued positive definite functions}

\author{Aur\'elien Alfonsi\footnote{Universit\'e Paris-Est, CERMICS,
   Projet MathRisk ENPC-INRIA-UMLV,
    Ecole des Ponts, 6-8 avenue Blaise Pascal,
   77455 Marne La Vall\'ee, France. {\tt alfonsi@cermics.enpc.fr
}}{\setcounter{footnote}{2}} \and Florian Kl\"ock\footnote{Department of Mathematics, University of Mannheim, 
A5, 6, 68131 Mannheim, Germany. 
}\and \setcounter{footnote}{6}Alexander
Schied\footnote{Department of Mathematics, University of Mannheim, 
A5, 6, 68131 Mannheim, Germany. 
{\tt schied@uni-mannheim.de}\hfill\break
The authors thank an anonymous referee for comments that helped to substantially improve a previous version of the manuscript.
A.A. is grateful for  the support of the \lq\lq Chaire Risques Financiers" of Fondation du Risque. F.K.~and A.S.~thank Martin Schlather and Marco Oesting for discussions and gratefully acknowledge financial support by Deutsche Forschungsgemeinschaft DFG through Research Grant SCHI 500/3-1.}}
\date{\small First version:  October 16, 2013\\
This version: September 9, 2015}

\maketitle
\vspace{-1cm}

\begin{abstract}
We consider a model for linear transient price impact for multiple assets that takes cross-asset impact into account. Our main goal is to single out properties that need to be imposed on the decay kernel so that the model admits well-behaved optimal trade execution strategies. We first show that the existence of such strategies is guaranteed by assuming that the decay kernel corresponds to a matrix-valued positive definite function. An example illustrates, however, that positive definiteness alone does not guarantee that optimal strategies are well-behaved. Building on previous results from the one-dimensional case, we investigate a class of  nonincreasing, nonnegative, and convex decay kernels with values in the symmetric $K\times K$ matrices. We show that these decay kernels are always positive definite and characterize when they are even strictly positive definite, a result that may be of independent interest. Optimal strategies for kernels from this class are  particularly well-behaved if one requires that the decay kernel is also commuting. We show how such decay kernels can be constructed by means of matrix functions and provide a number of examples. In particular, we completely solve the case of matrix exponential decay.\end{abstract}

\noindent{\bf Keywords:} Multivariate price impact, matrix-valued positive definite function, optimal trade execution, optimal portfolio liquidation, matrix function

\section{Introduction}

Price impact refers to the  feedback effect of trades on the quoted price of an asset and it is responsible for the creation of execution costs. It is an empirically established fact that price impact is predominantly transient; see, e.g., \citet{MoroEtAl}. When trading speed is sufficiently slow, the effects of transience  can be reduced 
to considering only a temporary and a permanent price impact component \citep{BertsimasLo,almgrenchriss2001}. For higher trading speeds, however, one needs a model that explicitly describes the decay of price impact between trades. First models of this type were proposed by \citet{bouchaud2004} and \citet{ObizhaevaWang}. These models were later extended into various directions by \citet{AFS1,AFS2}, \citet{Gatheral}, \citet{alfonsischiedslynko}, \citet{gatheralschiedslynko}, \citet{PredoiuShaikhetShreve}, \citet{FruthSchoenebornUrusov}, and \citet{Lokka},  to mention only a few. A more comprehensive list of references can be found in \citet{GatheralSchiedSurvey}. We also refer to \citet{GuoTutorial} for an introduction to the microscopic order book picture that is behind the mesoscopic models mentioned above.

All above-mentioned models for transient price impact deal only with one single risky asset. While multi-asset models for temporary and permanent price impact  \citep{schoeneborn2011} or for generic price impact functionals  \citep{schiedschoenebornteranchi,kratzschoeneborn} were considered earlier, we are not aware of any previous approaches to analyzing the specific effects of transient cross-asset price impact. Our goal in this paper is to propose and analyze a simple model for transient price impact between $K$ different risky assets. 
Following the one-dimensional ansatz of \citet{Gatheral}, the  time-$t$ impact on the price of the $i^{\text{th}}$ asset  that is generated by trading one unit of the $j^{\text{th}}$ asset at time $s<t$ will be described by the number $G_{ij}(t-s)$ for a certain function $G_{ij}:[0,\infty)\to\bR$.
The matrix-valued function $G(t)=(G_{ij}(t))_{i,j=1,\dots,K}$ will be called the \emph{decay kernel} of the multi-asset price impact model.
 
When setting up such a model in a concrete situation, the first question one encounters is how to choose the decay kernel. Already in the one-dimensional situation, $K=1$, the decay kernel $G$ needs to satisfy certain conditions so that the resulting price impact model has some minimal regularity properties such as the existence of optimal trade execution strategies, the absence of price manipulation in the sense of \citet{hubermanstanzl}, or the non-occurrence of oscillatory strategies. It was shown in \citet{alfonsischiedslynko} that these properties are satisfied when $G$ is nonnegative, nonincreasing, and convex. 
Here we will continue the corresponding analysis and   extend  it  to matrix-valued decay kernels $G$. Our first observation is that $G$ must correspond to a certain matrix-valued positive definite function. Such functions were previously characterized and analyzed, e.g., by \citet{cramer40,naimark43,falb69}. An example illustrates, however, that positive definiteness alone does not guarantee that optimal strategies are well-behaved. We therefore introduce a class of  nonincreasing, nonnegative, and convex decay kernels with values in the symmetric $K\times K$ matrices. We show that these decay kernels are always positive definite, and we characterize in Theorem \ref{convex strict pd} when they are even strictly positive definite. Optimal strategies for kernels from this class do not admit oscillations if one additionally requires that the decay kernel is commuting. Based on this result, we will  address in Section \ref{continuous-time section} the problem of optimizing simultaneously over  time grids and strategies  and state the solution in terms of a suitable continuous-time limit. We finally show how such decay kernels can be constructed by means of matrix functions  and provide a number of examples. In particular, we completely solve the case of matrix exponential decay. 

Our main general results are stated in Section \ref{section-2}. Transformation results for decay kernels and their optimal strategies 
along with several explicit examples are given in Section \ref{section-4}. Since the situation $K>1$ is considerably more complex than the one-dimensional case, we have summarized the main conclusions that can be drawn from our results in Section \ref{conclusion section}. 
These conclusions will focus on our initial question: From which class of functions should decay kernels for transient price impact be chosen?
Most proofs are given in Section \ref{sect-proofs}.

\section{Statement of general results}
\label{section-2}

In this section, we first introduce a linear market impact model with transient price impact for $K$ different risky assets.  We then discuss which properties a decay kernel should satisfy so that the corresponding market impact model has certain desirable features and properties. Two of these properties are the existence of optimal strategies and  the absence of price manipulation strategies in the sense of \citet{hubermanstanzl}, which we will both characterize by establishing a link to the theory of positive definite matrix-valued functions. Requiring positive definiteness, however, will typically not be sufficient to guarantee that optimal strategies are well-behaved. We will thus be led to a more detailed analysis of positive definite matrix-valued functions and the associated quadratic minimization problems, an analysis that might be of independent interest. 

\subsection{Preliminaries}

We introduce here a market impact model for an investor trading in $K$ different securities.  When the investor is not active, the \emph{unaffected price process} of these assets is given by a  right-continuous $K$-dimensional martingale $(S^0_t)_{t \in [0,T]}$ defined on a filtered probability space $(\Omega, \mathscr F, (\mathscr F_t)_{t\in[0,T]},\mathbb P)$. Now suppose that the  investor can trade at the times of a \emph{time grid} $\bT=\{t_1,\dots,t_N\}$, where $N\in\mathbb N$ and  $0 = t_1 < t_2 < \cdots < t_N $ (an extended setup with the possibility of trading in continuous time will be considered in Section \ref{continuous-time section}). The size of the order in the $i^{\text{th}}$ asset at time $t_k$ is described by a $\mathscr{F}_{t_k}$-measurable random variable $\xi^i_k$, where positive values denote buys and negative values denote sells. By $\xi_k=(\xi^1_k,\dots,\xi_k^K)^\top$ we denote the column vector of all orders placed at time $t_k$. Our main interest here will be in admissible strategies that $\mathbb P$-a.s.~liquidate a given initial portfolio $X_0\in\bR^K$. Such strategies are needed in practice when the initial portfolio $X_0$ is too big to be liquidated immediately; see, e.g., \citet{almgrenchriss2001}.


\begin{definition}\label{discrete-time def}Let $\bT=\{t_1,\dots,t_N\}$ be a time grid. An \emph{admissible strategy} for $\bT$ is a sequence $\bm\xi=(\xi_1,\ldots,\xi_N)$ of bounded\footnote{Boundedness is assumed here for simplicity and can easily be relaxed; for instance, it is enough to assume that  both $\xi_k$ and $S^0$ are square-integrable. Since the total number of shares of an asset is always finite, boundedness can be assumed without loss of generality from an economic point of view.} $K$-dimensional random variables such that each $\xi_k$ is $\mathscr{F}_{t_k}$-measurable;  $\bm\xi$ is called \emph{deterministic} if each  $\xi^i_k$ does not depend on $\omega\in\Omega$. The set of admissible liquidation strategies for a given initial portfolio $X_0\in\bR^K$ and $\bT$ is defined as 
\begin{equation}
\mathscr X(\mathbb T,X_0):=\Big\{\bm\xi=(\xi_1,\ldots,\xi_N)\,\Big|\,\text{$\bm\xi$ is admissible and }X_0+\sum_{k=1}^N \xi_k =0\text{ $\mathbb P$-a.s.}\Big\}.
\end{equation}
The set of deterministic liquidation strategies in $\mathscr X(\mathbb T,X_0)$ is denoted by $\mathscr X_{\textrm{det}}(\mathbb T,X_0)$. 
\end{definition}


We now turn toward the definition of the \emph{price impact} generated by an admissible strategy. As discussed in more detail in the introduction, in recent years several models were proposed that take the transience of price impact into account.  All these models, however, consider only one risky asset. 
In this paper, our goal is to extend the model from \citet{alfonsischiedslynko}, which is itself a linear and discrete-time version of the model from \citet{Gatheral}, to a situation with $K>1$ risky assets.   A  \emph{decay kernel} will be a continuous function
$$G:[0,\infty)\longrightarrow \bR^{K\times K}
$$
taking values in the space $\bR^{K\times K}$ of all real $K\times K$-matrices.
When  $\bm\xi$ is an admissible strategy for some time grid $\bT=\{t_1,\dots,t_N\}$ and $t\ge t_k\in\bT$, the  value $G_{ij}(t-t_k)$ describes the time-$t$ impact on the price of the $i^{\text{th}}$ asset  that was generated by trading one unit of the $j^{\text{th}}$ asset at time $t_k$. We therefore define the \emph{impacted price process} as 
\begin{equation}\label{eq-affectedPP}
S^{\bm\xi}_t = S^0_t + \sum_{t_k < t} G(t-t_k) \, \xi_{k},\qquad t\ge0.  
\end{equation}
Here $G(t-t_k) \, \xi_{k}$ denotes the application of the $K\times K$ matrix $G(t-t_k) $ to the $K$-dimensional vector $\xi_k$.

Let us write $S^{\bm\xi,i}_t$ for the  $i^{\text{th}}$ component of the price vector $S^{\bm\xi}_t=(S^{\bm\xi,1}_t,\dots,S^{\bm\xi,K}_t)^\top$. The execution of the $k^{\text{th}}$ order, $\xi_k$, shifts the price  of the $i^{\text{th}}$ asset linearly from $S^{\bm\xi,i}_{t_k}$ to 
$S^{\bm\xi,i}_{t_k+}$. The order $\xi_k^i$ of shares of the $i^{\text{th}}$ asset is therefore executed at the average price $\frac12(S^{\bm\xi,i}_{t_k+}+S^{\bm\xi,i}_{t_k})$.  The proceeds from executing the amount of $\xi_k^i$  shares of the $i^{\text{th}}$ asset are therefore given by $-\xi_k^i \frac12(S^{\bm\xi,i}_{t_k+}+S^{\bm\xi,i}_{t_k})$. It follows that the total revenues incurred by the strategy $\bm\xi$ are given by 
\begin{equation}\label{revenues def}
\mathscr R(\bm\xi)=-\frac12\sum_{k=1}^N\xi_k^\top  (S^{\bm\xi}_{t_k+}+S^{\bm\xi}_{t_k}). 
\end{equation}
In the sequel, it will be convenient to switch from revenues to costs, which are defined as the amount $X_0^\top S^0_0-\mathscr{R}(\bm\xi)$ by which the revenues fall short of the book value, $X_0^\top S_0^0$, of the initial portfolio.


\begin{remark}\label{order types remark}
In the one-dimensional version of our model, a bid-ask spread is often added so as to provide an interpretation of $\xi_k$ as a market order placed in a block-shaped limit order book; see, e.g., Section 2.6 in \citet{AS}. In practice, however, execution algorithms will use a variety of different order types, and one should think of price impact and  costs as being aggregated over these order types. For instance, while   half the spread has to be paid when placing a market buy order, the same amount can be earned when a limit sell order is executed. Other order types may  yield rebates when executed or may allow execution at mid price. So ignoring the bid-ask spread is probably more realistic than adding it to each single execution of an order. \end{remark}


 In this paper we will investigate the minimization of the \emph{expected costs} of a strategy, which in many situations is an appropriate optimization problem for determining optimal trade execution strategies. Our main interest, however, is to provide conditions on the decay kernel $G$ under which the model is sufficiently regular. As discussed at length in \citet{GatheralSchiedSurvey}, the regularity of a market impact model should be measured by the existence and behavior of execution strategies that minimize the expected costs, because the regularity of a model should be considered independently from the possible risk aversion that an agent using this model might have.
 
 To analyze the expected costs of an admissible strategy $\bm\xi=(\xi_1,\dots,\xi_N)$, it will be convenient to identify the particular realization, $\bm\xi(\omega)=(\xi_1(\omega),\dots,\xi_N(\omega))$, with an element of the tensor product space $\bR^N\otimes\bR^K$. We will also write $|\bT|$ for the cardinality of a time grid. 


\begin{lemma}\label{expected costs lemma}The expected costs of a strategy $\bm\xi \in \mathscr X(\mathbb T,X_0)$ for a time grid $\bT$ are given by 
\begin{equation}\label{expected costs formula}
\mathbb E[\,X_0^\top S_0-\mathscr{R}(\bm\xi)\,]=\mathbb E[\,C_\bT(\bm\xi)\,],
\end{equation}
where the \emph{cost function} $C_\bT: \mathbb R^{|\bT|} \otimes\bR^K\rightarrow \mathbb R$ is given by
\begin{equation}\label{C wt G representation}
C_\bT(\bm\xi) = \frac 12 \sum_{k,\ell=1}^N \xi_k^\top \wt G(t_k-t_\ell)\xi_\ell
\end{equation}
for the function $\wt G:\bR\to\bR^{K\times K}$ defined by 
\begin{equation}\label{tilde G def eq}
\wt G(t):=\begin{cases}G(t)&\text{for $t>0$,}\\
\frac12(G(0)^\top+G(0))&\text{for $t=0$,}\\
G(-t)^\top&\text{for $t<0$.}
\end{cases}
\end{equation}
\end{lemma}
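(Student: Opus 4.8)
The plan is to expand the revenue functional $\mathscr R(\bm\xi)$ explicitly, peel off the contribution of the martingale $S^0$ (which should vanish in expectation by the liquidation constraint), and then algebraically match the remaining deterministic-kernel quadratic form with $C_\bT$.

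First I would evaluate the price immediately before and after the $k^{\text{th}}$ trade. From \eqref{eq-affectedPP} we have $S^{\bm\xi}_{t_k}=S^0_{t_k}+\sum_{\ell:\,t_\ell<t_k}G(t_k-t_\ell)\xi_\ell$ and $S^{\bm\xi}_{t_k+}=S^{\bm\xi}_{t_k}+G(0)\xi_k$, so that $\tfrac12(S^{\bm\xi}_{t_k+}+S^{\bm\xi}_{t_k})=S^0_{t_k}+\sum_{\ell:\,t_\ell<t_k}G(t_k-t_\ell)\xi_\ell+\tfrac12 G(0)\xi_k$. Substituting into \eqref{revenues def} yields
\begin{equation*}
-\mathscr R(\bm\xi)=\sum_{k=1}^N\xi_k^\top S^0_{t_k}+\sum_{k=1}^N\sum_{\ell:\,t_\ell<t_k}\xi_k^\top G(t_k-t_\ell)\xi_\ell+\frac12\sum_{k=1}^N\xi_k^\top G(0)\xi_k.
\end{equation*}

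The key step is to show that the stochastic part contributes nothing in expectation. Writing $x_k:=X_0+\sum_{j=1}^k\xi_j$ for the position held after the $k^{\text{th}}$ trade (so $x_0=X_0$, while $x_N=0$ by the liquidation constraint, and each $x_k$ is $\mathscr F_{t_k}$-measurable), and using $t_1=0$, a summation by parts gives
\begin{equation*}
X_0^\top S^0_0+\sum_{k=1}^N\xi_k^\top S^0_{t_k}=-\sum_{k=1}^{N-1}x_k^\top\big(S^0_{t_{k+1}}-S^0_{t_k}\big).
\end{equation*}
Since $x_k$ is bounded and $\mathscr F_{t_k}$-measurable while $S^0$ is a martingale, the tower property gives $\mathbb E\big[x_k^\top(S^0_{t_{k+1}}-S^0_{t_k})\big]=\mathbb E\big[x_k^\top\,\mathbb E[S^0_{t_{k+1}}-S^0_{t_k}\mid\mathscr F_{t_k}]\big]=0$ for each $k$, so the whole martingale contribution vanishes.

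It then remains to identify $\sum_k\sum_{\ell:\,t_\ell<t_k}\xi_k^\top G(t_k-t_\ell)\xi_\ell+\tfrac12\sum_k\xi_k^\top G(0)\xi_k$ with $C_\bT(\bm\xi)$. I would split the double sum defining $C_\bT$ according to $k>\ell$, $k<\ell$, and $k=\ell$, and use the scalar identity $\xi_k^\top M\xi_\ell=\xi_\ell^\top M^\top\xi_k$ together with the definition \eqref{tilde G def eq} of $\wt G$: the $k<\ell$ terms, where $\wt G(t_k-t_\ell)=G(t_\ell-t_k)^\top$, collapse onto the $k>\ell$ terms to produce $\sum_{k>\ell}\xi_k^\top G(t_k-t_\ell)\xi_\ell$, while the diagonal contributes $\tfrac12\sum_k\xi_k^\top\tfrac12(G(0)^\top+G(0))\xi_k=\tfrac12\sum_k\xi_k^\top G(0)\xi_k$. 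This matches the expression above and proves \eqref{expected costs formula}--\eqref{C wt G representation}. The only genuinely delicate point is the vanishing of the martingale term: a termwise argument fails, and one really needs both the liquidation constraint $x_N=0$ and the summation by parts to expose an adapted integrand paired against martingale increments; everything else is bookkeeping with transposes, and boundedness of $\bm\xi$ (hence of the $x_k$) supplies the required integrability.
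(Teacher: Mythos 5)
Your proof is correct, and its overall skeleton (expand $\mathscr R(\bm\xi)$ via the impacted price, isolate the $S^0$-part, then match the deterministic quadratic form with $C_\bT$ by splitting into $k>\ell$, $k<\ell$, $k=\ell$ and using $\xi_k^\top M\xi_\ell=\xi_\ell^\top M^\top\xi_k$) is the same as the paper's; the algebra with $\wt G$ is identical. Where you diverge is the martingale step. You perform an Abel summation to rewrite $X_0^\top S^0_0+\sum_k\xi_k^\top S^0_{t_k}=-\sum_{k=1}^{N-1}x_k^\top(S^0_{t_{k+1}}-S^0_{t_k})$ and then annihilate each increment using that the position $x_k$ is bounded, $\mathscr F_{t_k}$-measurable, and paired against a martingale increment. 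The paper argues more directly and \emph{termwise}: since $\xi_k$ is bounded and $\mathscr F_{t_k}$-measurable, the tower property gives $\mathbb E[\xi_k^\top S^0_{t_k}]=\mathbb E[\xi_k^\top S^0_T]$ for each $k$ separately, and only then is the liquidation constraint $\sum_k\xi_k=-X_0$ invoked to conclude $\mathbb E\bigl[\sum_k\xi_k^\top S^0_{t_k}\bigr]=-X_0^\top S^0_0$. This makes your closing remark that \lq\lq a termwise argument fails\rq\rq{} inaccurate: what fails is the claim that each $\mathbb E[\xi_k^\top S^0_{t_k}]$ vanishes individually, but the termwise projection of $S^0_{t_k}$ onto $S^0_T$ does work, needs no summation by parts, and is arguably the shorter route. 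Both arguments use exactly the same hypotheses (boundedness, adaptedness, the martingale property, and the constraint), so the choice is one of taste; your version has the small conceptual benefit of exhibiting the cash-flow term as a discrete stochastic integral of the position process against $S^0$, which is the viewpoint that generalizes to the continuous-time setting of Section 5.
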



We will now discuss the possible existence and structure of admissible strategies minimizing the expected costs within the class $\mathscr X(\mathbb T,X_0)$.   The problem of optimizing simultaneously over  time grids $\bT$ and strategies $\bm\xi\in\mathscr{X}(\bT,X_0)$ will be addressed in Section \ref{continuous-time section}.


\begin{lemma}\label{deterministic minimizers lemma}
There exists a strategy  in $\mathscr X(\mathbb T,X_0)$ that minimizes the expected costs $\mathbb E[\,C_\bT(\bm\eta)\,]$ among all strategies $\bm\eta\in \mathscr X(\mathbb T,X_0)$  if and only if there exists a deterministic strategy that minimizes the cost function $C_\bT(\bm\xi)$ over all $\bm\xi\in \mathscr X_{\text{\rm det}}(\mathbb T,X_0)$. In this case, any minimizer $\bm\eta^*\in\mathscr X(\mathbb T,X_0)$ can be regarded as a function from $\Omega$ into $\mathscr X_{\text{\rm det}}(\mathbb T,X_0)$ that takes $\mathbb P$-a.s.~values in the set of deterministic minimizers of the cost function $C_\bT(\cdot)$.  
\end{lemma}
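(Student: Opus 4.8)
The plan is to prove the equivalence by reducing the stochastic optimization to a fiberwise deterministic one. The key structural fact is that the cost function $C_\bT$ in \eqref{C wt G representation} is a \emph{deterministic} quadratic form that acts on each realization $\bm\xi(\omega)$ separately: the price impact at time $t_k$ depends only on the trades placed at times $t_\ell$, all of which are already realized, so no future filtration information enters the cost. Consequently $\mathbb E[\,C_\bT(\bm\eta)\,]=\mathbb E\big[\,C_\bT(\bm\eta(\omega))\,\big]$, where inside the expectation $C_\bT(\bm\eta(\omega))$ is evaluated at the deterministic vector $\bm\eta(\omega)\in\bR^N\otimes\bR^K$ for each fixed $\omega$. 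This is exactly the representation supplied by Lemma \ref{expected costs lemma}, which I would invoke at the outset.

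First I would treat the ``if'' direction, which is immediate: if $\bm\xi^*\in\mathscr X_{\text{det}}(\bT,X_0)$ minimizes $C_\bT$ over all deterministic liquidation strategies, then for \emph{any} admissible $\bm\eta\in\mathscr X(\bT,X_0)$ and $\mathbb P$-a.e.\ $\omega$ the realization $\bm\eta(\omega)$ is itself a deterministic liquidation strategy, so $C_\bT(\bm\eta(\omega))\ge C_\bT(\bm\xi^*)$. Taking expectations yields $\mathbb E[\,C_\bT(\bm\eta)\,]\ge C_\bT(\bm\xi^*)=\mathbb E[\,C_\bT(\bm\xi^*)\,]$, so the constant (hence admissible) strategy $\bm\xi^*$ is a minimizer in $\mathscr X(\bT,X_0)$.

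For the ``only if'' direction, suppose $\bm\eta^*\in\mathscr X(\bT,X_0)$ minimizes the expected cost. I would argue that the deterministic minimization problem must itself be solvable. The obstruction I expect to be the main point is ensuring the deterministic infimum is \emph{attained}, not merely approached, and that the attained value equals $\mathbb E[\,C_\bT(\bm\eta^*)\,]$. The clean way is to note that the deterministic constrained problem ``minimize $C_\bT(\bm\xi)$ subject to $\sum_k\xi_k=-X_0$'' is the minimization of a quadratic functional over an affine subspace of the finite-dimensional space $\bR^N\otimes\bR^K$; since $\bm\eta^*$ exists and realizes the expected-cost infimum, the quadratic form must be bounded below on this affine subspace, and a quadratic bounded below on an affine subspace always attains its infimum at some $\bm\xi^*$ (the minimum of a quadratic is attained whenever the form is bounded below, even in the degenerate case). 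Let $m:=C_\bT(\bm\xi^*)=\inf_{\bm\xi\in\mathscr X_{\text{det}}}C_\bT(\bm\xi)$. Then $C_\bT(\bm\eta^*(\omega))\ge m$ pointwise, so $\mathbb E[\,C_\bT(\bm\eta^*)\,]\ge m$; combined with the ``if''-direction inequality $\mathbb E[\,C_\bT(\bm\eta^*)\,]\le C_\bT(\bm\xi^*)=m$, we get equality, proving $\bm\xi^*$ is a deterministic minimizer.

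Finally, the representation statement follows from the same pointwise bound: equality $\mathbb E[\,C_\bT(\bm\eta^*)\,]=m$ together with $C_\bT(\bm\eta^*(\omega))\ge m$ forces $C_\bT(\bm\eta^*(\omega))=m$ for $\mathbb P$-a.e.\ $\omega$, i.e.\ $\bm\eta^*(\omega)$ lies $\mathbb P$-a.s.\ in the set of deterministic minimizers of $C_\bT(\cdot)$. One should also remark that $\bm\eta^*(\omega)\in\mathscr X_{\text{det}}(\bT,X_0)$ for a.e.\ $\omega$ because $\bm\eta^*$ is an admissible liquidation strategy, so the liquidation constraint $X_0+\sum_k\eta^*_k(\omega)=0$ holds $\mathbb P$-a.s.; thus viewing $\bm\eta^*$ as a map $\Omega\to\mathscr X_{\text{det}}(\bT,X_0)$ is legitimate and the map takes values in the minimizer set a.s. The only subtlety worth flagging is measurability of this $\Omega$-dependent selection, but since $\bm\eta^*$ is given as a fixed admissible (hence $\mathscr F$-measurable) strategy, no further selection argument is required.
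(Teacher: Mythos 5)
Your proof is correct, but the decisive step takes a genuinely different route from the paper's. For the ``only if'' direction the paper argues by contradiction without ever using the quadratic structure of $C_\bT$: if a minimizer $\bm\eta\in\mathscr X(\bT,X_0)$ of the expected costs existed but no deterministic minimizer did, then no $\bm\xi\in\mathscr X_{\text{det}}(\bT,X_0)$ could satisfy $C_\bT(\bm\xi)\le\mathbb E[\,C_\bT(\bm\eta)\,]$ (any such $\bm\xi$, being itself an admissible strategy, would have to be a deterministic minimizer); but $\mathbb P$-a.e.\ realization $\bm\eta(\omega)$ is a deterministic liquidation strategy, so this would force $C_\bT(\bm\eta(\omega))>\mathbb E[\,C_\bT(\bm\eta)\,]$ a.s., which is impossible upon taking expectations. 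You instead invoke the attainment theorem for quadratic functions bounded below on an affine subspace (a Frank--Wolfe-type fact). Both arguments are valid. What the paper's route buys: it is fully self-contained and works verbatim for an arbitrary measurable cost functional, not only quadratic ones, since it uses nothing beyond the fact that realizations of admissible strategies are deterministic strategies. What your route buys: it is constructive about where the deterministic minimizer comes from, essentially anticipating the quadratic-programming existence argument that the paper defers to Proposition \ref{prop-opt-strat}, and you also spell out the final representation assertion (pointwise inequality plus equality of expectations forces a.s.\ membership in the minimizer set), which the paper leaves to the reader. One point you should make fully explicit: boundedness below of $C_\bT$ on the affine subspace follows because every deterministic liquidation strategy is admissible, hence its cost is at least the optimal expected cost, which is finite since strategies are bounded.
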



  The condition  
  \begin{eqnarray}
  \mathbb E[\,C_\bT(\bm\eta)\,]\ge0 & \quad & \text{{for all  $\bT$, $X_0 \in \mathbb R^k$, and $\bm\eta \in \mathscr X(\mathbb T,X_0)$}} \label{regularity condition} 
  \end{eqnarray}
   can  be regarded as a regularity condition for the underlying market impact model. It rules out the possibility of obtaining positive expected profits through exploiting one's own price impact; see, e.g., \citet{alfonsischiedslynko} or \citet{GatheralSchiedSurvey} for  detailed discussions. In particular, it rules out the existence of \emph{price manipulation strategies} in the sense of \citet{hubermanstanzl}. In the sequel we will therefore focus on decay kernels that  satisfy \eqref{regularity condition}. It will turn out that  \eqref{regularity condition} can be equivalently characterized by requiring that the function $\wt G$ from \eqref{tilde G def eq} is a positive definite matrix-valued function in the following sense. 


\begin{definition}A function $H:\bR\to\bC^{K\times K}$ is called a \emph{positive definite matrix-valued function} if
for all $N\in\mathbb N$, $t_1,\dots, t_N\in\bR$, and $z_1,\dots,z_N\in\bC^K$,
\begin{equation}\label{wt G positive def eq}
\sum_{i,j=1}^Nz_i^*  H(t_i-t_j)z_j\ge0,
\end{equation}
where a $*$-superscript denotes the usual conjugate transpose of a complex vector or matrix. If moreover equality in \eqref{wt G positive def eq} can hold only for $z_1=\cdots=z_N=0$, then $H$ is called \emph{strictly positive definite}. When $K=1$, we say that $H$ is a \emph{(strictly) positive definite  function}.
\end{definition}


Note that a positive definite matrix-valued function $H$ is defined on the entire real line $\bR$ and is allowed to take values in the complex matrices. A decay kernel, $G$, on the other hand, is defined only on $[0,\infty)$ and takes values in the real matrices, $\bR^{K\times K}$. Considering the extended framework of $\bC^{K\times K}$-valued positive definite functions will turn out to be  convenient for our analysis.  The next proposition explains the relation between positive definite functions and decay kernels with nonnegative expected costs. 


\begin{proposition}\label{Cge0 pos def prop}
For a decay kernel $G$, the following conditions are equivalent.
\begin{enumerate}
\item $\mathbb E[\,C_\bT(\bm\eta)\,]\ge0$ for all time grids $\bT$, initial portfolios $X_0\in\bR^K$,  and    $\bm\eta\in\mathscr{X}(\bT,X_0)$.
\item $C_\bT(\bm\xi)\ge0$ for all time grids $\bT$ and $\bm\xi\in\bR^{|\bT|}\otimes\bR^K$.
\item For all time grids $\bT$, $C_\bT:\bR^{|\bT|}\otimes\bR^K\to\bR$ is convex. 
\item $\wt G$ defined in \eqref{tilde G def eq} is a positive definite matrix-valued  function.
\end{enumerate}
If moreover  these equivalent conditions are satisfied, then the equality $C_\bT(\bm\xi)=0$ holds for all time grids $\bT$ only for $\bm\xi=\bm0$, if and only if $\wt G$ is   strictly positive definite. In this case, $C_\bT:\bR^{|\bT|}\otimes\bR^K\to\bR$ is  strictly convex for all $\bT$.
\end{proposition}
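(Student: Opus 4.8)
The plan is to establish the cycle of equivalences together with the strict refinement by essentially algebraic means. A structural fact used throughout is the symmetry $\wt G(t)^\top=\wt G(-t)$, immediate from \eqref{tilde G def eq}: it makes the block matrix $M$ with entries $M_{k\ell}=\frac12\wt G(t_k-t_\ell)$ symmetric, so that $C_\bT(\bm\xi)=\bm\xi^\top M\bm\xi$ is a genuine real quadratic form. For (a) $\Leftrightarrow$ (b) I would invoke Lemma \ref{expected costs lemma}. Every deterministic $\bm\xi\in\bR^{|\bT|}\otimes\bR^K$ is admissible and liquidates the portfolio $X_0:=-\sum_k\xi_k$, and for it $\mathbb E[C_\bT(\bm\xi)]=C_\bT(\bm\xi)$; this yields (a) $\Rightarrow$ (b). Conversely, if (b) holds then $C_\bT(\bm\eta(\omega))\ge0$ for $\mathbb P$-a.e.\ $\omega$ and any admissible $\bm\eta$, and integrating gives (a). Finally, (b) $\Leftrightarrow$ (c) is the standard fact that a real quadratic form is convex precisely when its (symmetric) matrix is positive semidefinite, i.e.\ precisely when it is everywhere nonnegative.

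The core is (b) $\Leftrightarrow$ (d). Since a time grid is a particular increasing tuple and real vectors are particular complex ones, (d) $\Rightarrow$ (b) is a direct specialization of \eqref{wt G positive def eq}. For (b) $\Rightarrow$ (d) I would reduce an arbitrary configuration in \eqref{wt G positive def eq} to a time grid in three steps. The double sum is invariant under permuting the pairs $(t_i,z_i)$, so I may assume $t_1\le\cdots\le t_N$; it depends on the $t_i$ only through their differences, so after a translation I may assume $t_1=0$; and coincident points can be merged, since with distinct values $s_1<\cdots<s_m$ and $w_p:=\sum_{i:t_i=s_p}z_i$ one has
\[
\sum_{i,j=1}^N z_i^*\wt G(t_i-t_j)z_j=\sum_{p,q=1}^m w_p^*\wt G(s_p-s_q)w_q,
\]
a sum over the genuine time grid $\bT'=\{s_1,\dots,s_m\}$. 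To descend from complex to real vectors, write $w_p=x_p+iy_p$ with $x_p,y_p\in\bR^K$; using $\wt G(s)^\top=\wt G(-s)$ the imaginary part cancels and the form equals $2\,C_{\bT'}(\bm x)+2\,C_{\bT'}(\bm y)$, which is $\ge0$ by (b). This proves (d).

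For the strict statement I would reuse the same reduction, keeping in mind the standard convention that strict positive definiteness refers to \emph{pairwise distinct} points $t_i$ (for coincident points equality with a nonzero tuple is unavoidable as soon as $N\ge2$). If $\wt G$ is strictly positive definite, then for a time grid and real $\bm\xi$ with $C_\bT(\bm\xi)=0$ the defining property, applied with $z_k=\xi_k$, forces $\bm\xi=\bm0$; hence each $C_\bT$ is a positive definite quadratic form and therefore strictly convex. Conversely, a complex witness of non-strictness on pairwise distinct points yields, after translation and reordering, a time grid $\bT'$, and the decomposition above splits the witness into real tuples $\bm x,\bm y$ with $C_{\bT'}(\bm x)=C_{\bT'}(\bm y)=0$ and at least one of them nonzero, contradicting positive definiteness of $C_{\bT'}$. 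Thus the two strictness properties coincide.

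The main obstacle is the implication (b) $\Rightarrow$ (d): the hypothesis controls only increasing, pairwise distinct grids starting at $0$ with real weights, while positive definiteness demands arbitrary real points and complex weights. The delicate steps are the algebraic merging of coincident points and the real/imaginary splitting, the latter relying essentially on the Hermitian-type symmetry $\wt G(t)^\top=\wt G(-t)$; it is worth noting that none of these reductions uses the continuity of $G$.
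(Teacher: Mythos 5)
Your proposal is correct and takes essentially the same approach as the paper: the paper also reduces (b) $\Rightarrow$ (d) to the observation that the block matrix $\bigl(\wt G(t_k-t_\ell)\bigr)_{k,\ell}$ is real and Hermitian (via $\wt G(t)^\top=\wt G(-t)$) and hence nonnegative on complex vectors once it is nonnegative on real ones, which is precisely your $z=x+iy$ splitting made abstract, and it handles (a) $\Leftrightarrow$ (b) and (b) $\Leftrightarrow$ (c) the same way. Your explicit sorting/translation/merging reduction and the strictness argument merely spell out steps the paper dismisses as \lq\lq without loss of generality" and \lq\lq obvious."
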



Positive definiteness of $\wt G$ not only excludes the existence of price manipulation strategies. The following proposition states that it also guarantees the existence of strategies that minimize the expected  costs within a class $\mathscr{X}(\bT,X_0)$. Such strategies will be called \emph{optimal strategies} in the sequel. Once the existence of optimal strategies has been established, they can be computed by means of standard techniques from quadratic programming (see, e.g., \citet{Boot} or \citet{GillMurrayWright}).


\begin{proposition} \label{prop-opt-strat} Suppose that $\wt G$ is positive definite. Then  there exists an optimal strategy in $\mathscr{X}_{\text{\rm det}}(X_0,\bT)$ {\rm(}and hence in $\mathscr{X}(X_0,\bT)${\rm)} for all $X_0\in\bR^K$ and each time grid $\bT$.  Moreover, a strategy $\bm\xi\in \mathscr{X}_{\text{\rm det}}(X_0,\bT)$ is optimal if and only if there exists $\lambda\in\bR^K$ such that 
\begin{equation} \label{eq-lagrange-mult}
 \sum_{\ell=1}^N  \wt G(t_k-t_\ell)\xi_\ell=\lambda\qquad\text{for $k=1,\dots, |\bT|$}.
\end{equation}
If $\wt G$ is  strictly positive definite then  optimal strategies and the Lagrange multiplier $\lambda$ in \eqref{eq-lagrange-mult} are unique. 
\end{proposition}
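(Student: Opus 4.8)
The plan is to regard the cost function as a convex quadratic form on the finite-dimensional space $\bR^{|\bT|}\otimes\bR^K\cong\bR^{NK}$ and to treat the liquidation problem as an equality-constrained convex quadratic program, so that existence, the first-order characterization \eqref{eq-lagrange-mult}, and uniqueness all follow from standard convex-analytic arguments. First I would stack the orders of a deterministic strategy into a single vector $\bm\xi=(\xi_1,\dots,\xi_N)$ and encode $C_\bT$ through the block matrix $\mathbf M=(\wt G(t_k-t_\ell))_{k,\ell=1}^N$, so that $C_\bT(\bm\xi)=\frac12\bm\xi^\top\mathbf M\bm\xi$. Symmetry of $\mathbf M$ is immediate from \eqref{tilde G def eq}, since $\wt G(t_\ell-t_k)=\wt G(t_k-t_\ell)^\top$; evaluating the inequality \eqref{wt G positive def eq} on real vectors shows that $\mathbf M$ is positive semidefinite, equivalently that $C_\bT$ is convex by Proposition \ref{Cge0 pos def prop}. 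The liquidation constraint $\sum_{k=1}^N\xi_k=-X_0$ reads $A\bm\xi=-X_0$, where $A=(\IdM,\dots,\IdM)$ is the $K\times NK$ matrix whose blocks are all the identity, so that $\mathscr X_{\text{det}}(\bT,X_0)$ is a nonempty affine subspace (it contains, e.g., $\xi_1=-X_0$, $\xi_2=\cdots=\xi_N=0$).

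For the existence statement I would restrict $C_\bT$ to this affine subspace. By Proposition \ref{Cge0 pos def prop} the restriction is a convex quadratic bounded below by $0$, and a convex quadratic function that is bounded below always attains its infimum (this is the Frank--Wolfe theorem; alternatively one uses $\bR^{NK}=\ker\mathbf M\oplus\mathrm{range}\,\mathbf M$, notes that $C_\bT$ is invariant under shifts in $\ker\mathbf M$, and reduces to a coercive problem on the complementary directions). This yields a deterministic minimizer of $C_\bT$ in $\mathscr X_{\text{det}}(\bT,X_0)$, and by Lemma \ref{deterministic minimizers lemma} the corresponding constant strategy is then optimal within the full class $\mathscr X(\bT,X_0)$.

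For the characterization I would use that the feasible directions at any point of the affine subspace are exactly the vectors in $\ker A=\{\bm\eta:\sum_k\eta_k=0\}$. Since the constraints are linear, no constraint qualification is needed: a feasible $\bm\xi$ minimizes the differentiable convex function $C_\bT$ if and only if $\nabla C_\bT(\bm\xi)^\top\bm\eta=0$ for every $\bm\eta\in\ker A$, i.e.\ $\nabla C_\bT(\bm\xi)=\mathbf M\bm\xi\in(\ker A)^\perp=\mathrm{range}(A^\top)$. Writing $\mathbf M\bm\xi=A^\top\lambda$ block by block gives exactly \eqref{eq-lagrange-mult}, which is the forward (necessity) implication. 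Sufficiency follows from the exact identity $C_\bT(\bm\eta)-C_\bT(\bm\xi)=(\mathbf M\bm\xi)^\top(\bm\eta-\bm\xi)+\frac12(\bm\eta-\bm\xi)^\top\mathbf M(\bm\eta-\bm\xi)$, because for feasible $\bm\eta$ the linear term equals $\lambda^\top A(\bm\eta-\bm\xi)=0$ while the quadratic term is nonnegative by $\mathbf M\succeq0$.

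Finally, uniqueness under strict positive definiteness is nearly immediate: Proposition \ref{Cge0 pos def prop} gives that $C_\bT$ is then strictly convex, hence has at most one minimizer on the convex feasible set, and combined with existence there is exactly one; uniqueness of $\lambda$ follows since $A^\top$ is injective ($A^\top\lambda=0$ forces $\lambda=0$), so $\lambda$ is determined by $\mathbf M\bm\xi$. The only genuinely delicate point is the existence claim in the merely semidefinite case, where $C_\bT$ need not be coercive on the feasible subspace; the resolution is precisely the attainment-of-infimum property of bounded-below convex quadratics invoked above.
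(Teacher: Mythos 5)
Your proposal is correct and follows essentially the same route as the paper: both reduce the problem to an equality-constrained convex quadratic program in $\bR^{|\bT|}\otimes\bR^K$ with the symmetric positive semidefinite block matrix $(\wt G(t_k-t_\ell))_{k,\ell}$, parametrize the feasible set as a fixed feasible point plus the subspace $\mathscr X_{\text{det}}(\bT,0)$, and get uniqueness from strict convexity. The only difference is one of self-containedness: where the paper cites Boot (Section 2.4.2) for attainment and calls the Lagrange characterization ``standard,'' you supply these yourself via the Frank--Wolfe attainment property of bounded-below convex quadratics and an explicit first-order/sufficiency argument, which is a fair substitution.
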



Propositions~\ref{Cge0 pos def prop} and~\ref{prop-opt-strat} suggest that decay kernels $G$ for multivariate price impact should be constructed such that the corresponding function $\wt G$ from \eqref{tilde G def eq} is a positive definite matrix-valued function.  
Part (a) of the following elementary lemma implies that this can be achieved by defining $G(t):=H(t)$ for $t\ge0$ when $H:\bR\to\bR^{K\times K}$ is a given continuous positive definite matrix-valued function, because we will then automatically have $\wt G=H$. 


\begin{lemma}\label{Pos def funct Hermitian lemma}Let $H:\bR\to\bC^{K\times K}$ be a positive definite matrix-valued function. Then:
\begin{enumerate}
\item The matrix $H(0)$ is  nonnegative definite, and we have $H(-t)=H(t)^*$ for every $t\in\bR$.  In particular, $H(-t)=H(t)^\top$  if $H$ takes its values in $\bR^{K\times K}$.  
\item Also $t\mapsto H(t)^*$ is  a positive definite matrix-valued function; it is   strictly positive definite if and only $H$ is strictly positive definite.
\end{enumerate}
\end{lemma}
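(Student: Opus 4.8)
The plan is to derive both parts directly from the defining inequality \eqref{wt G positive def eq}, which is the only structure available and turns out to suffice.

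For part (a), the nonnegative definiteness of $H(0)$ is just the one-point case: taking $N=1$, $t_1=0$, and an arbitrary $z\in\bC^K$ in \eqref{wt G positive def eq} yields $z^*H(0)z\ge0$ for all $z$, which is exactly the assertion (and, over $\bC$, already entails that $H(0)$ is Hermitian). For the symmetry relation I would invoke the two-point case. Fixing $t\in\bR$, I apply \eqref{wt G positive def eq} with $N=2$, $t_1=t$, $t_2=0$, and arbitrary $z_1,z_2\in\bC^K$; stacking $w=(z_1^\top,z_2^\top)^\top\in\bC^{2K}$, the inequality reads $w^*Mw\ge0$ for the $2K\times 2K$ block matrix
\[
M=\begin{pmatrix}H(0)&H(t)\\ H(-t)&H(0)\end{pmatrix}.
\]
Since $w^*Mw\ge0$ for every $w$, in particular $w^*Mw$ is real for every $w$; comparing $w^*Mw$ with its conjugate $w^*M^*w$ then forces $M=M^*$ by a standard argument. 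Reading off the off-diagonal blocks of $M=M^*$ gives $H(-t)=H(t)^*$. When $H$ is real-valued the conjugate transpose reduces to the transpose, so $H(-t)=H(t)^\top$.

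For part (b), I would reduce positive definiteness of $t\mapsto H(t)^*$ to that of $H$ by combining part (a) with a reflection of the time points. Using $H(t_i-t_j)^*=H(t_j-t_i)$ from part (a), I rewrite, for arbitrary $t_1,\dots,t_N\in\bR$ and $z_1,\dots,z_N\in\bC^K$,
\[
\sum_{i,j=1}^N z_i^*H(t_i-t_j)^*z_j=\sum_{i,j=1}^N z_i^*H(t_j-t_i)z_j=\sum_{i,j=1}^N z_i^*H(s_i-s_j)z_j,
\]
where $s_k:=-t_k$; the right-hand side is nonnegative because $H$ is positive definite at the points $s_1,\dots,s_N$. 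This proves that $t\mapsto H(t)^*$ is positive definite. The same chain of equalities shows that equality on the left forces $\sum_{i,j}z_i^*H(s_i-s_j)z_j=0$, so if $H$ is strictly positive definite then $z_1=\cdots=z_N=0$, i.e.\ $t\mapsto H(t)^*$ is strictly positive definite. The converse follows by symmetry, since $(H^*)^*=H$: applying the same implication to $H^*$ shows that strict positive definiteness of $t\mapsto H(t)^*$ forces that of $H$.

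The lemma is elementary, and I do not anticipate a genuine obstacle. The only step requiring a little care is the claim in part (a) that nonnegativity of the block form forces $M=M^*$ rather than merely recovering the $N=1$ diagonal information, since it is precisely the Hermitian symmetry of the \emph{off-diagonal} blocks that encodes the relation $H(-t)=H(t)^*$; everything else is bookkeeping with the index reflection $s_k=-t_k$.
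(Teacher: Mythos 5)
Your proof is correct and follows essentially the same route as the paper's: part (a) via the $N=1$ and $N=2$ cases of the defining inequality (the paper carries out the polarization with $c_1=c_2=1$ and $c_1=1,c_2=i$ explicitly, which is exactly the ``standard argument'' you invoke for realness of $w^*Mw$ forcing $M=M^*$), and part (b) via a reflection of the time points combined with part (a) (the paper uses $\tilde t_i=t_N-t_{N+1-i}$ with reindexing, you use $s_k=-t_k$, which is the same idea). No gaps; the strictness transfer and the converse via $(H^*)^*=H$ are handled correctly.
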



 Due to the established one-to-one correspondence of decay kernels with nonnegative expected costs and continuous $\bR^{K\times K}$-valued positive definite functions, we will henceforth use the following terminology.


\begin{definition}\label{positive definite kernel def} A decay kernel $G:[0,\infty)\to\bR^{K\times K}$ is called \emph{(strictly) positive definite} if the corresponding function $\wt G$ from \eqref{tilde G def eq} is a (strictly) positive definite matrix-valued function.
\end{definition}


\subsection{Integral representation of positive definite decay kernels}

We turn now to characterizations of the positive definiteness of a matrix-valued function. 
In the one-dimensional situation,  $K=1$,  Bochner's theorem \citep{bochner32} characterizes all  continuous positive definite functions as the Fourier transforms of nonnegative finite Borel measures.  There are several extensions of Bochner's theorem  to the case of matrix-  or operator-valued functions. Some of these results will be combined in Theorem~\ref{bochner-thm-nonsymm} and Corollary~\ref{Symmetric Bochner Cor} below. For the corresponding statements, we first introduce some terminology. 

As usual, a complex matrix $N\in\mathbb C^{n\times n}$  is called \emph{nonnegative definite} if $z^* N z\ge 0$ for every $z\in\mathbb C^n$. When even $z^* N z>0$ for every nonzero $z$, $N$ is called \emph{strictly positive definite}.  A nonnegative definite complex matrix $N\in\mathbb C^{n\times n}$ is necessarily Hermitian, i.e. $N=N^*$. In particular, a real matrix $N\in \mathbb R^{n\times n}$ is nonnegative definite if and only if it belongs to the set $\bS_+(n)$ of nonnegative definite symmetric real $n\times n$-matrices.  By $\bS(n)$ we denote the set of all symmetric matrices in $\bR^{n\times n}$. 
An arbitrary real matrix $M\in\mathbb R^{n\times n}$ will be called \emph{nonnegative} if $x^\top M x \ge 0$ for every $x\in\mathbb R^n$ and \emph{strictly positive} if $x^\top Mx>0$ for all nonzero $x\in\bR^n$. Note that a real matrix $M\in\mathbb R^{n\times n}$ is nonnegative if and only if its symmetric part,  $\frac 12(M+M^\top)$, is nonnegative definite.

Let $\mathscr B(\mathbb R)$ be the Borel $\sigma$-algebra on $\mathbb R$.
 A mapping $M:\mathscr B(\mathbb R)\rightarrow \mathbb C^{K\times K}$ will be called a \emph{nonnegative definite matrix-valued measure} if  every component $M_{ij}$ is a complex measure with finite total variation and the matrix $M(A)\in \mathbb C^{K\times K}$ is nonnegative definite for every $A\in\mathscr B(\mathbb R)$.

The following theorem combines  results by  \citet{cramer40}, \citet{falb69}, and \citet{naimark43};  we refer to \citet{Glockner} for extensions of this result  and for a comprehensive historical account.


\begin{theorem} \label{bochner-thm-nonsymm} For a continuous function $H:\bR\to\bC^{K\times K}$  the following are equivalent.
\begin{enumerate}
\item $H$ is a positive definite matrix-valued function.
\item  For every $z\in\bC^K$, the complex function 
$  t\mapsto z^*H(t)z$
is positive definite.
\item $H$ is the Fourier transform of a  nonnegative definite matrix-valued measure $M$, i.e.,
\begin{equation} \label{eq-ComplexFourierTrafo}
H(t)=\int_{\mathbb R} e^{i\gamma t} \,M(d\gamma)\qquad\text{for $t\in\bR$. }
\end{equation}
\end{enumerate}
Moreover, any matrix-valued measure $M$ with \eqref{eq-ComplexFourierTrafo} is uniquely determined by $H$.
\end{theorem}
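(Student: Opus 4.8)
The plan is to establish the cycle of implications (a)$\Rightarrow$(b)$\Rightarrow$(c)$\Rightarrow$(a) and then deduce uniqueness, with the scalar Bochner theorem of \citet{bochner32} carrying the analytic weight. The two directions not involving the construction of $M$ are routine. For (a)$\Rightarrow$(b), fix $z\in\bC^K$ and scalars $c_1,\dots,c_N\in\bC$; inserting the vectors $z_i:=c_iz$ into \eqref{wt G positive def eq} gives $\sum_{i,j}\bar c_ic_j\,z^*H(t_i-t_j)z=\sum_{i,j}z_i^*H(t_i-t_j)z_j\ge0$, so $t\mapsto z^*H(t)z$ is positive definite, and it is continuous because $H$ is. For (c)$\Rightarrow$(a), I would substitute \eqref{eq-ComplexFourierTrafo} into the left-hand side of \eqref{wt G positive def eq} and use $e^{i\gamma(t_i-t_j)}=\overline{e^{-i\gamma t_i}}\,e^{-i\gamma t_j}$ to collapse the double sum: setting $v(\gamma):=\sum_j e^{-i\gamma t_j}z_j$ one obtains $\sum_{i,j}z_i^*H(t_i-t_j)z_j=\int_\bR v(\gamma)^*M(d\gamma)\,v(\gamma)$, which is nonnegative since $M$ is nonnegative definite. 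This last step requires the elementary but not entirely trivial fact that $\int v^*\,dM\,v\ge0$ for bounded Borel $v:\bR\to\bC^K$ whenever $M$ is a nonnegative definite matrix-valued measure; I would verify it first for simple $v$, where it reduces to a finite sum $\sum_\ell v_\ell^*M(A_\ell)v_\ell\ge0$, and then pass to the general case by approximation.

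The heart of the matter is (b)$\Rightarrow$(c). For each $z\in\bC^K$, condition (b) and the scalar Bochner theorem furnish a \emph{finite nonnegative} Borel measure $\mu_z$ on $\bR$ with $z^*H(t)z=\int_\bR e^{i\gamma t}\mu_z(d\gamma)$, of total mass $z^*H(0)z$. To recover the off-diagonal information I would polarize: for the sesquilinear form $(w,z)\mapsto w^*H(t)z$ the polarization identity expresses $w^*H(t)z$ as the fixed finite combination $\frac14\sum_{k=0}^3 i^k\,(z+i^kw)^*H(t)(z+i^kw)$ of diagonal forms. Applying this to the standard basis vectors $e_p,e_q$ and using the corresponding measures $\mu_{e_q+i^ke_p}$, I define complex measures $M_{pq}:=\frac14\sum_{k=0}^3 i^k\,\mu_{e_q+i^ke_p}$ and set $M:=(M_{pq})_{p,q=1}^K$. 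By construction each entry satisfies $H_{pq}(t)=\int_\bR e^{i\gamma t}M_{pq}(d\gamma)$, so \eqref{eq-ComplexFourierTrafo} holds with this $M$.

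It remains to show that $M$ is nonnegative definite and unique, and here the uniqueness clause of scalar Bochner --- equivalently, the injectivity of the Fourier transform on finite complex Borel measures --- is the key tool. For fixed $z$, the finite complex measure $z^*M(\cdot)z=\sum_{p,q}\bar z_pz_q M_{pq}$ has Fourier transform $\sum_{p,q}\bar z_pz_q H_{pq}(t)=z^*H(t)z$, which coincides with the Fourier transform of $\mu_z$; by injectivity $z^*M(\cdot)z=\mu_z\ge0$. Since $z$ is arbitrary, $M(A)$ is nonnegative definite for every $A\in\mathscr B(\bR)$, completing (c). The same injectivity yields uniqueness: if $M,M'$ both represent $H$ as in \eqref{eq-ComplexFourierTrafo}, then $z^*M(\cdot)z$ and $z^*M'(\cdot)z$ have equal Fourier transforms for every $z$, hence agree, and polarization upgrades this to $M(A)=M'(A)$ for all $A$.

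I expect the main obstacle to be conceptual rather than computational: ensuring that the polarized assembly $M$ is a genuine \emph{nonnegative definite} matrix-valued measure, and not merely a matrix of complex measures with nonnegative diagonal quadratic forms. This is exactly what the uniqueness part of the scalar theory delivers, by forcing $z^*M(\cdot)z=\mu_z$, and it is also the point at which the three sources \citet{cramer40,naimark43,falb69} are combined. A secondary technical point is the measure-theoretic justification of $\int v^*\,dM\,v\ge0$ invoked in (c)$\Rightarrow$(a).
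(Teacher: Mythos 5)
Your proposal is correct, but it takes a genuinely different route from the paper. The paper does not argue from first principles at all: it cites \citet{falb69} for the equivalence (b)$\Leftrightarrow$(c), two statements from the book of \citet{gihmanskorohod} for (a)$\Leftrightarrow$(c), and dismisses the uniqueness of $M$ as standard. You instead give a self-contained derivation resting only on the scalar theorem of \citet{bochner32}: (a)$\Rightarrow$(b) by specializing $z_i=c_iz$; (c)$\Rightarrow$(a) by collapsing the double sum into $\int v(\gamma)^*\,M(d\gamma)\,v(\gamma)$, where the only measure-theoretic point --- that this integral is nonnegative --- is correctly reduced to simple functions (and the passage to the limit is legitimate, since a bounded measurable $v$ is a \emph{uniform} limit of simple functions and the components of $M$ have finite total variation); and, for the core implication (b)$\Rightarrow$(c), assembling $M$ entrywise via the sesquilinear polarization identity $w^*H(t)z=\frac14\sum_{k=0}^3 i^k(z+i^kw)^*H(t)(z+i^kw)$ applied to the scalar Bochner measures, then invoking injectivity of the Fourier--Stieltjes transform on finite complex measures to identify $z^*M(\cdot)z$ with $\mu_z$. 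That last step is indeed the crux: it is what upgrades $M$ from a matrix of complex measures with the right Fourier transforms to a genuinely \emph{nonnegative definite} matrix-valued measure, and the same injectivity argument delivers the uniqueness clause. What your approach buys is transparency and self-containedness --- the matrix-valued theorem is exhibited as a corollary of the scalar one --- at the price of redoing work already available in \citet{cramer40}, \citet{naimark43}, and \citet{falb69}; what the paper's citation-based proof buys is brevity, consistent with its framing of the theorem as a combination of known results.
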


\begin{proof} The equivalence of (b) and (c) was proved in \citet{falb69}. The equivalence of (a) and (c) follows from two statements in the book by  \citet{gihmanskorohod}, namely the remark after Theorem 1 in \S1 of Chapter IV and  Theorem 5 in \S2 of Chapter IV. The uniqueness of $M$ is standard.
\end{proof}


The preceding theorem simplifies as follows when considering positive definite functions $H$ taking values in the space $\bS(K)$ of symmetric real $K\times K$-matrices. By Lemma~\ref{Pos def funct Hermitian lemma} (a), such functions $H$ correspond to positive definite decay kernels  $G$ that are \emph{symmetric} in the sense that $G(t)^\top=G(t)$ for all $t\ge0$.  In this case, we have $H(t)=\wt G(t)=G(|t|)$ for all $t\in\bR$.


\begin{corollary}\label{Symmetric Bochner Cor}For a continuous function $H:\bR\to \bC^{K\times K}$ the following statements are equivalent.
\begin{enumerate}
\item $H(t)\in\bS(K)$ for all $t$, and $ H$ is a positive definite matrix-valued function.
\item $H(t)\in\bS(K)$ for all $t$, and the real  function $  t\mapsto x^\top H(t)x$ is positive definite  for every $x\in\bR^K$.
\item $H$ admits a representation \eqref{eq-ComplexFourierTrafo} with a nonnegative definite measure    $M$ that takes values in $\bR^{K\times K}$ (and hence in $\bS_+(K)$) and is symmetric on $\bR$ in the sense that $M(A)= M(-A)$ for all  $A\in\mathscr{B}(\bR)$.
\end{enumerate}
\end{corollary}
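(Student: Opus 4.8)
The plan is to deduce everything from the matrix-valued Bochner theorem (Theorem~\ref{bochner-thm-nonsymm}) together with Lemma~\ref{Pos def funct Hermitian lemma}(a), since the corollary merely specializes the former to the symmetric real-valued case. The single structural fact I would isolate at the outset is that if $H(t)\in\bS(K)$ for all $t$, then $H$ is \emph{even}: indeed, Lemma~\ref{Pos def funct Hermitian lemma}(a) gives $H(-t)=H(t)^*$, and for a real symmetric matrix $H(t)^*=H(t)^\top=H(t)$, so $H(-t)=H(t)$. This evenness, together with the realness of the values of $H$, are the two properties I will play against the uniqueness of the representing measure.

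For the equivalence (a)$\Leftrightarrow$(b), the forward implication is immediate: restricting the complex vectors $z$ in Theorem~\ref{bochner-thm-nonsymm}(b) to real vectors $x$ and noting $x^\top H(t)x=x^*H(t)x$ shows that $t\mapsto x^\top H(t)x$ is positive definite. For the converse I would use the identity, valid because $H(t)$ is symmetric, that for $z=x+iy$ with $x,y\in\bR^K$ the cross terms cancel and one obtains $z^*H(t)z=x^\top H(t)x+y^\top H(t)y$. Since positive definite functions form a cone that is closed under addition, $t\mapsto z^*H(t)z$ is then positive definite for every complex $z$, which is condition (b) of Theorem~\ref{bochner-thm-nonsymm} and hence yields (a).

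For (a)$\Rightarrow$(c), Theorem~\ref{bochner-thm-nonsymm} already supplies a unique nonnegative definite matrix-valued measure $M$ with $H(t)=\int_\bR e^{i\gamma t}M(d\gamma)$; the work is to show that $M$ is real and reflection-symmetric. Writing $M^-(A):=M(-A)$, the substitution $\gamma\mapsto-\gamma$ gives $H(-t)=\int_\bR e^{i\gamma t}M^-(d\gamma)$, and since $H(-t)=H(t)$ the uniqueness clause of Theorem~\ref{bochner-thm-nonsymm} forces $M^-=M$, i.e.\ $M(A)=M(-A)$. Conjugating the representation and using that $H$ is real gives $\int_\bR e^{i\gamma t}\,\overline{M^-}(d\gamma)=H(t)$, whence $\overline{M^-}=M$ by uniqueness; combined with $M^-=M$ this yields $\overline M=M$, so $M$ is real-valued and, being nonnegative definite, takes values in $\bS_+(K)$. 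Conversely, for (c)$\Rightarrow$(a), nonnegative definiteness of $M$ gives positive definiteness of $H$ directly from Theorem~\ref{bochner-thm-nonsymm}; and the symmetry $M(A)=M(-A)$ makes $\int_\bR \sin(\gamma t)\,M(d\gamma)$ vanish as an odd integrand, so that $H(t)=\int_\bR\cos(\gamma t)\,M(d\gamma)$ is manifestly a real symmetric matrix.

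The only genuinely delicate step is the extraction of the symmetry and realness of $M$ in (a)$\Rightarrow$(c): it is entirely driven by the uniqueness assertion of Theorem~\ref{bochner-thm-nonsymm}, and I would take care that the pushforward $M^-$ is again a nonnegative definite matrix-valued measure, so that uniqueness applies, and that conjugation commutes with integration against the complex exponential. Everything else is routine.
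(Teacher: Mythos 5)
Your proof is correct and takes essentially the same route as the paper: both arguments rest on Theorem~\ref{bochner-thm-nonsymm}, use the decomposition $z^*H(t)z=x^\top H(t)x+y^\top H(t)y$ (valid by symmetry of $H(t)$) to pass between real and complex vectors, and exploit the uniqueness of the representing measure to force realness and reflection-symmetry of $M$. The differences are only cosmetic: the paper proves the cycle (c)$\Rightarrow$(b)$\Rightarrow$(a)$\Rightarrow$(c), obtaining realness of $M$ componentwise (from $M_{k\ell}=M_{\ell k}$ and the fact that a symmetric Hermitian matrix is real) and invoking scalar Bochner for (c)$\Rightarrow$(b), whereas you conjugate and reflect the whole representation and cite the matrix-valued theorem directly; note only that your opening claim that symmetry of the values alone implies $H$ is even tacitly uses positive definiteness (via Lemma~\ref{Pos def funct Hermitian lemma}(a)), which is harmless since you apply it only under hypothesis (a).
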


\medskip

\begin{remark}[\bfseries Discontinuous positive definite functions and temporary price impact]\label{GlocknerRemark}Let $H_0$ be a nonzero nonnegative definite matrix. Then $H(t):=H_0\indf{\{0\}}(t)$ is a positive definite matrix-valued function that is not continuous and therefore does not admit a representation \eqref{eq-ComplexFourierTrafo}. It is possible, however, to give a similar  integral representation also for  discontinuous  matrix-valued positive definite functions satisfying a certain boundedness condition. To this end, one needs to replace the measure $M$ by a nonnegative definite matrix-valued measure on the larger space of characters for the additive (semi-)groups $\bR$ or $\bR_+$; see \citet[Theorem 15.7]{Glockner}. In the context of price impact modeling, the costs \eqref{C wt G representation} associated with a discontinuous decay kernel of the form $G(t):=G_0\indf{\{0\}}(t)$ for some nonnegative matrix $G_0$ can be viewed as resulting from \emph{temporary price impact} that affects only the order that has triggered it and disappears immediately afterwards; see \citet{BertsimasLo} and \citet{almgrenchriss2001} for temporary price impact in   one-dimensional models. More generally, to take account the discontinuity $G(0)-G(0+) \in \bS_+(K)$, one will have to precise the definition \eqref{revenues def} of the revenues by assuming $\mathscr R(\bm\xi)=- \sum_{k=1}^N\xi_k^\top  (S^{\bm\xi}_{t_k}+\frac12 G(0)\xi_k)$ (note that this is $G(0)$ and not $G(0+)$). Last, let us mention that the discontinuity at $0$ is the only one relevant in practice: other discontinuities would generate a weird and predictable price impact. Thus, the temporary price impact can be handled separately and assuming $G$ continuous is not restrictive.\end{remark}

\subsection{Convex, nonincreasing, and nonnegative decay kernels}

As shown and discussed in \citet{alfonsischiedslynko}, not every decay kernel $G:[0,\infty)\to\bR$ with positive definite $\wt G$ is a reasonable model for the decay of price impact in a single-asset model. Specifically  it was shown  that for $K=1$ it makes sense to require that decay kernels are nonnegative, nonincreasing, and convex. Since similar effects as in \citet{alfonsischiedslynko} can also be observed  in our multivariate setting (see Figure~\ref{oscillations-fig}), we  need to introduce and analyze further conditions to be satisfied by $G$. 
 To motivate the following definition, consider two trades $\xi_1$ and $\xi_2$ placed at times $t_1 < t_2$. The quantity $\xi_2^\top G(t_2-t_1) \xi_1$ describes that part of the liquidation costs for the order $\xi_2$ that was caused by the order $\xi_1$.  When $\xi_1=\xi_2$, it is intuitively clear that these costs should be nonnegative and nonincreasing in $t_2-t_1$.


\begin{definition}\label{Property def}
A matrix-valued function $G:[0,\infty)\to\mathbb R^{K\times K}$ is called
\begin{enumerate}
\item \emph{nonincreasing}, if for every $ x \in \mathbb R^K$ the function $t\mapsto x^\top G(t) x$ is nonincreasing;
\item \emph{nonnegative}, if     $G(t)$ is a nonnegative matrix for every  $t \in [0,\infty)$;
\item \emph{(strictly) convex}, if for all $ x\in\mathbb R^K$ the function $ t\mapsto x^\top G(t) x$  is (strictly) convex.
\end{enumerate}
\end{definition}


Here and in Lemma~\ref{prop-nonneg} and Theorem~\ref{convex strict pd} below, we do not assume that $G$ is continuous. Note that the properties introduced in the preceding definition depend only on the symmetrization, $\frac12(G^\top+G)$, of $G$. We have the following simple result on two properties introduced in Definition~\ref{Property def}.


\begin{lemma}\label{prop-nonneg} Suppose that $G:[0,\infty)\to \bR^{K\times K}$ is a nonincreasing and positive definite decay kernel. Then $G$ is nonnegative.\end{lemma}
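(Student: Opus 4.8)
The plan is to reduce this matrix-valued statement to a one-dimensional fact about nonincreasing positive definite scalar functions, and then to exploit the tension between the \emph{linear} growth of the diagonal contribution and the \emph{quadratic} growth of the off-diagonal contribution in the positive-definiteness inequality.

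First I would fix an arbitrary $x\in\bR^K$ and set $g(t):=x^\top G(t)x$ for $t\ge0$; since nonnegativity of $G$ means precisely that $g(t)\ge0$ for every $x$ and every $t$, it suffices to prove $g\ge0$. Plugging the vectors $z_i:=c_i x$ with scalars $c_i\in\bR$ into the defining inequality \eqref{wt G positive def eq} for $\wt G$ collapses the matrix sum to $\sum_{i,j}c_ic_j\,h(t_i-t_j)\ge0$, where $h(t):=x^\top\wt G(t)x$. A short computation using \eqref{tilde G def eq} (and the fact that a scalar equals its transpose) shows $h(t)=g(|t|)$, so $h$ is the even extension of $g$ and is a positive definite scalar function. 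Two further facts come for free: $g$ is nonincreasing by Definition~\ref{Property def}(a), and $g(0)=x^\top\wt G(0)x\ge0$ because $\wt G(0)$ is nonnegative definite by Lemma~\ref{Pos def funct Hermitian lemma}(a).

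Next I would argue by contradiction. Suppose $g(t_0)<0$ for some $t_0>0$ and put $\delta:=-g(t_0)>0$; monotonicity then forces $g(t)\le-\delta$ for all $t\ge t_0$. I apply the scalar inequality to the $N$ equally spaced nodes $t_k:=k\,t_0$, $k=1,\dots,N$, with all coefficients equal to $1$. Grouping the $N$ diagonal terms (each equal to $g(0)$) and the $2(N-m)$ off-diagonal terms at distance $m t_0\ge t_0$ (each $\le-\delta$) gives
\begin{equation}
0\le N g(0)+2\sum_{m=1}^{N-1}(N-m)\,g(m t_0)\le N g(0)-\delta N(N-1).
\end{equation}
Dividing by $N$ yields $\delta(N-1)\le g(0)$ for every $N$, which is impossible since $\delta>0$. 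Hence $g(t)\ge0$ for all $t\ge0$ and all $x$, i.e.\ $G$ is nonnegative.

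The main obstacle here is conceptual rather than computational: one must recognize that the nonincreasing hypothesis turns a single sign violation $g(t_0)<0$ into a violation that persists with a fixed margin $\delta$ on the whole tail $[t_0,\infty)$, and that the number of such bad node pairs grows quadratically in $N$ while the only positive contribution (the diagonal) grows linearly. This quadratic-versus-linear imbalance is exactly what defeats positive definiteness. I would emphasize that the argument uses only the definition of positive definiteness and never continuity of $G$, which matters because Lemma~\ref{prop-nonneg} is stated without a continuity assumption; when $G$ is continuous one could instead invoke a Bochner-type representation and identify $\lim_{t\to\infty}g(t)$ with a nonnegative atom at frequency zero, but the elementary counting argument is both cleaner and more general.
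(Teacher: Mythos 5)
Your proof is correct and takes essentially the same route as the paper's: both reduce to the scalar function $g^x(t)=x^\top G(t)x$, place equally spaced nodes $t_k=k\,t_0$ with unit coefficients, and derive a contradiction from the fact that the negative off-diagonal contribution grows quadratically in the number of nodes while the positive diagonal contribution grows only linearly. The only difference is presentational: you spell out explicitly the reduction to scalar positive definiteness (via $z_i=c_i x$ and $x^\top\wt G(t)x=g(|t|)$), which the paper leaves implicit.
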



If $G$ is nonincreasing, nonnegative, and convex, then so is the function $g^ x(t):= x^\top G(t) x$ for each $ x\in\mathbb{R}^K$. Hence, $t\mapsto g^ x(|t|)$ is a positive definite function due to a criterion often attributed to \citet{Polya}, although this criterion is also an easy consequence of  \citet{Young}. It hence follows from Corollary~\ref{Symmetric Bochner Cor}  that also the matrix-valued function $\wt G$ is  positive definite as soon as $G$ is symmetric and continuous.  But  an even stronger result is possible: $G$ is even \emph{strictly} positive definite as soon as $g^ x$ is nonincreasing, nonnegative,  convex, and nonconstant for each nonzero $ x\in\mathbb R^K$. This is the content of our subsequent theorem, which extends the corresponding result for $K=1$  (see Theorems 3.9.11 and  3.1.6 in \citet{Sasvari} or Proposition 2 in  \citet{alfonsischiedslynko} for two different proofs) and is of independent interest.


\begin{theorem}\label{convex strict pd} If $G:[0,\infty)\to\bR^{K\times K}$ is symmetric, nonnegative, nonincreasing, and convex then $G$ is positive definite. Moreover, $G$ is even strictly positive definite if and only if $t\mapsto x^\top G(t) x$ is nonconstant  for each nonzero $ x\in\mathbb R^K$.\end{theorem}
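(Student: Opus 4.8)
The plan is to build an explicit integral representation of $\wt G$ out of ``tent'' functions and to read off both assertions from it.

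\emph{Setup and representation.} Since $G$ is symmetric, \eqref{tilde G def eq} gives $\wt G(t)=G(|t|)$, so $\wt G$ is real-valued, takes values in $\bS(K)$, and is even. For $x\in\bR^K$ write $g^x(t):=x^\top G(t)x$; by assumption each $g^x$ is nonnegative, nonincreasing, and convex on $[0,\infty)$. A convex function is continuous on $(0,\infty)$, so the only possible discontinuity of $G$ is a jump at the origin, and convexity forces $g^x(0)\ge g^x(0+)$; hence $G_0:=G(0)-G(0+)$ lies in $\bS_+(K)$. I would then introduce the matrix-valued measure $\mathcal M$ on $(0,\infty)$ obtained as the distributional second derivative of $G$ entrywise; since $x^\top\mathcal M(A)x=(g^x)''(A)\ge0$ for every Borel $A$ and every $x$, the measure $\mathcal M$ takes values in $\bS_+(K)$. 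With $G_\infty:=\lim_{t\to\infty}G(t)\in\bS_+(K)$ (the limit exists entrywise because each $g^x$ is nonincreasing and bounded below) and $\int_{(0,\infty)}s\,\mathcal M(ds)$ finite, integrating $\mathcal M$ twice yields $G(t)=G_\infty+\int_{(0,\infty)}(s-t)_+\,\mathcal M(ds)$ for $t>0$, and therefore
\begin{equation*}
\wt G(t)=G_\infty+\int_{(0,\infty)}(s-|t|)_+\,\mathcal M(ds)\;+\;G_0\,\indf{\{0\}}(t),
\end{equation*}
where the first two terms form a continuous function $\wt G_c$.

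\emph{Positive definiteness.} Each tent $(s-|t|)_+=s\,(1-|t|/s)_+$ is, up to the positive factor $s$, a triangle (Fej\'er) function and hence a scalar positive definite function, as are the constant $1$ and $\indf{\{0\}}$. Given pairwise distinct $t_1,\dots,t_N$ and $z_i\in\bC^K$, I would split $z_i=a_i+\mathrm i\,b_i$ with $a_i,b_i\in\bR^K$ and use that $\wt G$ is real, symmetric, and even to reduce $\sum_{i,j}z_i^*\wt G(t_i-t_j)z_j$ to $\sum_{i,j}a_i^\top\wt G(t_i-t_j)a_j+\sum_{i,j}b_i^\top\wt G(t_i-t_j)b_j$. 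By the representation above each real form splits into the three nonnegative contributions $\big(\sum_i a_i\big)^\top G_\infty\big(\sum_i a_i\big)$, $\int_{(0,\infty)}\big[\sum_{i,j}(s-|t_i-t_j|)_+\,a_i^\top\mathcal M(ds)\,a_j\big]$, and $\sum_i a_i^\top G_0 a_i$, the middle one being nonnegative by a Schur/Hadamard argument (diagonalise $\mathcal M(ds)$ and use that the tent Gram matrices are nonnegative definite). This shows $\wt G$ is positive definite, i.e. $G$ is positive definite in the sense of Definition~\ref{positive definite kernel def}. (For continuous symmetric $G$ this also follows from P\'olya's criterion together with Corollary~\ref{Symmetric Bochner Cor}, the jump being treated as in Remark~\ref{GlocknerRemark}.)

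\emph{Strict positive definiteness.} For the easy implication I would argue by contraposition: if $g^{x_0}$ is constant $\equiv c$ for some nonzero $x_0$, then taking $z_i=c_i x_0$ with real $c_i$ satisfying $\sum_i c_i=0$ but not all zero gives $\sum_{i,j}z_i^*\wt G(t_i-t_j)z_j=c\big(\sum_i c_i\big)^2=0$, so $\wt G$ is not strictly positive definite. For the converse, assume $g^x$ is nonconstant for every nonzero $x$, take pairwise distinct $t_i$, reduce to the real case, and suppose $\sum_{i,j}a_i^\top\wt G(t_i-t_j)a_j=0$; then all three nonnegative contributions vanish. From the vanishing of the middle term I would pass to the trace measure $\tau=\operatorname{tr}\mathcal M$, factorise $\mathcal M(ds)=\Psi(s)\,\tau(ds)$ with $\Psi(s)\in\bS_+(K)$, diagonalise $\Psi(s)$, and invoke the \emph{strict} positive definiteness of each triangle function (its spectral density is positive almost everywhere) to conclude $\Psi(s)a_i=0$ for $\tau$-a.e.\ $s$ and every $i$, hence $\mathcal M(\cdot)a_i\equiv0$ and $(g^{a_i})''=0$; from the vanishing of the jump term, $a_i^\top G_0 a_i=0$. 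A nonnegative, nonincreasing, convex function with vanishing second derivative on $(0,\infty)$ is constant there, and the absence of a jump makes it constant on all of $[0,\infty)$; thus each $g^{a_i}$ is constant, forcing $a_i=0$ by hypothesis. This yields strict positive definiteness. The hard part, I expect, is this last paragraph: making the matrix integral representation rigorous (distributional derivatives of matrix convex functions, finiteness of $\int s\,\mathcal M(ds)$, the a.e.\ Radon--Nikodym factorisation) and converting the single scalar identity ``middle term $=0$'' into the genuinely matrix-valued conclusion $\mathcal M(\cdot)a_i=0$ through the spectral decomposition of $\mathcal M(ds)$ and the strictness of the individual triangle functions.
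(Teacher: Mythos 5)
Your proposal is correct, and its key step runs along a genuinely different track than the paper's. The integral representation you build, $\wt G(t)=G_\infty+\int_{(0,\infty)}(s-|t|)_+\,\mathcal{M}(ds)+G_0\,\indf{\{0\}}(t)$ with an $\bS_+(K)$-valued measure $\mathcal{M}$, is exactly the first half of the paper's Proposition~\ref{convex Fourier Prop} (there written $\Lambda(r)\,\mu(dr)$, constructed by applying the scalar one-dimensional lemmas to each $g^x$, polarizing, and taking Radon--Nikodym derivatives with respect to $\mu=\sum_{z}\mu_z$ --- which is precisely how one makes your ``distributional second derivative'' rigorous); your inclusion of the jump term $G_0\,\indf{\{0\}}$ in the representation plays the role of the paper's preliminary reduction $G=G^{cont}+\indf{\{0\}}\Delta G(0)$. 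Where you diverge is the strictness argument. The paper works on the Fourier side: it computes the spectral density $\Phi(\gamma)=\frac1\pi\int_{(0,\infty)}\frac{1-\cos\gamma y}{\gamma^2}\,\Lambda(y)\,\mu(dy)$, proves that $\Phi(\gamma)$ is a strictly positive definite matrix for all but countably many $\gamma$ (by splitting $\mu$ into discrete and continuous parts and showing $\frac{1-\cos\gamma y}{\gamma^2}\,\mu(dy)$ is equivalent to $\mu$ off a countable set of $\gamma$'s), and combines this with the observation that $v(\gamma)=\sum_k e^{-it_k\gamma}\zeta_k$ is entire and hence vanishes at most countably often. You instead stay on the time side: diagonalize $\mathcal{M}(ds)$ fiberwise and invoke the \emph{strict} positive definiteness of each scalar triangle function $(s-|t|)_+$ --- which is legitimately citable as the known $K=1$ case of this very theorem --- to force $\mathcal{M}(\cdot)a_i\equiv 0$, hence $g^{a_i}$ constant, hence $a_i=0$. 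Your route is more modular and elementary at the matrix level (all the Fourier analysis is encapsulated in the scalar black box), whereas the paper's route is self-contained and produces the spectral representation $M(d\gamma)=G(\infty)\,\delta_0(d\gamma)+\Phi(\gamma)\,d\gamma$ as a by-product that is of independent interest. Your assessment of where the technical burden lies --- rigorous construction of $\mathcal{M}$, tail integrability, the a.e.\ factorization $\mathcal{M}(ds)=\Psi(s)\,\tau(ds)$ --- is accurate, and those points are exactly what Proposition~\ref{convex Fourier Prop} supplies; none of them is an obstruction to your argument.
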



We will see in Proposition~\ref{prop-convex-nonposdef} that in Theorem~\ref{convex strict pd} we can typically not dispense of the requirement that $G$ is symmetric to conclude   positive definiteness.

\subsection{Commuting decay kernels}

We will now introduce another property that one can require from a decay kernel.


\begin{definition}A decay kernel $G:[0,\infty)\to\mathbb{R}^{K\times K}$ is called \emph{commuting} if $G(t)G(s)=G(s)G(t)$ holds for all $s,t\ge0$.  
\end{definition}


If a symmetric decay kernel is commuting, it may be simultaneously diagonalized,  and its properties can be characterized via the resulting collection of one-dimensional decay kernels, as  explained in the following proposition.

\begin{proposition}\label{commuting properties Prop}A symmetric decay kernel $G$ is commuting if and only if there exists an orthogonal matrix $O$ and  functions $g_1,\dots,g_K:[0,\infty)\to\mathbb R$ such that 
\begin{equation}\label{simultaneous diagonalization eq}
G(t)=O^\top\text{\rm diag}(g_1(t),\dots,g_K(t))\,O.
\end{equation}
Moreover, the following assertions hold.
\begin{enumerate}
\item $G$ is (strictly) positive definite if and only if the $\bR$-valued  functions $t\mapsto g_i(t)$ are (strictly) positive definite for all $i$.
\item  $G$ is nonnegative if and only if $g_i(t)\ge 0$ for all $i$ and $t$.
\item $G$ is nonincreasing if and only if $g_i$ is nonincreasing for all $i$.
\item $G$ is convex if and only if $g_i$ is convex  for all $i$.
\item If $G$ is positive definite, then a strategy $\bm\xi=(\xi_1,\dots,\xi_{|\bT|})\in\mathscr{X}_{\text{\rm det}}(\bT,X_0)$ is optimal if and only if it is of the form 
$$\xi_j=O^\top(\eta_j^1,\dots,\eta_j^K)^\top,
$$
where $\bm\eta^i=(\eta_1^i,\dots,\eta^i_{|\bT|})\in\bR^{|\bT|}\otimes\bR$ is an optimal strategy in $\mathscr{X}_{\text{\rm det}}(\bT,(OX_0)^i)$ for the one-dimensional decay kernel $g_i$ (here $(OX_0)^i$ denotes the $i^{\text{th}}$ component of the vector $OX_0$). 
\end{enumerate} 
\end{proposition}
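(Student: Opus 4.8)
The plan is to prove the diagonalization equivalence first and then derive assertions (a)--(e) as consequences of the simultaneous diagonalization together with the one-dimensional theory already established in Propositions~\ref{Cge0 pos def prop} and~\ref{prop-opt-strat}.

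For the main equivalence, the ``if'' direction is immediate: if $G(t)=O^\top\text{\rm diag}(g_1(t),\dots,g_K(t))\,O$, then any two values $G(s)$ and $G(t)$ are simultaneously diagonalized by the \emph{same} orthogonal $O$, and diagonal matrices commute, so $G(s)G(t)=G(t)G(s)$. For the ``only if'' direction I would use the classical fact that a commuting family of real symmetric matrices can be simultaneously orthogonally diagonalized. The subtle point is that this must be done by a \emph{single} orthogonal matrix $O$ that works for all $t$ at once, whereas the standard theorem is usually stated for a finite family. First I would fix $t_0\ge0$ such that $G(t_0)$ has the maximal number of distinct eigenvalues among all $G(t)$; diagonalizing $G(t_0)$ by some $O$ splits $\bR^K$ into eigenspaces of $G(t_0)$, and since every $G(t)$ commutes with $G(t_0)$, each $G(t)$ preserves these eigenspaces and thus is block-diagonal in the $O$-basis. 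By maximality of the eigenvalue count of $G(t_0)$, these blocks cannot be further refined, which forces each $G(t)$ to already be diagonal in the $O$-basis (otherwise some off-diagonal entry in a block would let us perturb to strictly more distinct eigenvalues). Defining $g_i(t)$ as the $i^{\text{th}}$ diagonal entry of $O G(t) O^\top$ then yields \eqref{simultaneous diagonalization eq}. The main obstacle is making this ``single $O$ for all $t$'' argument fully rigorous, i.e. justifying the maximal-eigenvalue-count selection and ruling out that distinct $G(t)$ require incompatible refinements of the eigenspace decomposition.

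For assertions (a)--(d), I would exploit that for the orthogonal $O$ and any $x\in\bR^K$, writing $y:=Ox$ gives $x^\top G(t)x=\sum_{i=1}^K y_i^2\, g_i(t)$, and conversely testing with $x=O^\top e_i$ isolates $g_i(t)=e_i^\top O G(t) O^\top e_i$. This quadratic-form identity immediately transfers each scalar property in Definition~\ref{Property def} between $G$ and the family $\{g_i\}$: nonnegativity, monotonicity, and convexity each hold for $t\mapsto x^\top G(t)x$ for all $x$ precisely when they hold for every $g_i$, since a nonnegative combination of the $g_i$ inherits the property and each individual $g_i$ is recovered as such a form. For (a), the positive-definiteness equivalence, I would invoke Corollary~\ref{Symmetric Bochner Cor}(b): $\wt G$ is positive definite iff $t\mapsto x^\top H(t)x$ is positive definite for every $x$, where $H(t)=\wt G(t)=G(|t|)$; since this scalar function equals $\sum_i y_i^2\,g_i(|t|)$, and a nonnegative linear combination of positive definite functions is positive definite, while each $g_i(|\cdot|)$ is recovered individually, the matrix kernel is (strictly) positive definite iff every $g_i$ is. For strictness one checks that equality in \eqref{wt G positive def eq} can be localized to a single coordinate by choosing $z$ supported in one eigendirection.

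Finally, for (e) I would diagonalize the optimality system \eqref{eq-lagrange-mult}. Applying $O$ to that equation and substituting $\xi_\ell=O^\top\eta_\ell$ with $\eta_\ell=(\eta_\ell^1,\dots,\eta_\ell^K)^\top$ turns $\sum_\ell \wt G(t_k-t_\ell)\xi_\ell=\lambda$ into $\sum_\ell \text{\rm diag}(g_1,\dots,g_K)(t_k-t_\ell)\,\eta_\ell=O\lambda$, which decouples across the $K$ coordinates into $K$ independent one-dimensional systems $\sum_\ell g_i(t_k-t_\ell)\,\eta_\ell^i=(O\lambda)^i$. By Proposition~\ref{prop-opt-strat} applied in dimension one, each such system characterizes optimality of $\bm\eta^i$ in $\mathscr{X}_{\text{\rm det}}(\bT,(OX_0)^i)$; the liquidation constraint $\sum_\ell \eta_\ell^i=(OX_0)^i$ follows by applying $O$ to $\sum_\ell\xi_\ell=-X_0$ and reading off coordinates. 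The Lagrange multipliers $(O\lambda)^i$ match up automatically, so the change of variables $\xi_j=O^\top\eta_j$ is an affine bijection between the two optimality problems, giving the claimed form. I expect this last step to be essentially bookkeeping once the diagonalization is in place; the only care needed is consistency of the constraint sets under the orthogonal change of coordinates.
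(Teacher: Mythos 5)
Your proof of parts (a)--(e) is essentially the paper's own route: the quadratic-form identity $x^\top G(t)x=\sum_i (Ox)_i^2\,g_i(t)$ combined with Corollary~\ref{Symmetric Bochner Cor} handles (a)--(d), and your diagonalization of the Lagrange system \eqref{eq-lagrange-mult} is a legitimate (slightly more explicit) alternative to the paper's appeal to Proposition~\ref{prop-L-transform} with $L=O$. The problem is the main equivalence, which is exactly where you flagged the difficulty, and your proposed fix does not work. Choosing $t_0$ to maximize the number of distinct eigenvalues of $G(t_0)$ \emph{over the family} $\{G(t):t\ge0\}$ does not imply that the eigenspace decomposition of $G(t_0)$ is unrefinable. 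Concretely, take $K=3$ and
\begin{equation*}
G(t)=\mathrm{diag}\bigl((1-t)^+,\;0,\;-(t-1)^+\bigr),
\end{equation*}
which is a continuous, symmetric, commuting decay kernel. Every $G(t)$ has at most two distinct eigenvalues, so the maximum (two) is attained, e.g., at $t_0=0$, whose eigenspaces are $\mathrm{span}(e_1)$ and $\mathrm{span}(e_2,e_3)$. But $G(2)$ acts on $\mathrm{span}(e_2,e_3)$ as $\mathrm{diag}(0,-1)$, which is not scalar; if the orthogonal matrix $O$ diagonalizing $G(0)$ rotates inside $\mathrm{span}(e_2,e_3)$, then $G(2)$ is not diagonal in the $O$-basis. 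No choice of $t_0$ helps here: the common eigenbasis requires information from at least two different times. Your parenthetical perturbation argument cannot rescue this, because the perturbed matrix $G(t_0)+\epsilon G(t)$ is not a member of the family, so maximality over the family yields no contradiction.

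Two repairs are available. One is to maximize the number of distinct eigenvalues over the \emph{linear span} of $\{G(t):t\ge0\}$ (a finite-dimensional space, so the maximum exists): if some $G(t)$ were non-scalar on an eigenspace of a maximizer $A$, then $A+\epsilon G(t)$, which still lies in the span, would have strictly more distinct eigenvalues for small $\epsilon>0$, a contradiction; then every $G(t)$, commuting with $A$ and scalar on each of its eigenspaces, is diagonalized by any orthonormal eigenbasis of $A$. The other is the paper's iterative refinement: start from the eigenspace decomposition of some $G(t_1)$; as long as some $G(s)$ is non-scalar on one of the current intersection eigenspaces, pick a further time $t_2,t_3,\dots$ to split it, and observe that the defect counter $D_j=\sum(\dim(\text{block})-1)^+$ strictly decreases, so the procedure terminates after at most $K$ steps with a single orthogonal matrix $O$ diagonalizing all $G(t)$ simultaneously. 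As written, your single-$t_0$ selection is a genuine gap in the central claim \eqref{simultaneous diagonalization eq}.
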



For $K=1$, we know that a nonnegative nonincreasing convex function is positive definite, and even strictly positive definite when it is nonconstant. Thus,   Proposition~\ref{commuting properties Prop} implies Theorem~\ref{convex strict pd} in the special situation of commuting decay kernels.

In the case $K=1$, \citet{alfonsischiedslynko} observed that there exist nonincreasing, nonnegative, and strictly positive definite decay kernels $G$ for which the optimal strategies exhibit strong oscillations between buy and sell orders (\lq\lq transaction-triggered price manipulation"); see  Figure~\ref{oscillations-fig} for an example in our multivariate setting.
 Theorem 1 in \citet{alfonsischiedslynko} gives conditions that exclude such oscillatory strategies for $K=1$ and guarantee that optimal strategies are buy-only or sell-only: $G$ should be nonnegative, nonincreasing, and convex.  
For $K>1$, however, the situation changes and one cannot expect to exclude the coexistence of buy and sell orders in the same asset. The reason is that liquidating a position in a first  asset may create a drift in the price of a second asset through cross-asset price impact. Exploiting this drift in the second asset via a  round trip may help to mitigate the costs resulting from liquidating the position in the first asset; see Figure~\ref{fig-1}. Therefore one cannot hope to completely rule out all round trips for decay kernels that are not diagonal. Nevertheless, our next result gives conditions on $G$ under which optimal strategies can be expressed as linear combinations of $K$ strategies with buy-only/sell-only components which leads to a uniform bound of the total number of shares traded by the optimal strategy, preventing large oscillations as in Figure~\ref{oscillations-fig}. This result will also allow us to construct minimizers on non-discrete time grids in Section \ref{continuous-time section} below. 
\newsavebox{\smlmat}
\savebox{\smlmat}{$\left(\begin{smallmatrix}1&\rho\\\rho&1\end{smallmatrix}\right)$}
\begin{figure}
\begin{center}
\includegraphics[width=7cm]{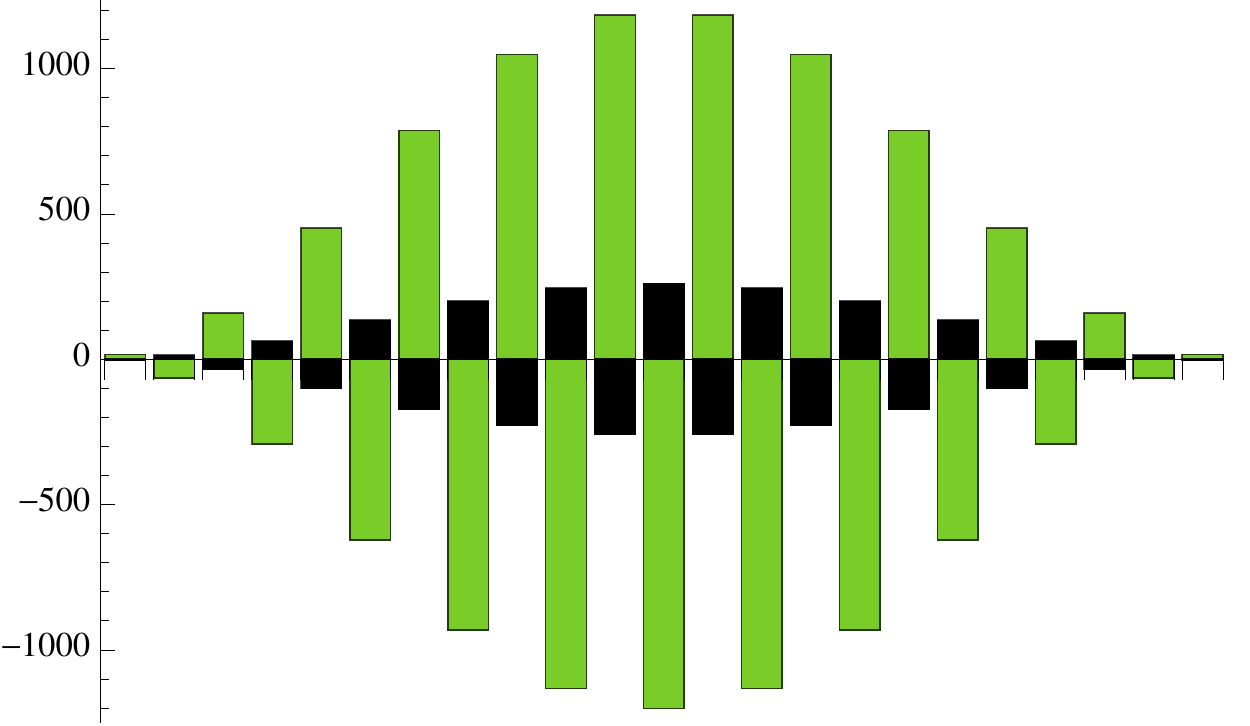}
\caption{Optimal strategy $\bm\xi$ for $X_0=(10,0)$, $N=23$, and the  strictly positive definite decay kernel   $G(t)=\exp(-(tB)^2)$ for~$B=\usebox{\smlmat}$. The first component of $\bm\xi$ is plotted in green, the second component in black. Note that the amplitude of the oscillations exceeds the initial asset position by a factor of more than 110. That $G$ is strictly positive definite  follows from Remark \ref{matrix functions remark}.}\label{oscillations-fig}
\end{center}
\end{figure}


\begin{proposition}\label{prop-orthonormalbasis} Let $G$ be a symmetric, nonnegative, nonincreasing, convex and commuting decay kernel. 
Then  there exist an orthonormal basis $v_1,\dots, v_K$ of $\mathbb R^K$ and, for each time grid $\bT$, optimal strategies $\bm\xi^{(i)}\in\mathscr{X}_{\text{\rm det}}(\bT,v_i)$, $i=1,\dots,K$, such that the following conditions hold.
\begin{enumerate}
\item The components of each $\bm\xi^{(i)}$ consist of buy-only or sell-only strategies. More precisely, for $i,j\in\{1,\dots,K\}$ and $n,m\in\{1,\dots,|\bT|\}$ we have $\xi^{(i),j}_m\xi^{(i),j}_n\ge0$. 
\item For $X_0=\sum_{i=1}^K\alpha_iv_i\in\bR^K$ given, $\bm\xi:=\sum_{i=1}^K\alpha_i\bm\xi^{(i)}$ is an optimal strategy in $\mathscr{X}_{\text{\rm det}}(\bT,X_0)$. 
\end{enumerate}
\end{proposition}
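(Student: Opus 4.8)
The plan is to reduce the multivariate problem to $K$ decoupled one-dimensional liquidation problems via simultaneous diagonalization, and then to propagate the one-dimensional structure back through the orthogonal change of coordinates by exploiting the linearity of the optimality condition \eqref{eq-lagrange-mult}. First I would observe that, since $G$ is symmetric, nonnegative, nonincreasing, and convex, Theorem~\ref{convex strict pd} shows that $G$ is positive definite; hence optimal strategies exist by Proposition~\ref{prop-opt-strat} and the simultaneous diagonalization of Proposition~\ref{commuting properties Prop} is available. Writing $G(t)=O^\top\,\text{\rm diag}(g_1(t),\dots,g_K(t))\,O$ as in \eqref{simultaneous diagonalization eq}, parts (a)--(d) of that proposition tell me that each scalar kernel $g_i$ is positive definite, nonnegative, nonincreasing, and convex. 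I would take the orthonormal basis to be $v_i:=O^\top e_i$ for the standard basis $e_1,\dots,e_K$ of $\bR^K$, so that $Ov_i=e_i$ and therefore $(Ov_i)^j=\delta_{ij}$.

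Next I would build the strategies $\bm\xi^{(i)}$ by applying the correspondence in Proposition~\ref{commuting properties Prop}(e) in the reverse direction. Liquidating the target $v_i$ decouples into $K$ scalar liquidation problems with targets $(Ov_i)^j=\delta_{ij}$. For the coordinates $j\neq i$ the scalar target is $0$, so I would take the null strategy, which is optimal for $g_j$ since the associated scalar cost function is nonnegative (as $g_j$ is positive definite). For $j=i$ the scalar target is $1$, and here I would invoke the essential one-dimensional input, namely Theorem~1 of \citet{alfonsischiedslynko}: because $g_i$ is nonnegative, nonincreasing, and convex, there is an optimal scalar strategy $\bm\eta^{(i)}=(\eta^{(i)}_1,\dots,\eta^{(i)}_{|\bT|})$ for $g_i$ all of whose entries have the same sign, i.e.\ $\eta^{(i)}_m\eta^{(i)}_n\ge0$. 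Assembling the scalar strategies through \eqref{simultaneous diagonalization eq} gives $\xi^{(i)}_m=\eta^{(i)}_m\,v_i$, so that $\xi^{(i),j}_m=(v_i)_j\,\eta^{(i)}_m$ and hence $\xi^{(i),j}_m\xi^{(i),j}_n=(v_i)_j^2\,\eta^{(i)}_m\eta^{(i)}_n\ge0$, which is precisely assertion~(a); by Proposition~\ref{commuting properties Prop}(e) each $\bm\xi^{(i)}$ is optimal in $\mathscr X_{\text{\rm det}}(\bT,v_i)$.

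For assertion~(b) I would use that the optimality condition \eqref{eq-lagrange-mult} is linear in the strategy. Optimality of $\bm\xi^{(i)}$ provides a multiplier $\lambda_i\in\bR^K$ with $\sum_\ell\wt G(t_k-t_\ell)\xi^{(i)}_\ell=\lambda_i$ for all $k$. For $X_0=\sum_i\alpha_iv_i$ the candidate $\bm\xi:=\sum_i\alpha_i\bm\xi^{(i)}$ is deterministic and liquidates $X_0$, since $X_0+\sum_m\xi_m=\sum_i\alpha_i\big(v_i+\sum_m\xi^{(i)}_m\big)=0$; moreover $\sum_\ell\wt G(t_k-t_\ell)\xi_\ell=\sum_i\alpha_i\lambda_i$ is independent of $k$. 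Thus $\bm\xi$ satisfies \eqref{eq-lagrange-mult} with $\lambda=\sum_i\alpha_i\lambda_i$ and is optimal by Proposition~\ref{prop-opt-strat}, which is assertion~(b).

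The genuinely nontrivial ingredient is the cited one-dimensional buy-only/sell-only theorem; the remainder is linear-algebraic bookkeeping of the orthogonal transformation. The one point I expect to require care is the possible non-uniqueness of scalar optimizers when some $g_i$ is constant: in that degenerate case every scalar liquidation strategy has zero cost, so one can still select a same-sign (even single-trade) optimizer for $g_i$ and retain the null strategy in the coordinates $j\neq i$, and the construction then proceeds unchanged.
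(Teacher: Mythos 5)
Your proof is correct and follows essentially the same route as the paper's: simultaneous diagonalization via Proposition~\ref{commuting properties Prop}, the buy-only/sell-only result of Theorem~1 in \citet{alfonsischiedslynko} applied to each scalar kernel $g_i$, and reassembly of $\bm\xi^{(i)}=\bm\eta^{(i)}v_i$ through the orthogonal change of coordinates (your choice $v_i=O^\top e_i$ is the precise reading of what the paper loosely calls the \lq\lq columns of $O$"). The only differences are cosmetic: for part (b) the paper rescales the scalar optimizers and cites Proposition~\ref{commuting properties Prop}(e), whereas you verify the Lagrange-multiplier condition \eqref{eq-lagrange-mult} directly, which is equally valid given positive definiteness; and in your final remark a constant $g_i\equiv c>0$ gives every scalar liquidation strategy cost $c/2$ rather than zero, though your conclusion that any same-sign strategy is then optimal still stands.
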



Note that for $K=1$ every decay kernel is symmetric and commuting.  Hence, for $K=1$ the preceding proposition reduces to Theorem~1 in \citet{alfonsischiedslynko}: The optimal strategy for a one-dimensional, nonconstant, nonnegative, nonincreasing, and convex decay kernel is buy-only or sell-only.


\begin{remark}Oscillations of trading strategies as those observed in Figure \ref{oscillations-fig} can be prevented by adding sufficiently high transaction costs to each trade. Such transaction costs arise naturally if only market orders are permitted; see, e.g.,  Sections 7.1 and 7.2 in \citet{BussetiLillo}. As discussed in Remark \ref{order types remark}, however, actual trading strategies will often incur much lower transaction costs than strategies that only use market orders and, if  transaction costs are sufficiently small, oscillations  may only be dampened but not be completely eliminated. As a matter of fact, oscillatory trading strategies of high-frequency traders played a major role in the \lq\lq Flash Crash" of May 6, 2010; see \citet[p. 3]{SEC}. \end{remark}


 Propositions~\ref{commuting properties Prop}  and~\ref{prop-orthonormalbasis} give not only a characterization of nice properties of certain  decay kernels. They also provide a way of constructing decay kernels $G$ that have all desirable properties. One simply needs to start with an orthogonal matrix $O$ and nonincreasing, convex, and nonconstant functions $g_1,\dots,g_K:[0,\infty)\to[0,\infty)$ and then define a decay kernel as $G(t)=O^\top\text{\rm diag}(g_1(t),\dots,g_K(t))\,O$. A special case of this construction is provided by the so-called \emph{matrix functions}, which we will explain in the sequel; see also Section~\ref{matrix fcts section} for several examples in this context.

Let $g:[0,\infty)\to\mathbb R$ be a function and $B\in \bS_+(K)$. Then there exists an orthogonal matrix $O$ such that
$B=O^\top\text{diag}(\rho_1,\dots,\rho_K)O$, where $\rho_1,\dots,\rho_K\ge0$ are the eigenvalues of $B$. The matrix $g(B)\in \bS(K)$ is then defined as
\begin{equation}\label{matrix function def eq}
g(B):=O^\top \text{diag}(g(\rho_1),\dots,g(\rho_K))O;
\end{equation}
see, e.g., \citet{Donoghue}.
We can thus define a decay kernel $G:[0,\infty)\to \bS(K)$ by
\begin{equation}\label{matrix function eq}
G(t)=g(tB)=O^\top \text{diag}(g(t\rho_1),\dots,g(t\rho_K))O,\qquad t\ge0.
\end{equation}
We summarize the properties of $G$ in the following remark. In Section~\ref{matrix fcts section} we will analyze decay kernels that arise as matrix exponentials and explicitly compute the corresponding optimal strategies.

\begin{remark}\label{matrix functions remark}The decay kernel $G$ defined in \eqref{matrix function eq}
 is commuting.  Moreover, it is of the form \eqref{simultaneous diagonalization eq} with $g_i(t)=g(t\rho_i)$, and so Proposition~\ref{commuting properties Prop} characterizes the properties of $G$. In particular, it is positive definite if and only if $t\mapsto g(|t|)$ is a positive definite function. Moreover, it will be nonnegative, nonincreasing, or convex if and only if $g$ has the corresponding properties. In addition, optimal strategies can be computed via Proposition~\ref{commuting properties Prop} (e). 
\end{remark}


\begin{remark}
Let $G$ be a  nonnegative, symmetric and commuting decay kernel, and $f:[0,\infty)\to[0,\infty) $ be a convex nondecreasing function. We define the kernel~$F$ by
$F(t):=f(G(t))$ as in \eqref{matrix function def eq}.  We get easily from Proposition~\ref{commuting properties Prop} that $F$ is  nonincreasing and convex if $G$ is also  nonincreasing and convex. It is therefore positive definite in this case. 
\end{remark}

   \subsection{Strategies on non-discrete time grids}\label{continuous-time section}

If $\bT$ and $\bT'$ are time grids such that $\bT\subset\bT'$, then $\mathscr{X}(\mathbb T,X_0)\subset\mathscr{X}(\mathbb T',X_0)$ and hence 
$$\min_{\bm\xi\in\mathscr{X}(\mathbb T,X_0)}\mathbb{E}[\,C_{\mathbb T}(\bm\xi)\,]\ge \min_{\bm\xi'\in\mathscr{X}(\mathbb T',X_0)}\mathbb{E}[\,C_{\mathbb T'}(\bm\xi')\,].
$$
 It is therefore clear that problem of minimizing $\mathbb{E}[\,C_{\mathbb T}(\bm\xi)\,]$ jointly over $\bm\xi\in\mathscr{X}(\mathbb T,X_0)$ and time grids $\bT$ has in general no solution within the class of finite time grids. For this reason it is natural to consider an extension of our framework to non-discrete time grids. For the one-dimensional case $K=1$ a corresponding framework was developed in \citet{gatheralschiedslynko}. Proposition \ref{prop-orthonormalbasis} (b) will enable us to obtain a similar extension in our present framework.

\medskip

\begin{definition}\label{cont time def}Let $\bT$ be an arbitrary compact subset of $[0,T]$. An \emph{admissible strategy for $\bT$} is a left-continuous, adapted, and bounded $K$-dimensional stochastic process $( X_t)$ such that $t\mapsto X^i_t$ is of finite variation for $i=1,\dots, K$ and satisfies $  X_t=0$ for all $t>T$. We assume furthermore that the vector-valued random measure $d X_t$ is supported on $\bT$ and that its components have $\mathbb{P}$-a.s.~bounded total variation. The class of strategies with given initial condition $X_0$ will be denoted by $\mathscr{X}(\bT,X_0)$, the subset of deterministic strategies in $\mathscr{X}(\bT,X_0)$ will be denoted by $\mathscr{X}_{\text{det}}(\bT,X_0)$.
\end{definition}

If $\bT=\{t_1,\dots,t_N\}$ is a finite time grid and $\bm\xi\in\mathscr{X}(\bT,X_0)$ is an admissible strategy in the sense of Definition \ref{discrete-time def}, then  
\begin{align}\label{X xi}
X^{\bm\xi}_t:=X_0-\sum_{t_k\in\bT,\,t_k<t}\xi_k
\end{align}
is an admissible strategy in the sense of Definition \ref{cont time def}. Therefore Definition \ref{cont time def} is consistent with Definition \ref{discrete-time def}. 
Now let $\bT$ be an arbitrary compact subset of $[0,T]$ and $G$ be a decay kernel. For $X\in\mathscr{X}(\bT,X_0)$ we define the  associated costs as 
$$C_{\bT}(X):=\frac12\int_{\bT}\bigg(\int_{\bT}\wt G(t-s)\,dX_s\bigg)^\top\,dX_t.
$$
When $\bT$ is a finite time grid, $\bm\xi\in\mathscr{X}(\bT,X_0)$, and $X^{\bm\xi}$ is defined by \eqref{X xi} then we clearly have $C_{\bT}(X^{\bm\xi})=C_{\bT}(\bm\xi)$, and so also the  definition of the cost functional is  consistent with our earlier definition for discrete time grids. We have the following result. 

\begin{theorem}\label{cont thm}Let $G$ be a symmetric, nonnegative, nonincreasing, convex, nonconstant, and commuting decay kernel and $\bT$ be a compact subset of $[0,T]$. Then the following assertions hold.
\begin{enumerate}
\item For $X_0\in\mathbb{R}^K$ there exists precisely one strategy $X^*\in\mathscr{X}(\bT,X_0)$ that minimizes the expected costs, $\mathbb{E}[\,C_{\bT}(X)\,]$, over all strategies  $X\in\mathscr{X}(\bT,X_0)$. Moreover, $X^*$ is deterministic and can be characterized as the unique strategy in $\mathscr{X}_{\text{det}}(\bT,X_0)$ that solves the following generalized Fredholm integral equation for some $\lambda\in\bR^K$,
\begin{align}\label{Fredholm integral eq}
\int_{\bT}\wt G(t-s)\,dX_s=\lambda\qquad\text{for all $t\in\bT$.}
\end{align}
\item Let $\mathscr{T}$ denote the class of all finite time grids in $\bT$. Then
$$\inf_{\bT'\in\mathscr{T}}\min_{\bm\xi\in\mathscr{X}(\bT',X_0)}\mathbb{E}[\,C_{\bT'}(\bm\xi)\,]=\mathbb{E}[\,C_{\bT}(X^*)\,].
$$
\end{enumerate}
\end{theorem}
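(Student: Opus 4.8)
The plan is to exploit the commuting structure to collapse the $K$-dimensional problem into $K$ decoupled one-dimensional problems, for which the continuous-time theory of \citet{gatheralschiedslynko} is available, and then to reassemble. Since $G$ is symmetric and commuting, Proposition~\ref{commuting properties Prop} furnishes an orthogonal matrix $O$ and scalar functions $g_1,\dots,g_K$ with $G(t)=O^\top\,\text{\rm diag}(g_1(t),\dots,g_K(t))\,O$, and by parts (b)--(d) of that proposition each $g_i$ is nonnegative, nonincreasing, and convex. The nonconstancy assumption guarantees, via Theorem~\ref{convex strict pd}, that $G$ is strictly positive definite, so by Proposition~\ref{commuting properties Prop}(a) each $g_i$ is strictly positive definite and in particular nonconstant. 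Because $\wt G(t)=G(|t|)=O^\top\,\text{\rm diag}(g_1(|t|),\dots,g_K(|t|))\,O$, the constant linear substitution $Y_t:=OX_t$ is a bijection of $\mathscr{X}(\bT,X_0)$ onto $K$-tuples of scalar admissible strategies with initial values $(OX_0)^i$ (it preserves adaptedness, left-continuity, bounded total variation, and support in $\bT$), and it diagonalizes the cost functional: $C_\bT(X)=\sum_{i=1}^K C^{g_i}_\bT(Y^i)$, where $C^{g_i}_\bT(y):=\frac12\int_\bT\!\int_\bT g_i(|t-s|)\,dy_s\,dy_t$.

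For part (a) I would first reduce to deterministic strategies. Since $\wt G$ is positive definite, the continuous-time analogue of Proposition~\ref{Cge0 pos def prop} shows $C_\bT$ is a nonnegative convex quadratic functional, so Jensen's inequality gives $\mathbb{E}[\,C_\bT(X)\,]\ge C_\bT(\mathbb{E}[X])$; as $\mathbb{E}[X]$ is again admissible and deterministic, minimizing expected costs over $\mathscr{X}(\bT,X_0)$ coincides with minimizing $C_\bT$ over $\mathscr{X}_{\text{det}}(\bT,X_0)$. After the substitution this deterministic problem splits into the $K$ independent scalar problems of minimizing $C^{g_i}_\bT(y^i)$ subject to initial value $(OX_0)^i$. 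Each $g_i$ is nonnegative, nonincreasing, convex and nonconstant, so \citet{gatheralschiedslynko} supply a unique minimizer $y^{*,i}$ characterized by the scalar equation $\int_\bT g_i(|t-s|)\,dy^{*,i}_s=\lambda_i$ for all $t\in\bT$. Stacking these equations and applying $O^\top$ (using $\text{\rm diag}(g_i(|t-s|))=O\,\wt G(t-s)\,O^\top$ and $dY^*_s=O\,dX^*_s$) reassembles them into \eqref{Fredholm integral eq} for $X^*:=O^\top Y^*$ and $\lambda:=O^\top(\lambda_1,\dots,\lambda_K)^\top$. Uniqueness of $X^*$ and of $\lambda$ follows from strict convexity of $C_\bT$ (strict positive definiteness), and the deterministic reduction forces every random minimizer to agree with $X^*$ $\mathbb{P}$-a.s.

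For part (b) I would apply the same substitution on each finite grid $\bT'\in\mathscr{T}$, giving $\min_{\bm\xi}\mathbb{E}[\,C_{\bT'}(\bm\xi)\,]=\sum_{i=1}^K m_i(\bT')$ with $m_i(\bT'):=\min C^{g_i}_{\bT'}$. Two scalar facts from the one-dimensional theory are needed: $m_i$ is nonincreasing under refinement of grids and bounded below by the continuous-time value $m_i^*:=C^{g_i}_\bT(y^{*,i})$; and $m_i(\bT')\downarrow m_i^*$ along any increasing sequence of finite grids whose union is dense in $\bT$. Fixing one common such refining sequence $\bT'_n$ and summing yields $\sum_i m_i(\bT'_n)\downarrow\sum_i m_i^*=C_\bT(X^*)=\mathbb{E}[\,C_\bT(X^*)\,]$, while monotonicity gives $\sum_i m_i(\bT')\ge\sum_i m_i^*$ for every finite $\bT'$. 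Combining the two inequalities delivers the claimed identity; using a single refining sequence is precisely what lets me interchange the infimum over grids with the finite sum over components.

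The main obstacle is the one-dimensional continuous-time input, i.e.\ establishing (or precisely locating in \citet{gatheralschiedslynko}) existence, uniqueness, the Fredholm characterization, and the grid-approximation for each scalar kernel $g_i$. This rests on uniform bounds for the total variation of the optimal discrete-grid strategies, which in the present multivariate setting are supplied by the buy-only/sell-only decomposition of Proposition~\ref{prop-orthonormalbasis}(b); these bounds permit a Helly-type selection of a limit along the refining grids. Secondary technical points --- the well-definedness of the double Stieltjes integral defining $C_\bT$ on a general compact set $\bT$, and the admissibility and convexity underpinning the Jensen reduction to deterministic strategies --- are routine but must be verified for arbitrary compact $\bT$ rather than intervals.
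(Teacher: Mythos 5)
Your proposal is correct in substance, but it takes a genuinely different route from the paper. The paper does \emph{not} diagonalize the continuous-time problem: it adapts the proof of Theorem 2.20 of \citet{gatheralschiedslynko} to the multivariate setting, the single new ingredient being that Proposition~\ref{prop-orthonormalbasis}(b) supplies a bound on the total number of shares traded by a discrete-grid optimal strategy that is uniform over all finite grids $\bS\subset\bT$, namely $\sum_{n,j}|\xi^{j}_n|\le\sum_{i=1}^K|\alpha_i|\sum_{j=1}^K|v^j_i|$; this is exactly the compactness input needed for the Helly-type selection along refining grids, and part (b) then falls out of the same argument (the optimal discrete costs decrease to $\mathbb{E}[\,C_\bT(X^*)\,]$ along grids with dense union). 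You instead conjugate by the orthogonal matrix $O$ from Proposition~\ref{commuting properties Prop}, decouple the cost functional into $K$ scalar functionals, and invoke the one-dimensional continuous-time theorem of \citet{gatheralschiedslynko} componentwise as a black box; your stacking of the scalar Fredholm equations into \eqref{Fredholm integral eq} and your interchange of the infimum over grids with the finite sum over components (via a single common refining sequence) are both sound. What your route buys is that the $K$-dimensional weak-convergence argument never has to be redone: existence, uniqueness, the Fredholm characterization, and the grid approximation are all imported from the scalar \emph{statement} rather than from its \emph{proof}. What the paper's route buys is brevity and the fact that it works directly with the multivariate objects; but it leans on the same structural fact (Proposition~\ref{prop-orthonormalbasis}, itself proved by diagonalization), so the two arguments are close cousins. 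The one-dimensional inputs you flag as the ``main obstacle'' are precisely the content of Theorem 2.20 in \citet{gatheralschiedslynko}, which the paper cites for exactly this purpose, so there is no gap there.

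Two smaller remarks. First, your reading of ``nonconstant'' as ``$t\mapsto x^\top G(t)x$ is nonconstant for every nonzero $x$'' is the correct and in fact necessary interpretation: if some $g_i$ were constant, the $i$-th scalar cost would equal $\tfrac12 g_i\,((OX_0)^i)^2$ for \emph{every} strategy and uniqueness in part (a) would fail; with this reading, Theorem~\ref{convex strict pd} and Proposition~\ref{commuting properties Prop}(a) give strict positive definiteness of each $g_i$, as you use. Second, your Jensen-based reduction to deterministic strategies is more delicate than needed: one must verify that $\mathbb{E}[X]$ is itself admissible and justify the vanishing of the cross term under the stated integrability. The scenariowise argument of Lemma~\ref{deterministic minimizers lemma} is cleaner and avoids this entirely: each path $X(\omega)$ lies in $\mathscr{X}_{\text{det}}(\bT,X_0)$, so $C_\bT(X(\omega))\ge C_\bT(X^*)$ pointwise, and a.s.\ equality forces $X=X^*$ a.s.\ by uniqueness of the deterministic minimizer.
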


\section{Examples}
\label{section-4}

\subsection{Constructing decay kernels by transformation}

In this section  we will now look at some transformations of decay kernels. The first of these results concerns decay kernels of the simple form $G(t)=g(t)L$ where $g:[0,\infty)\to\bR$ is a function and $L\in\bR^{K\times K}$ is a fixed matrix.


\begin{proposition}\label{prop-decomposable-G}
 For $L\in\bS_+(K)$  and a  positive definite function $g:\bR\rightarrow\mathbb R$, the decay kernel $  G(t):=g(t)L$ is   positive definite.
 If, moreover,  $g$ is a strictly positive definite function and $L$ is a strictly positive definite matrix, then $  G$ is also strictly positive.
\end{proposition}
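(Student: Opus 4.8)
The plan is to reduce the matrix-valued positive definiteness of $\wt G$ to the scalar positive definiteness of $g$ by factoring out $L$. First I would record the shape of $\wt G$. Since $L\in\bS_+(K)$ is symmetric, the decay kernel $G(t)=g(t)L$ satisfies $G(t)^\top=G(t)$, and since $g$ is a real positive definite function, Lemma~\ref{Pos def funct Hermitian lemma}(a) (in the case $K=1$) gives $g(-t)=\overline{g(t)}=g(t)$, i.e.\ $g$ is even. Substituting into \eqref{tilde G def eq} then yields
\[
\wt G(t)=g(t)\,L\qquad\text{for every }t\in\bR,
\]
where for $t<0$ one uses $\wt G(t)=G(-t)^\top=g(-t)L=g(t)L$ and for $t=0$ one uses the symmetry of $L$.

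Next I would verify the defining inequality directly. Write $L=L^{1/2}L^{1/2}$ with $L^{1/2}\in\bS_+(K)$ the symmetric nonnegative square root. Fix $N\in\mathbb N$, points $t_1,\dots,t_N\in\bR$, and vectors $z_1,\dots,z_N\in\bC^K$, and set $w_i:=L^{1/2}z_i$. Since $L^{1/2}$ is real and symmetric, $z_i^*Lz_j=(L^{1/2}z_i)^*(L^{1/2}z_j)=w_i^*w_j=\sum_{k=1}^K\overline{w_i^k}\,w_j^k$, so interchanging the finite sums gives
\[
\sum_{i,j=1}^N z_i^*\,\wt G(t_i-t_j)\,z_j=\sum_{i,j=1}^N g(t_i-t_j)\,z_i^*Lz_j=\sum_{k=1}^K\Big(\sum_{i,j=1}^N g(t_i-t_j)\,\overline{w_i^k}\,w_j^k\Big).
\]
Each inner sum is nonnegative because $g$ is a scalar positive definite function, hence the whole expression is $\ge0$. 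This shows that $\wt G$ is positive definite, i.e.\ that the decay kernel $G$ is positive definite in the sense of Definition~\ref{positive definite kernel def}.

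For the strict statement I would assume in addition that $g$ is strictly positive definite and that $L$ is strictly positive definite, so that $L^{1/2}$ is invertible. Taking distinct $t_1,\dots,t_N$ and supposing the left-hand side above vanishes, each of the nonnegative inner sums $\sum_{i,j}g(t_i-t_j)\overline{w_i^k}w_j^k$ must be zero; strict positive definiteness of $g$ then forces $w_i^k=0$ for all $i$, and letting $k$ range over $1,\dots,K$ gives $w_i=L^{1/2}z_i=0$, whence $z_i=0$ for every $i$. Thus $\wt G$ is strictly positive definite, i.e.\ $G$ is strictly positive definite.

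There is no serious obstacle here; the only two points requiring care are the reduction to $\wt G(t)=g(t)L$ (which uses the evenness of $g$ and the symmetry of $L$) and the rearrangement of the double sum, together with the invertibility of $L^{1/2}$ in the strict case. As an alternative, when $g$ is continuous one could instead invoke Bochner's theorem: writing $g(t)=\int_\bR e^{i\gamma t}\mu(d\gamma)$ for a nonnegative measure $\mu$, the matrix-valued measure $M(d\gamma):=L\,\mu(d\gamma)$ is nonnegative definite with Fourier transform $\wt G$, so Theorem~\ref{bochner-thm-nonsymm} applies; the elementary argument above has the advantage of not requiring continuity of $g$.
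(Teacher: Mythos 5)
Your proof is correct and follows essentially the same route as the paper's: both factor out a symmetric square root of $L$ (your $L^{1/2}$, the paper's $A$), substitute $w_i=L^{1/2}z_i$, reduce the quadratic form to $K$ scalar sums handled by the positive definiteness of $g$, and settle the strict case via invertibility of the square root. Your explicit preliminary reduction $\wt G(t)=g(t)L$ (using evenness of $g$) is a point the paper passes over silently by writing $g(|t_k-t_\ell|)$, but it is the same argument.
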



The simple decay kernels  from the preceding proposition provide a class of examples to which also the next result applies. In particular, by choosing in the subsequent Proposition~\ref{prop-L-transl} the decay kernel as $G(t):=g(t)\text{Id}$ for $g:\bR\rightarrow\mathbb R$ positive definite and $\text{Id}\in\bR^{K\times K}$ denoting the identity matrix,  one sees that the optimal strategies for  decay kernels of the form $g(t)L$ with $L\in\bS_+(K)$ do not depend on the cross-asset impact $g(t)L_{ij}$ for $i\neq j$.  Hence, cross-asset impact will only become   relevant when the components of $G$ decay at varying rates.  


\begin{proposition} \label{prop-L-transl} Let $G$ be a decay kernel and define $G_L(t):=LG(t)$ for some $L\in\bR^{K\times K}$. When both $  G$ and $  G_L$ are positive definite, then every optimal strategy in  $\mathscr{X}_{\text{\rm det}}(\bT,X_0)$  for $G$ is also an optimal strategy for $G_L$.  
\end{proposition}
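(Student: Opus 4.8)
The plan is to reduce everything to the Lagrangian characterization of optimality in Proposition~\ref{prop-opt-strat} and to show that the same strategies satisfy the first-order conditions for $G$ and for $G_L$. Concretely, let $\bm\xi\in\mathscr{X}_{\text{\rm det}}(\bT,X_0)$ be optimal for $G$. Since $\wt G$ is positive definite, Proposition~\ref{prop-opt-strat} provides a multiplier $\lambda\in\bR^K$ with $\sum_{\ell=1}^N\wt G(t_k-t_\ell)\xi_\ell=\lambda$ for every $k$. I would then try to produce a multiplier $\mu\in\bR^K$ with $\sum_{\ell=1}^N\wt{G_L}(t_k-t_\ell)\xi_\ell=\mu$ for every $k$; by Proposition~\ref{prop-opt-strat} applied to the positive definite kernel $G_L$, this is exactly equivalent to optimality of $\bm\xi$ for $G_L$. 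The natural candidate is $\mu=L\lambda$, and if $\wt G$ is strictly positive definite the accompanying uniqueness statement would transfer as well.

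First I would express $\wt{G_L}$ through $\wt G$. Reading off \eqref{tilde G def eq}, one has $\wt{G_L}(t)=LG(t)=L\wt G(t)$ for $t>0$, while for $t<0$ the transpose in \eqref{tilde G def eq} produces $\wt{G_L}(t)=(LG(-t))^\top=G(-t)^\top L^\top=\wt G(t)L^\top$, and $\wt{G_L}(0)=\tfrac12\bigl(LG(0)+G(0)^\top L^\top\bigr)$. Hence, \emph{if} one can establish the identity
\begin{equation}\label{Ltranslidentity}
\wt{G_L}(t)=L\wt G(t)\qquad\text{for all }t\in\bR,
\end{equation}
then $\sum_{\ell}\wt{G_L}(t_k-t_\ell)\xi_\ell=L\sum_{\ell}\wt G(t_k-t_\ell)\xi_\ell=L\lambda$ for each $k$, so $\mu=L\lambda$ works and the proposition follows at once. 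From the description of $\wt{G_L}$ above, \eqref{Ltranslidentity} is nothing but the symmetry relation $LG(t)=G(t)L^\top$ for all $t\ge0$, which already holds trivially in the intended application $G=g\,\mathrm{Id}$ with $L\in\bS_+(K)$, since there $\wt{G_L}(t)=g(|t|)L=L\wt G(t)$.

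The hard part will be to derive \eqref{Ltranslidentity} — equivalently $LG(t)=G(t)L^\top$ — from the joint hypothesis that \emph{both} $G$ and $G_L$ are positive definite; positive definiteness of $G$ alone is plainly insufficient, since it places no constraint on the antisymmetric correction terms $G(t)L^\top-LG(t)$ that appear for $t<0$. Here I would invoke the integral representation of Theorem~\ref{bochner-thm-nonsymm}: write $\wt G(t)=\int_\bR e^{i\gamma t}M(d\gamma)$ and $\wt{G_L}(t)=\int_\bR e^{i\gamma t}M_L(d\gamma)$ with nonnegative definite matrix-valued measures $M$ and $M_L$. The relation $\wt{G_L}=L\wt G$ on $(0,\infty)$ says that $\int_\bR e^{i\gamma t}\bigl(M_L-LM\bigr)(d\gamma)=0$ for all $t>0$. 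The decisive obstacle is that vanishing of a Fourier transform merely on a half-line does not by itself force the representing measure to vanish, so \eqref{Ltranslidentity} cannot be read off from this relation directly; it is precisely the extra information that \emph{both} $M$ and $M_L$ are nonnegative definite, combined with the Hermitian symmetry $H(-t)=H(t)^\ast$ of Lemma~\ref{Pos def funct Hermitian lemma}(a), that I expect must be exploited to upgrade the half-line agreement to the identity $M_L=LM$, and hence \eqref{Ltranslidentity} on all of $\bR$. Pinning down exactly how nonnegativity of the second measure rules out the problematic odd-in-$\gamma$ contributions is where I expect essentially all of the genuine work to lie.
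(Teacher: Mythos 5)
Your completed argument coincides with the whole of the paper's proof: the paper takes the Lagrange condition \eqref{eq-lagrange-mult} for $G$, multiplies both sides by $L$, and concludes from Proposition~\ref{prop-opt-strat} that $\bm\xi$ is optimal for $G_L$ with multiplier $L\lambda$ --- nothing more. In particular, the step you single out as the crux, namely that $\wt{G_L}(t)=L\wt G(t)$ also for \emph{negative} $t$ (equivalently $LG(s)=G(s)L^\top$ for $s\ge0$), is not proved in the paper at all: the literal definition \eqref{tilde G def eq} applied to $G_L=LG$ prescribes $\wt{G_L}(t)=G(-t)^\top L^\top$ on $t<0$, whereas multiplying the Lagrange identity by $L$ produces $LG(-t)^\top$, and the paper silently identifies the two when it reads $\sum_\ell L\wt G(t_k-t_\ell)\xi_\ell=L\lambda$ as $\sum_\ell \wt{G_L}(t_k-t_\ell)\xi_\ell=L\lambda$. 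So you have not missed an idea that the paper supplies; you have located an assumption that the paper's one-line proof makes without comment. Nevertheless, judged as a self-contained proof of the literal statement, your proposal is incomplete: by your own account the derivation of $LG(t)=G(t)L^\top$ from positive definiteness of $\wt{G_L}$ is left open, and your F.~and~M.~Riesz/Hardy-space discussion is a diagnosis of where such an argument would have to go, not the argument itself.

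There is, however, a cheap way to close the gap, available to you (and to the paper), which covers every use the paper makes of this proposition. If one reads the hypothesis ``$G_L$ is positive definite'' as saying that the matrix-valued function $H(t):=L\wt G(t)$ is positive definite, then Lemma~\ref{Pos def funct Hermitian lemma}(a) applied to $H$ gives $H(-t)=H(t)^*$, i.e.\ $LG(t)^\top=G(t)^\top L^\top$ for $t\ge0$; transposing yields exactly your identity $LG(t)=G(t)L^\top$, whence $\wt{G_L}=L\wt G$ on all of $\bR$ and your first paragraph finishes the proof. This reading is the operative one in practice: the paper invokes the proposition only for $G=g\,\mathrm{Id}$ and $G_L=gL$ with $L\in\bS_+(K)$, where $L\wt G(t)=g(|t|)L$ is positive definite by Proposition~\ref{prop-decomposable-G} and the identity is trivial anyway. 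Whether the paper's literal hypothesis --- positive definiteness of $\wt{G_L}$ as built from $G_L$ by \eqref{tilde G def eq} --- by itself forces $LG(t)=G(t)L^\top$ is the genuinely hard question you raise; neither your proposal nor the paper answers it, so you should either add the hypothesis/reading above or supply the missing harmonic-analysis argument.
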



The main message obtained from combining Propositions \ref{prop-decomposable-G} and \ref{prop-L-transl}  is the following: if the price impact between all pairs of assets decays at the same rate, then cross-asset impact can be ignored and one can simply consider each asset individually.

We show next that also  congruence transforms preserve positive definiteness. This result extends Proposition \ref{commuting properties Prop} (e).


\begin{proposition} \label{prop-L-transform}
If $G$ is a (strictly) positive definite decay kernel and $L$ and an invertible $K\times K$ matrix, then $G^L:=L^\top G(t)L$ is (strictly) positive definite. If, moreover, $\bm\xi$ is an optimal strategy for $G$ in $ \mathscr{X}_{\text{\rm det}}(\bT,LX_0)$, then $\bm\xi^L:=(L^{-1}\xi_1,\dots,L^{-1}\xi_{|\bT|})$ is an optimal strategy for $G^L$ in $ \mathscr{X}_{\text{\rm det}}(\bT,X_0)$.
\end{proposition}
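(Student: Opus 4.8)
The plan is to reduce everything to the single identity $\wt{G^L}=L^\top\wt G\,L$ together with the Lagrange characterization of optimal strategies from Proposition~\ref{prop-opt-strat}. The first step is to verify, directly from the definition \eqref{tilde G def eq}, that the symmetrized kernel associated with $G^L(t)=L^\top G(t)L$ satisfies $\wt{G^L}(t)=L^\top\wt G(t)L$ for every $t\in\bR$. For $t>0$ this is immediate; for $t<0$ one computes $\wt{G^L}(t)=G^L(-t)^\top=(L^\top G(-t)L)^\top=L^\top G(-t)^\top L=L^\top\wt G(t)L$, using that $L$ is real; and for $t=0$ the symmetrization passes through the congruence because $\tfrac12\big((L^\top G(0)L)^\top+L^\top G(0)L\big)=L^\top\tfrac12\big(G(0)^\top+G(0)\big)L$.

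Given this identity, positive definiteness of $\wt{G^L}$ follows by a change of variables in the defining inequality \eqref{wt G positive def eq}. For any $N$, $t_1,\dots,t_N\in\bR$ and $z_1,\dots,z_N\in\bC^K$, I put $w_i:=Lz_i$. Since $L$ is real, $z_i^*L^\top=(Lz_i)^*=w_i^*$, so
\[
\sum_{i,j=1}^N z_i^*\,\wt{G^L}(t_i-t_j)\,z_j=\sum_{i,j=1}^N w_i^*\,\wt G(t_i-t_j)\,w_j\ge0,
\]
which proves that $\wt{G^L}$ is positive definite. If $\wt G$ is strictly positive definite, the right-hand side vanishes only when all $w_i=0$; invertibility of $L$ then forces all $z_i=0$, so $\wt{G^L}$ is strictly positive definite as well.

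For the assertion about optimal strategies, I would invoke Proposition~\ref{prop-opt-strat}, which is now applicable to $G^L$ since $\wt{G^L}$ has just been shown to be positive definite. Let $\bm\xi=(\xi_1,\dots,\xi_N)$ be optimal for $G$ in $\mathscr{X}_{\text{\rm det}}(\bT,LX_0)$, so that $LX_0+\sum_k\xi_k=0$ and, by \eqref{eq-lagrange-mult}, there is $\lambda\in\bR^K$ with $\sum_\ell\wt G(t_k-t_\ell)\xi_\ell=\lambda$ for all $k$. The candidate $\bm\xi^L=(L^{-1}\xi_1,\dots,L^{-1}\xi_N)$ lies in $\mathscr{X}_{\text{\rm det}}(\bT,X_0)$ because $X_0+\sum_k L^{-1}\xi_k=X_0+L^{-1}(-LX_0)=0$. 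Applying $\wt{G^L}=L^\top\wt G\,L$ and moving the factor $L^{-1}$ inside the sum gives, for every $k$,
\[
\sum_{\ell=1}^N\wt{G^L}(t_k-t_\ell)\,L^{-1}\xi_\ell=L^\top\sum_{\ell=1}^N\wt G(t_k-t_\ell)\,\xi_\ell=L^\top\lambda,
\]
a vector independent of $k$. Hence $\bm\xi^L$ satisfies \eqref{eq-lagrange-mult} for $G^L$ with multiplier $L^\top\lambda$, and Proposition~\ref{prop-opt-strat} identifies it as an optimal strategy for $G^L$ in $\mathscr{X}_{\text{\rm det}}(\bT,X_0)$.

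The argument is essentially bookkeeping, so there is no genuine analytic obstacle; the only point requiring care is the verification of $\wt{G^L}=L^\top\wt G\,L$ at $t=0$, where the nonsymmetric part of $G(0)$ is averaged. There one must check that the congruence $M\mapsto L^\top M L$ commutes with the symmetrization $M\mapsto\tfrac12(M^\top+M)$, which holds precisely because $L$ is real. The same reality of $L$ is what makes $z^*L^\top=(Lz)^*$ valid, and it is used implicitly in both the positive-definiteness step and the optimality step.
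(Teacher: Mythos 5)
Your proof is correct, but the optimality half follows a genuinely different route from the paper's. For positive definiteness both arguments hinge on the same congruence identity $\wt{G^L}(t)=L^\top\wt G(t)L$ and a change of variables $w_i=Lz_i$ (you are more explicit than the paper, carefully checking the $t=0$ symmetrization and working directly with complex vectors, whereas the paper writes the identity only for real strategy vectors and leaves the passage to the complex definition implicit via Proposition~\ref{Cge0 pos def prop}). For the optimal-strategy assertion, however, the paper does not touch Lagrange multipliers at all: it observes that $\bm\xi\mapsto\bm\xi^L$ is a bijection from $\mathscr{X}_{\text{\rm det}}(\bT,LX_0)$ onto $\mathscr{X}_{\text{\rm det}}(\bT,X_0)$ under which the cost functionals coincide, $\sum_{k,\ell}\xi_k^\top\wt G(t_k-t_\ell)\xi_\ell=\sum_{k,\ell}(\xi^L_k)^\top\wt{G}^L(t_k-t_\ell)\xi^L_\ell$, so minimizers map to minimizers. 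You instead transport the first-order condition \eqref{eq-lagrange-mult}, showing that the multiplier $\lambda$ for $G$ becomes $L^\top\lambda$ for $G^L$; this is exactly the technique the paper uses for the companion result, Proposition~\ref{prop-L-transl}. The bijection argument is slightly more economical here (it needs neither direction of the characterization in Proposition~\ref{prop-opt-strat}, only the definition of optimality), while your version has the merit of producing the transformed Lagrange multiplier explicitly and of making the role of reality and invertibility of $L$ visible at each step. Both are complete proofs.
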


 \begin{example}[\bfseries Permanent impact] Let $G(t)=G_0$, where $G_0$ is any fixed matrix in $\mathbb{R}^{K\times K}$. For any time grid $\bT$, $X_0\in\mathbb{R}^K$, and $\bm\xi\in\mathscr{X}(\bT,X_0)$  we then have $C_{\bT}(\bm\xi)=X_0^\top G_0X_0$. Hence $G$ is positive definite as soon as $G_0$ is nonnegative. By taking $G_0$ such that $ X_0^\top G_0X_0\ge0$ for some nonzero $X_0$ and $Y_0^\top G_0Y_0<0$ for some other $Y_0$ one gets an example illustrating that it is not possible to fix $X_0$ in part (a) of Proposition \ref{Cge0 pos def prop}.\end{example}

\subsection{Exponential decay kernels }\label{matrix fcts section}

In this section we will  discuss decay kernels with an exponential decay of price impact. For $K=1$ exponential decay was introduced in  \citet{ObizhaevaWang} and further studied, e.g., in \citet{AFS1} and \citet{PredoiuShaikhetShreve}. The next example extends the results from  \citet{ObizhaevaWang} and  \citet{AFS1} to a multivariate setting in which the decay kernel is defined in terms of matrix exponentials. The remaining results of this section are stated in a more general but two-dimensional context. The main message of these examples is that, on the one hand, it  is easy to construct decay kernels with all desirable properties via matrix functions. But, on the other hand, it is typically not easy to establish properties such as positive definiteness for decay kernels that are defined coordinate-wise.


\begin{example}[\bfseries  Matrix exponentials]\label{example-matrixexp}  For an orthogonal matrix $O$, $\rho_1,\dots,\rho_K\ge0$, and $B=O^\top\text{diag}(\rho_1,\dots,\rho_K)O\in \bS_+(K)$, the decay kernel $G(t)=\exp(-tB)$
is of the form \eqref{matrix function eq} with $g(t)=e^{-t}$. It follows that $G$ is nonnegative, nonincreasing, and convex. In particular, $G$ is  positive definite. When the matrix $B$ is strictly positive definite, as we will assume from now on, the decay kernel $G$ is even  strictly positive definite. We now  compute the optimal strategy $\bm\xi=(\xi_1,\dots,\xi_N)$ for an initial portfolio $X_0\in\mathbb R^K$ and time grid $\bT=\{t_1,\dots,t_N\}$. To this end, we will use part (e) of Proposition~\ref{commuting properties Prop}. Let $\bm\eta^i:=(\eta^i_1,\dots, \eta^i_N)$ be the optimal strategy for the initial position $y_i$ and for the one-dimensional decay kernel  $g_i(t)=e^{-t\rho_i}$.  Let 
$$a_n^i:=e^{-(t_n-t_{n-1})\rho_i}\qquad\text{and}\qquad \lambda_i:=\frac{ -y_i}{\frac{2}{1+a^i_2} + \sum_{n=3}^N \frac{1-a^i_n}{1+a^i_n} }.
$$
Theorem 3.1 in \citet{AFS1} implies that
 the optimal strategy $\bm\eta^i=(\eta^i_1,\dots, \eta^i_N)$ in $\mathscr{X}_{\text{get}}(\bT,y^i)$ is given by
$$\eta^i_1=\frac{\lambda_i}{1+a^i_2},\quad \eta^i_n=\Big(\frac{1}{1+a^i_{n}} -\frac{a^i_{n+1}}{1+a^i_{n+1}}\Big)\lambda_i\text{ for }
n=2,\dots,N-1,\quad\text{and}\quad
\eta^i_N= \frac{\lambda_i}{1+a^i_N} .
$$
Via part (e) of Proposition~\ref{commuting properties Prop}, we can now compute the optimal strategy $\bm\xi$. Consider first the optimal strategy $\bm{\eta}$ for the decay kernel $D(t):=\text{diag}(\exp(-\rho_1t), \ldots,\exp(-\rho_Kt))$ and initial position $O X_0$. Then $\bm\eta=(\bm\eta^1,\dots,\bm\eta^K)^\top$ for $y^i:=(OX_0)^i$. When defining $Q_n:=D(t_n-t_{n-1})$ and
$$\wt\lambda=-\left(2(\IdM+Q_2)^{-1}+\sum_{n=3}^N (\IdM-Q_n)(\IdM+Q_n)^{-1}\right)^{-1}O X_0,$$
 ${\bm{\eta}}=(\eta_1,\dots,\eta_N)$ can be conveniently expressed as follows: 
 \begin{eqnarray*}
 \eta_1&=&(1+Q_2)^{-1}\wt\lambda,\\
 \eta_n&=&(\IdM+Q_n)^{-1}\wt\lambda-Q_{n+1}(\IdM+Q_{n+1})^{-1}\wt\lambda \quad\text{for $n=2,\ldots,N-1$,}\\
  \eta_N&=&(\IdM+Q_N)^{-1}\wt\lambda.
\end{eqnarray*}
By  part (e) of Proposition~\ref{commuting properties Prop} the optimal strategy $\bm\xi$ for $G$ and $X_0$ is now given by $\bm\xi=O^T \eta$.
To remove $O$ from these expressions, define $A_n=e^{-(t_n-t_{n-1})B}=O^\top Q_n O$ and
$$\lambda:=-\bigg[2\big(\text{Id}+A_2\big)^{-1}+\sum_{i=3}^N\big(\text{Id}-A_i\big)\big(\text{Id}+A_i\big)^{-1}\bigg]^{-1}X_0.
$$
By observing that  $(\IdM+A_n)^{-1}=O^\top(\IdM+Q_n)^{-1}O$ and  $\lambda =O^\top \wt\lambda$, we find that the components of the optimal strategy $\bm\xi$ are
\begin{align*}\xi_1&=\big(\text{Id}+A_2\big)^{-1}\lambda,\\
\xi_n&=\big(\text{Id}+A_n\big)^{-1}\lambda -A_{n+1}\big(\text{Id}+A_{n+1}\big)^{-1}\lambda\quad\text{for $n=2,\dots, N-1$,}\\
\xi_N&= \big(\text{Id}+A_N\big)^{-1}\lambda.
\end{align*}
Let us finally consider the situation of an equidistant time grid, $t_i=\frac{i-1}{N-1}$. In this case, all matrices $A_i$ are equal to a single matrix $A$. Our formula for $\lambda$ then becomes
$$\lambda=-(\IdM+A)\Big(N\IdM-(N-2)A\Big)^{-1}X_0.
$$
The formula for the optimal strategy thus simplifies to
\begin{eqnarray*}
\xi_1&=&-\Big(N\IdM-(N-2)A\Big)^{-1}X_0,\\
\xi_i&=&(\text{Id}-A)\xi_1\quad \text{for }i=2,\dots, N-1,\\
\xi_N&=&\xi_1.
\end{eqnarray*}
It is not difficult to extend this result to the setting of Section \ref{continuous-time section} by arguing as in \citet[Example 2.12]{gatheralschiedslynko}. The details are left to the reader.\hfill$\diamondsuit$
\end{example}


\newsavebox{\matB}
\savebox{\matB}{$\left(\begin{smallmatrix}b&1\\0&b\end{smallmatrix}\right)$}
When $g:\bR\to\bR$ is an analytic function, the definition of $g(B)$ is also possible for nonsymmetric matrices by letting
$$g(B):=\sum_{k=0}^\infty a_kB^k,
$$
where $g(x)=\sum_{k=0}^\infty a_kx^k$ is the power series development of $g$. In the following example we analyze the properties of the decay kernel $G(t):=\exp(-tB)$ for the particular nonsymmetric but strictly positive $2\times2$-matrix $B=\usebox{\matB}$ with $b>0$. We will see that $G$ may or may not be positive definite, according to the particular choice of  $b$. Thus, our general results obtained for decay kernels defined as matrix functions of symmetric matrices do not carry over to the nonsymmetric case.


\begin{example}[\bfseries  Nonsymmetric matrix exponential decay]Let $B=\usebox{\matB}$,
 where $b>0$ and consider the following decay kernel 
$$G(t)=e^{-tB}=\begin{pmatrix}\exp(-tb)&-t \exp(-tb)\\0&\exp(-tb)\end{pmatrix}.$$
Applying Lemmas~\ref{lemma-eigenvalue2} and~\ref{lem-smooth-noninc-convex}, we easily see that $G$ is not symmetric, not nonnegative, not nonincreasing, and not convex. But $G$ is positive definite   if and only if $b\ge1/2$. To see this, we observe by calculating the inverse Fourier transform that $\widetilde{G}(t)=\int_{\mathbb R} e^{itz}M(z)\,dz $ with 
 $$M(z)=\frac 1\pi  \left(
\begin{array}{cc}
 \frac{b }{b^2+z^2} & \frac{-1}{2( b+iz )^2} \\
 \frac{-1}{2(b-iz)^2} & \frac{b }{b^2+z^2} \\
\end{array}
\right).$$
From Theorem~\ref{bochner-thm-nonsymm} and Lemma~\ref{lemma-eigenvalue2}, $G$ is positive definite if and only if for all $z\in\bR$
$$   \frac{1}{4} \frac{1}{(b^2+z^2)^2} \le\left( \frac{b}{b^2+z^2}\right)^2,   $$
which is in turn equivalent to $1/2\le b$.\hfill$\diamondsuit$
\end{example}


For the following results we no longer require that the decay kernel is given in the  particular form of a matrix function. 


\begin{proposition} \label{prop-exponential2} Let 
$$G(t)=\begin{pmatrix} a_{11} \exp(-b_{11} t) & a_{12} \exp(-b_{12}t) \\ a_{21} \exp(-b_{21}t) & a_{22} \exp(-b_{22}t) \end{pmatrix}
$$
with $a_{11},a_{12},a_{21},a_{22},b_{11},b_{12},b_{21},b_{22}>0$.
\begin{enumerate}
\item $G$ is nonnegative if and only if $\min\{b_{12},b_{21}\} \ge \frac 12 (b_{11}+b_{22})$ and $\frac 14(a_{12}+a_{21})^2 \le a_{11}a_{22}$.
\item $G$ is nonincreasing if and only if $\min\{b_{12},b_{21}\} \ge \frac 12 (b_{11}+b_{22})$ and $\frac 14(a_{12} b_{12} +a_{21} b_{21})^2 \le a_{11}b_{11}a_{22}b_{22}$.
\item $G$ is convex if and only if $\min\{b_{12},b_{21}\} \ge \frac 12 (b_{11}+b_{22})$ and $\frac 14(a_{12} b_{12}^2 +a_{21} b_{21}^2)^2 \le a_{11}b_{11}^2 a_{22}b_{22}^2$.
\item Let $G$ be nonincreasing and $a_{12}=a_{21}$. Then $G$ is positive definite.
\item $G$ is commuting if and only if either $b_{11}=b_{12}=b_{21}=b_{22}$, or $b_{11}=b_{22}$ and $b_{12}=b_{21}$ and $a_{11}=a_{22}$.
\end{enumerate}
\end{proposition}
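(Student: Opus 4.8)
The plan is to treat parts (a)--(c) uniformly, since all three are assertions about the single scalar quadratic form
\[
q(t,x):=x^\top G(t)x=a_{11}e^{-b_{11}t}x_1^2+\bigl(a_{12}e^{-b_{12}t}+a_{21}e^{-b_{21}t}\bigr)x_1x_2+a_{22}e^{-b_{22}t}x_2^2
\]
and its derivatives in $t$. By Definition~\ref{Property def}, $G$ is nonnegative iff $q(t,x)\ge0$, nonincreasing iff $-\partial_tq(t,x)\ge0$, and convex iff $\partial_t^2q(t,x)\ge0$, in each case for all $t\ge0$ and $x\in\bR^2$. Differentiating, $-\partial_tq$ and $\partial_t^2q$ are quadratic forms of exactly the same shape as $q$, with the coefficient quadruple $(a_{11},a_{12},a_{21},a_{22})$ replaced by $(a_{11}b_{11},a_{12}b_{12},a_{21}b_{21},a_{22}b_{22})$ and $(a_{11}b_{11}^2,a_{12}b_{12}^2,a_{21}b_{21}^2,a_{22}b_{22}^2)$, while the exponential rates $b_{ij}$ are unchanged. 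Hence (a), (b), (c) all follow from one scalar lemma applied to these three coefficient quadruples.

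The lemma I would prove is: for constants $c_{11},c_{12},c_{21},c_{22}>0$, the form $c_{11}e^{-b_{11}t}x_1^2+(c_{12}e^{-b_{12}t}+c_{21}e^{-b_{21}t})x_1x_2+c_{22}e^{-b_{22}t}x_2^2$ is nonnegative for all $t\ge0$ and $x$ if and only if $\min\{b_{12},b_{21}\}\ge\frac12(b_{11}+b_{22})$ and $\frac14(c_{12}+c_{21})^2\le c_{11}c_{22}$. Since the diagonal coefficients are positive, nonnegativity for fixed $t$ is equivalent to nonnegativity of the determinant of the symmetric part, i.e. to $c_{11}c_{22}e^{-(b_{11}+b_{22})t}\ge\frac14(c_{12}e^{-b_{12}t}+c_{21}e^{-b_{21}t})^2$. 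Taking square roots and dividing by $e^{-(b_{11}+b_{22})t/2}$ turns this into $c_{12}e^{(\mu-b_{12})t}+c_{21}e^{(\mu-b_{21})t}\le2\sqrt{c_{11}c_{22}}$ with $\mu:=\frac12(b_{11}+b_{22})$. Boundedness of the left-hand side as $t\to\infty$ forces both exponents to be nonpositive, which is the rate condition; granting this, the left-hand side is nonincreasing and hence maximal at $t=0$, giving the amplitude condition $c_{12}+c_{21}\le2\sqrt{c_{11}c_{22}}$. Substituting the three coefficient quadruples yields (a), (b), (c) verbatim.

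For (d) the natural route is the integral representation. With $a_{12}=a_{21}$ I would compute the inverse Fourier transform of $\wt G$ exactly as in the earlier nonsymmetric example, obtaining a Hermitian matrix density $M(z)$ whose diagonal entries are Cauchy kernels $\tfrac1\pi a_{ii}b_{ii}/(b_{ii}^2+z^2)$ and whose off-diagonal entry combines the two rates $b_{12},b_{21}$. By Theorem~\ref{bochner-thm-nonsymm}, $G$ is positive definite iff $M(z)$ is nonnegative definite for every $z$, i.e. iff $\det M(z)\ge0$ for all $z$. The plan is to derive this determinant inequality from the two constraints furnished by part (b), which hold because $G$ is nonincreasing. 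This inequality is the genuine obstacle: it must hold across the whole frequency range, and the delicate regime is $z\to\infty$ together with unequal diagonal rates $b_{11}\neq b_{22}$, where the leading asymptotics of the two sides are comparable and the constraints from (b) must be exploited sharply.

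Finally, (e) is a direct computation of the commutator $G(t)G(s)-G(s)G(t)$. Its diagonal entries equal $\pm\,a_{12}a_{21}\bigl(e^{-b_{12}t-b_{21}s}-e^{-b_{12}s-b_{21}t}\bigr)$, which vanish for all $s,t\ge0$ precisely when $b_{12}=b_{21}$. Imposing this, the two off-diagonal entries become linear combinations of exponentials $e^{-pt-qs}$ with distinct exponent pairs; using the linear independence of such exponentials in $(s,t)$, cancellation forces either $b_{11}=b_{12}=b_{21}=b_{22}$, or else $b_{11}=b_{22}$ together with $b_{12}=b_{21}$ and $a_{11}=a_{22}$. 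Conversely both families are readily checked to commute, which completes (e). The only subtle point here is the careful bookkeeping of which exponential rates coincide in the various degenerate cases.
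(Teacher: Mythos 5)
Parts (a)--(c) of your proposal are correct and essentially the paper's argument: both reduce matrix nonnegativity to the determinant criterion for $2\times2$ forms (the paper's Lemma~\ref{lemma-eigenvalue2}), apply it to $G$, $-G'$, $G''$, and extract the two conditions by letting $t\to\infty$ (rate condition) and $t=0$ (amplitude condition); your square-root/monotonicity packaging is a cosmetic variant. Part (e) is also correct and close to the paper's proof: the paper gets $b_{12}=b_{21}$ from the diagonal of the commutator at $s=0$ and then analyzes the remaining exponential identity, which is the same case analysis you run via linear independence of the functions $e^{-pt-qs}$.

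The genuine gap is part (d). You correctly set up the route (compute the Fourier density $\Lambda(\gamma)$ of $\wt G$, invoke Theorem~\ref{bochner-thm-nonsymm}, reduce to $\det\Lambda(\gamma)\ge 0$ for all $\gamma$), but then you only \emph{announce} that this inequality should follow from the constraints in (b), and you yourself call it ``the genuine obstacle.'' That obstacle is exactly the content of the proof and it is never resolved. Concretely, with $a_{12}=a_{21}$ the required inequality is
\begin{equation*}
a_{12}^2 (b_{12}+b_{21})^2\,(b_{11}^2+\gamma^2)(b_{22}^2+\gamma^2)\;\le\; 4\,a_{11}b_{11}a_{22}b_{22}\,(b_{12}^2+\gamma^2)(b_{21}^2+\gamma^2)
\quad\text{for all }\gamma\in\mathbb R,
\end{equation*}
and the paper proves it by comparing the coefficients of $\gamma^4$, $\gamma^2$, and $\gamma^0$ separately (termwise domination of two quartics in $\gamma^2$ is sufficient). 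The $\gamma^4$ coefficient is precisely the amplitude condition from (b); the $\gamma^0$ and $\gamma^2$ coefficients require the additional purely algebraic facts $b_{11}b_{22}\le b_{12}b_{21}$ and $b_{11}^2+b_{22}^2\le b_{12}^2+b_{21}^2$, both of which must be deduced from the rate condition $\min\{b_{12},b_{21}\}\ge\frac12(b_{11}+b_{22})$ (the first by AM--GM, the second by maximizing $x\mapsto(2m-x)^2+x^2$ on $[0,m]$ with $m=\min\{b_{12},b_{21}\}$). None of this appears in your proposal. Incidentally, your diagnosis that the delicate regime is $\gamma\to\infty$ is backwards: the leading ($\gamma^4$) comparison is the easy one, handed to you directly by (b); the work lies in the lower-order coefficients, i.e.\ at finite frequencies.
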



For the following simpler and symmetric decay kernel, the results follow immediately from the preceding proposition. See Figure~\ref{fig-1} for an illustration of a corresponding optimal strategy.
\begin{figure} 
	\begin{center}
	\includegraphics[width=6cm]{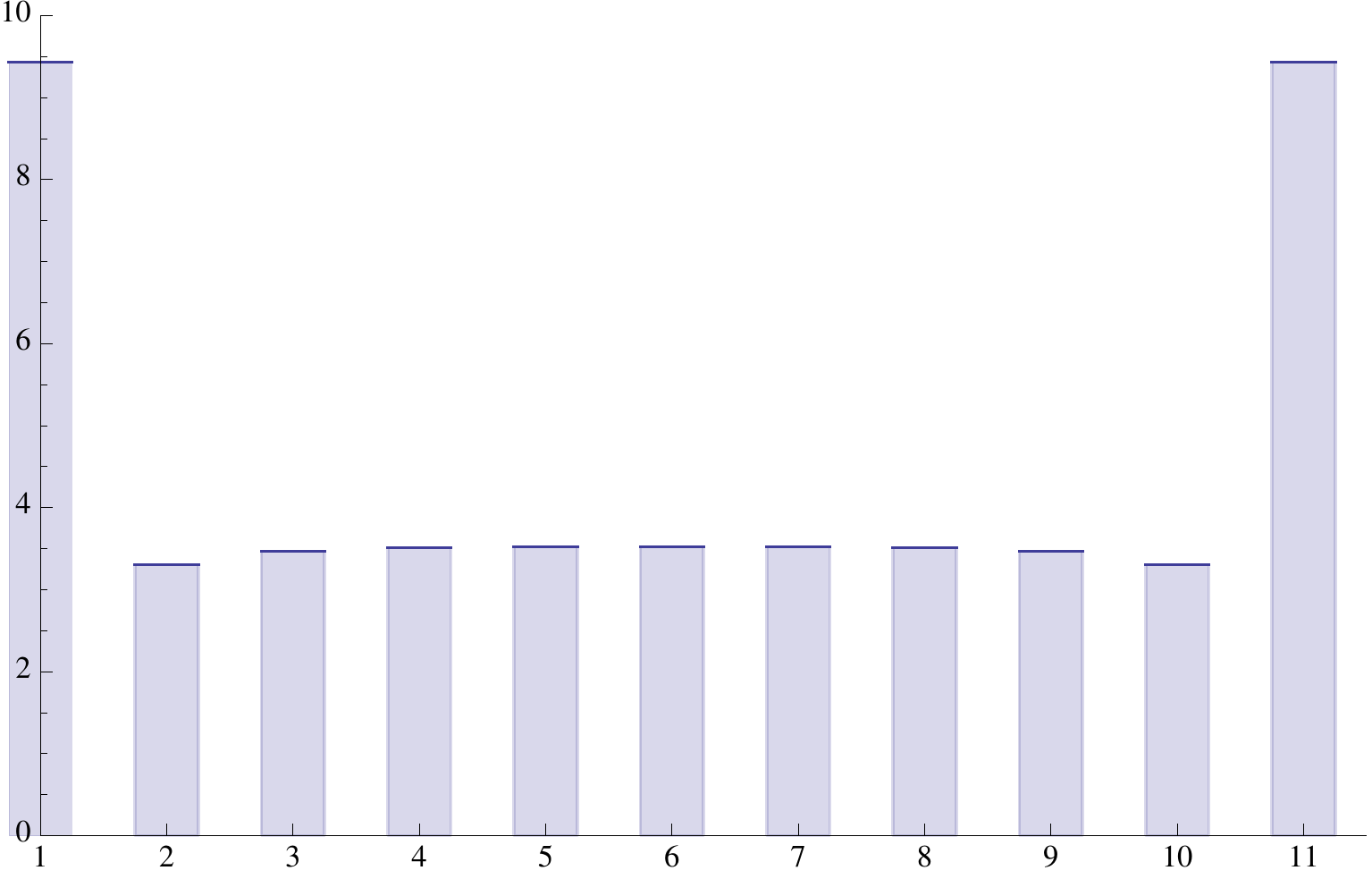}\qquad \qquad \qquad
	\includegraphics[width=6cm]{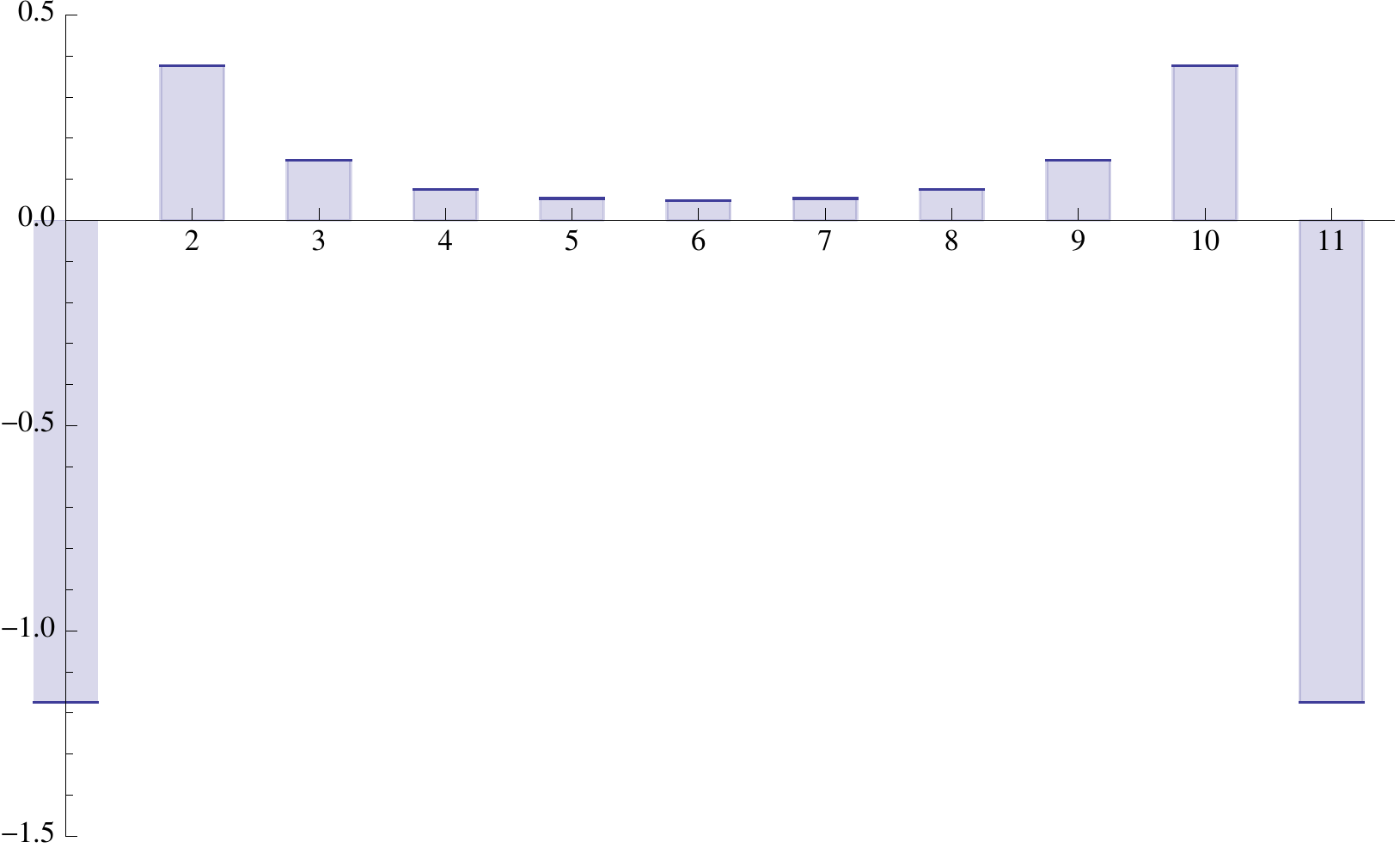}	
	\caption{Optimal strategy $\bm\xi$ for $G$ as in Corollary~\ref{cor-exp-impact} with $\kappa=1$, $\tilde\kappa=1.8$, $\rho=0.3$, $X_0=(-50,1)^\top$, $T=5$, and $ N=11$. Left: $\xi^1_1,\ldots,\xi^1_{11}$, right: $\xi^2_1,\ldots,\xi^2_{11}$.}
	\label{fig-1}
	\end{center}
	\end{figure}


\begin{corollary} \label{cor-exp-impact}
Let $\rho,\kappa,\wt\kappa > 0$ and
$$G(t)=\begin{pmatrix}\exp(-\kappa t) &\rho\exp(-\wt\kappa t)\\\rho\exp(-\wt\kappa t)&\exp(-\kappa t)\end{pmatrix}.$$
\begin{enumerate}
\item $G$ is nonnegative if and only if $\frac{\kappa}{\wt\kappa}\le 1$ and $\rho\le 1$.
\item $G$ is nonincreasing if and only if $\rho\le\frac{\kappa}{\wt\kappa}\le 1$. In this case, it is also nonnegative.
\item $G$ is convex if and only if $\rho\le\frac{\kappa^2}{\wt\kappa^2} \le 1$.
\item If $G$ is nonincreasing, $G$ is positive definite. \item $G$ is commuting.
\end{enumerate}
\end{corollary}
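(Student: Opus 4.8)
The plan is to derive all five assertions by specializing Proposition~\ref{prop-exponential2} to the parameter choice $a_{11}=a_{22}=1$, $a_{12}=a_{21}=\rho$, $b_{11}=b_{22}=\kappa$, and $b_{12}=b_{21}=\wt\kappa$, which is exactly the form of the present kernel $G$. Each of the five characterizations then reduces to an elementary simplification of the corresponding algebraic condition. As a first observation I would record that under these substitutions the common inequality $\min\{b_{12},b_{21}\}\ge\frac12(b_{11}+b_{22})$, which appears in parts (a)--(c) of the proposition, collapses to $\wt\kappa\ge\kappa$, i.e. to $\kappa/\wt\kappa\le1$.

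For part (a), the remaining condition $\frac14(a_{12}+a_{21})^2\le a_{11}a_{22}$ becomes $\rho^2\le1$, hence $\rho\le1$ since $\rho>0$; combined with $\kappa/\wt\kappa\le1$ this yields the stated criterion. For part (b), the condition $\frac14(a_{12}b_{12}+a_{21}b_{21})^2\le a_{11}b_{11}a_{22}b_{22}$ becomes $\rho^2\wt\kappa^2\le\kappa^2$, i.e. $\rho\wt\kappa\le\kappa$, i.e. $\rho\le\kappa/\wt\kappa$; together with $\kappa/\wt\kappa\le1$ this gives $\rho\le\kappa/\wt\kappa\le1$. The supplementary claim that a nonincreasing $G$ is also nonnegative then follows by simply noting that $\rho\le\kappa/\wt\kappa\le1$ forces both $\rho\le1$ and $\kappa/\wt\kappa\le1$, which are precisely the two conditions from part (a). For part (c), the condition $\frac14(a_{12}b_{12}^2+a_{21}b_{21}^2)^2\le a_{11}b_{11}^2a_{22}b_{22}^2$ becomes $\rho^2\wt\kappa^4\le\kappa^4$, i.e. $\rho\le\kappa^2/\wt\kappa^2$; since the first inequality $\wt\kappa\ge\kappa$ is equivalent to $\kappa^2/\wt\kappa^2\le1$, the convexity criterion reads $\rho\le\kappa^2/\wt\kappa^2\le1$.

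Finally, part (d) is immediate from Proposition~\ref{prop-exponential2}(d), since here $a_{12}=a_{21}=\rho$ so the hypothesis of that statement is satisfied; and part (e) follows from Proposition~\ref{prop-exponential2}(e), because the present kernel meets the second of its two alternatives, namely $b_{11}=b_{22}$, $b_{12}=b_{21}$, and $a_{11}=a_{22}$. Because every step is a direct substitution followed by a routine manipulation of strictly positive quantities, there is essentially no obstacle; the only point deserving a moment's care is the passage from the squared inequalities to their square roots, which is legitimate precisely because $\rho$, $\kappa$, and $\wt\kappa$ are all assumed strictly positive, so that taking positive square roots preserves the direction of each inequality.
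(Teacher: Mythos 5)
Your proposal is correct and is exactly the argument the paper intends: the paper's own ``proof'' consists of the single remark that the corollary ``follows immediately from the preceding proposition,'' i.e.\ from Proposition~\ref{prop-exponential2} specialized to $a_{11}=a_{22}=1$, $a_{12}=a_{21}=\rho$, $b_{11}=b_{22}=\kappa$, $b_{12}=b_{21}=\wt\kappa$. Your algebraic simplifications of each condition (and the appeal to the second alternative in part (e)) are all accurate, so you have simply made explicit what the paper leaves to the reader.
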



The following proposition shows that we cannot drop the assumption of  symmetry in Theorem~\ref{convex strict pd} in general.


\begin{proposition} \label{prop-convex-nonposdef}
Let 
$$G(t)=\begin{pmatrix} \exp(-t \wedge 1) & \frac 18 \exp(-2(t\wedge 1)) \\ \frac 18 \exp(-3(t\wedge 1)) & \exp(-t\wedge 1) \end{pmatrix}.$$
$G$ is continuous, convex, nonincreasing, and nonnegative, but not positive definite.
\end{proposition}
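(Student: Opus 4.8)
The plan is to handle the four regularity properties and the failure of positive definiteness by entirely different routes: the former is a direct application of Proposition~\ref{prop-exponential2} combined with the truncation at $t\wedge 1$, while the latter rests precisely on the constant tails that the truncation produces.

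For the regularity properties I would introduce the (untruncated) exponential kernel $E(t):=(a_{ij}e^{-b_{ij}t})_{i,j}$ with $a_{11}=a_{22}=1$, $a_{12}=a_{21}=\tfrac18$, $b_{11}=b_{22}=1$, $b_{12}=2$, $b_{21}=3$, so that $G(t)=E(t\wedge 1)$. Proposition~\ref{prop-exponential2}(a)--(c) then applies: the common rate condition $\min\{b_{12},b_{21}\}=2\ge 1=\tfrac12(b_{11}+b_{22})$ holds, and the three coefficient inequalities $\tfrac14(a_{12}+a_{21})^2=\tfrac1{64}\le 1$, $\tfrac14(a_{12}b_{12}+a_{21}b_{21})^2=\tfrac{25}{256}\le 1$, and $\tfrac14(a_{12}b_{12}^2+a_{21}b_{21}^2)^2=\tfrac{169}{256}\le 1$ are all satisfied, so $E$ is nonnegative, nonincreasing, and convex. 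To transfer this to $G$, I would note that for each $x$ the scalar map $t\mapsto x^\top G(t)x$ equals the nonnegative, nonincreasing, convex function $t\mapsto x^\top E(t)x$ on $[0,1]$ and is constant on $[1,\infty)$; continuity, nonnegativity, and monotonicity are then immediate, and convexity follows because the function is convex on $[0,1]$, affine on $[1,\infty)$, and its left slope at $1$ is $\le 0$, matching the right slope.

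For the failure of positive definiteness I would argue by contradiction through Theorem~\ref{bochner-thm-nonsymm}. Since $G$ is continuous, so is $\wt G$, and if $\wt G$ were positive definite then $\wt G(t)=\int_{\bR}e^{i\gamma t}\,M(d\gamma)$ for a nonnegative definite matrix-valued measure $M$ whose components are finite complex measures; in particular $\wt G_{12}(t)=\int_{\bR}e^{i\gamma t}\,M_{12}(d\gamma)$ with $M_{12}$ finite. Reading off $\wt G_{12}$ from \eqref{tilde G def eq} gives $\wt G_{12}(t)=\tfrac18 e^{-2(t\wedge 1)}$ for $t>0$ and $\wt G_{12}(t)=\tfrac18 e^{-3((-t)\wedge 1)}$ for $t<0$, so that $\wt G_{12}(t)\to\tfrac18 e^{-2}$ as $t\to+\infty$ while $\wt G_{12}(t)\to\tfrac18 e^{-3}$ as $t\to-\infty$. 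The key observation is that a finite complex measure $\nu$ cannot have this behaviour: dominated convergence against $\nu$ yields $\tfrac1T\int_0^T\widehat\nu(t)\,dt\to\nu(\{0\})$ and $\tfrac1T\int_{-T}^0\widehat\nu(t)\,dt\to\nu(\{0\})$, so whenever $\widehat\nu$ possesses limits at $+\infty$ and at $-\infty$ these limits both equal $\nu(\{0\})$ and hence coincide. Applied to $\nu=M_{12}$ this forces $\tfrac18 e^{-2}=\tfrac18 e^{-3}$, which is false. Thus $\wt G$ is not positive definite, i.e.\ $G$ is not a positive definite decay kernel.

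The step I expect to be the crux is recognizing that the obstruction is created by the truncation and not by the exponential decay itself. Indeed, without the $t\wedge 1$ the off-diagonal entries of $E$ decay to $0$ at both ends, the limits at $\pm\infty$ agree, and in fact $E$ is positive definite by Proposition~\ref{prop-exponential2}(d); so the counterexample genuinely needs the frozen, left/right-asymmetric tails, which exist only because $b_{12}=2\neq 3=b_{21}$. In spectral language the distinct limits of $\wt G_{12}$ inject a principal-value $1/(i\gamma)$ singularity into the off-diagonal density at $\gamma=0$ while the even diagonal density stays bounded, so the $2\times2$ density matrix fails to be nonnegative definite near $\gamma=0$ by Lemma~\ref{lemma-eigenvalue2}; the Cesàro-averaging formulation above is simply the cleanest way to package this obstruction without having to regularize a non-integrable density, and the only mildly delicate point in it is the dominated-convergence justification of the Cesàro limits against the finite measure $M_{12}$.
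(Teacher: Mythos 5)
Your proof is correct, and while the regularity part follows the paper's route, your argument for the failure of positive definiteness is genuinely different and worth comparing. For nonnegativity, monotonicity, and convexity, both you and the paper reduce to Proposition~\ref{prop-exponential2} applied to the untruncated exponential kernel; the paper disposes of the truncation tersely via Lemma~\ref{lem-smooth-noninc-convex} (``$-G'$ is nonincreasing''), whereas you spell out the slope-matching at $t=1$ --- same idea, slightly more explicit. The real divergence is in the negative claim. The paper computes, with Mathematica, the inverse Fourier transform of $\wt G$, obtaining $M(d\gamma)=C\Lambda(\gamma)\,d\gamma+D\delta_0(d\gamma)$, and then exhibits (again by computer algebra) an eigenvalue of $\Lambda(\gamma)$ that is negative for $0<|\gamma|<0.02$; by the uniqueness statement in Theorem~\ref{bochner-thm-nonsymm} this rules out positive definiteness. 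You instead observe that $\wt G_{12}$ has distinct limits $\tfrac18e^{-2}$ at $+\infty$ and $\tfrac18e^{-3}$ at $-\infty$ (a direct consequence of the truncation together with $b_{12}\neq b_{21}$), and that the Fourier transform of any finite complex measure $\nu$ cannot do this, since both one-sided Ces\`aro averages converge to $\nu(\{0\})$ by dominated convergence against the kernel $\frac{e^{i\gamma T}-1}{i\gamma T}$, which is bounded by $1$ and tends to $\indf{\{\gamma=0\}}$. Applied to $M_{12}$, whose finiteness is part of the definition of a nonnegative definite matrix-valued measure in Theorem~\ref{bochner-thm-nonsymm}, this gives the contradiction. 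Your argument is fully hand-checkable, needs no computer algebra, and isolates the structural reason the example works (frozen, left/right-asymmetric tails); it also generalizes immediately to any kernel whose $\wt G$ has an entry with unequal limits at $\pm\infty$. What it does not provide, and the paper's computation does, is the explicit spectral picture --- the location (near $\gamma=0$) and quantitative size of the failure of nonnegative definiteness of the density --- which your final heuristic paragraph gestures at but does not need for the proof.
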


\subsection{Linear  decay }

In this section, we analyze  linear decay of price impact for $K=2$ assets. 


\begin{proposition}\label{prop-lin-decay}
Let
$$G(t)=\begin{pmatrix} (a_{11} - b_{11} t)^+ & (a_{12} - b_{12} t)^+\\
(a_{21} - b_{21} t)^+ &  (a_{22} - b_{22} t)^+ \end{pmatrix}
$$
with $a_{11},a_{12},a_{21},a_{22},b_{11},b_{12},b_{21},b_{22}>0$.
\begin{enumerate}
\item $G$ is nonnegative if and only if $\max\{\frac{a_{12}}{b_{12}},\frac{a_{21}}{b_{21}}\}\le \min\{\frac{a_{11}}{b_{11}},\frac{a_{22}}{b_{22}}\}$ and $\frac 14(a_{12}+a_{21})^2\le a_{11}a_{22}$.
\item $G$ is nonincreasing if and only if $\max\{\frac{a_{12}}{b_{12}},\frac{a_{21}}{b_{21}}\}\le \min\{\frac{a_{11}}{b_{11}},\frac{a_{22}}{b_{22}}\}$ and $\frac 14 (b_{12} +b_{21})^2 \le b_{11}b_{22}$. 
\item Assume that $\max\{\frac{a_{12}}{b_{12}},\frac{a_{21}}{b_{21}}\}\le \min\{\frac{a_{11}}{b_{11}},\frac{a_{22}}{b_{22}}\}$ and $a_{12}=a_{21}$. Then, $G$ is positive definite if and only if $G$ is symmetric (i.e.\ $a_{12}=a_{21}$ and $b_{12}=b_{21}$), $\frac{a_{11}}{b_{11}}=\frac{a_{12}}{b_{12}}=\frac{a_{22}}{b_{22}}$ and $b_{12}^2 \le b_{11}b_{22}$. In this case, we set $\lambda=\frac{a_{11}}{b_{11}}$ and have
$$G(t)= (\lambda-t)^+ \begin{pmatrix} b_{11} &  b_{12} \\
b_{12} &  b_{22} \end{pmatrix},$$
and $G$ is also nonincreasing, convex, and commuting.
\end{enumerate}
\end{proposition}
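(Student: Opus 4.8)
The plan is to handle the three parts in order, reducing (a) and (b) to a single scalar determinant inequality and reserving the Fourier-analytic machinery for the substantive part~(c). Throughout I write $\lambda_{ij}:=a_{ij}/b_{ij}$ for the time at which the $(i,j)$ entry reaches $0$ and $g_{ij}(t):=(a_{ij}-b_{ij}t)^+$. Since the properties in Definition~\ref{Property def} depend only on the symmetric part $\frac12(G+G^\top)$, and since a real $2\times2$ matrix with nonnegative diagonal is nonnegative exactly when its determinant is nonnegative, nonnegativity of $G$ is equivalent to
\[ g_{11}(t)g_{22}(t)\ge\tfrac14\big(g_{12}(t)+g_{21}(t)\big)^2\qquad\text{for all }t\ge0, \]
and nonincreasingness is the same inequality with each $g_{ij}$ replaced by the (piecewise constant) slope $b_{ij}\indf{\{t<\lambda_{ij}\}}$.

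For (a) the decisive observation is that $h(t):=\sqrt{g_{11}(t)g_{22}(t)}$ is \emph{concave} wherever it is positive, while $k(t):=\frac12(g_{12}(t)+g_{21}(t))$ is \emph{convex}. Concavity of $h$ follows from the identity $2P''P-(P')^2=-(a_{11}b_{22}-a_{22}b_{11})^2\le0$ for $P:=g_{11}g_{22}$, which makes $h''\le0$; convexity of $k$ is clear because each $(\cdot)^+$ is convex. The ordering condition $\max\{\lambda_{12},\lambda_{21}\}\le\min\{\lambda_{11},\lambda_{22}\}$ confines the support of the off-diagonal entries to $[0,\min\{\lambda_{11},\lambda_{22}\}]$, where $h$ is defined and concave, so $h-k$ is concave on $[0,\max\{\lambda_{12},\lambda_{21}\}]$. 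It is nonnegative at the left endpoint precisely by $\frac14(a_{12}+a_{21})^2\le a_{11}a_{22}$ and at the right endpoint because $k$ vanishes there while $h\ge0$; concavity then yields $h\ge k$ on the whole interval, and beyond the support of the off-diagonals the inequality is trivial. For necessity, a failure of the bound at $t=0$, or of the ordering (which produces a $t$ at which one diagonal entry vanishes while an off-diagonal entry is still positive), exhibits a point of strictly negative determinant.

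Part (b) follows the same scheme, but now the slope matrix is piecewise constant with entries $b_{ij}\indf{\{t<\lambda_{ij}\}}$, so only finitely many intervals need be inspected. Under the ordering condition the off-diagonal slopes switch off no later than the diagonal ones, so on each interval either both diagonal slopes are present — which requires exactly $\frac14(b_{12}+b_{21})^2\le b_{11}b_{22}$ — or any surviving off-diagonal slope is dominated by that bound; conversely a violation of the ordering again produces an interval with a vanishing diagonal slope alongside a positive off-diagonal slope, hence a negative determinant.

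The heart of the proof is (c), where I would invoke Theorem~\ref{bochner-thm-nonsymm}. As each $\wt G_{ij}$ is continuous, integrable, and piecewise linear, $\wt G$ is the Fourier transform of an absolutely continuous matrix measure with continuous Hermitian density $M(z)$, and positive definiteness is equivalent to $M(z)$ being nonnegative definite for every $z$ (passing from a.e.\ to all $z$ by continuity). The diagonal densities are the Fej\'er-type kernels $2b_{ii}(1-\cos\lambda_{ii}z)/z^2\ge0$, and the off-diagonal density — the transform of an \emph{asymmetric} tent — simplifies, because $b_{12}\lambda_{12}=a_{12}=a_{21}=b_{21}\lambda_{21}$ annihilates its odd ($1/z$) part, to $z^{-2}[(b_{12}+b_{21})-b_{12}e^{-i\lambda_{12}z}-b_{21}e^{i\lambda_{21}z}]$. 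After clearing $z^{-4}$, nonnegative definiteness becomes
\[ 4b_{11}b_{22}(1-\cos\lambda_{11}z)(1-\cos\lambda_{22}z)\ge\big|(b_{12}+b_{21})-b_{12}e^{-i\lambda_{12}z}-b_{21}e^{i\lambda_{21}z}\big|^2\quad\text{for all }z. \]
I would read off the conditions in two steps. First, at $z=2\pi/\lambda_{11}$ and $z=2\pi/\lambda_{22}$ the left-hand side vanishes, forcing the right-hand side to vanish; the equality case of the triangle inequality then forces $e^{-i\lambda_{12}z}=e^{i\lambda_{21}z}=1$, i.e. $\lambda_{12}/\lambda_{ii},\lambda_{21}/\lambda_{ii}\in\mathbb Z$, which together with the ordering $\max\{\lambda_{12},\lambda_{21}\}\le\min\{\lambda_{11},\lambda_{22}\}$ is possible only if $\lambda_{11}=\lambda_{12}=\lambda_{21}=\lambda_{22}=:\lambda$. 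Second, with all death times equal, the substitution $s=1-\cos(\lambda z)\in[0,2]$ turns the inequality into $2(b_{11}b_{22}-b_{12}b_{21})s\ge(b_{12}-b_{21})^2$ for all $s\in(0,2]$; letting $s\downarrow0$ forces $b_{12}=b_{21}$, after which it collapses to $b_{12}^2\le b_{11}b_{22}$. Since $\lambda_{ij}=\lambda$ means $a_{ij}=\lambda b_{ij}$, these are exactly the stated conditions and give $G(t)=(\lambda-t)^+B$. The converse is immediate: $(\lambda-|t|)^+$ is positive definite (a nonnegative, nonincreasing, convex kernel, by the P\'olya-type criterion used above) and $B\in\bS_+(2)$, so $G$ is positive definite by Proposition~\ref{prop-decomposable-G}, while the product form makes it plainly nonincreasing, convex, and commuting. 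The step I expect to demand the most care is this first, frequency-matching step: computing the asymmetric-tent transform correctly, noticing the cancellation of its odd part, and justifying the transition from almost-everywhere to everywhere nonnegative definiteness before extracting the commensurability constraints.
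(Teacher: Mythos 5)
Your proof is correct, and on the substantive part (c) it coincides with the paper's own argument: the same Hermitian Fourier density (with the odd $1/\gamma$ part of the off-diagonal tent transform cancelled by $a_{12}=a_{21}$), the same reduction via Theorem~\ref{bochner-thm-nonsymm} and Lemma~\ref{lemma-eigenvalue2} to a pointwise scalar inequality, the same evaluation at $\gamma=2\pi b_{11}/a_{11}$ and $\gamma=2\pi b_{22}/a_{22}$ forcing the ratios $\frac{a_{12}b_{ii}}{b_{12}a_{ii}}$ and $\frac{a_{21}b_{ii}}{b_{21}a_{ii}}$ to be positive integers and hence, by the ordering hypothesis, equal to one, and the same converse via positive definiteness of the P\'olya kernel $(\lambda-|t|)^+$ times the matrix $\bigl(\begin{smallmatrix}b_{11}&b_{12}\\b_{12}&b_{22}\end{smallmatrix}\bigr)\in\bS_+(2)$ (Proposition~\ref{prop-decomposable-G}; the paper dismisses this direction as obvious). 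Two local deviations are worth recording. First, your sufficiency argument in (a) genuinely differs from the paper's, which establishes the determinant inequality by the direct chain
$\frac 14(g_{12}+g_{21})^2\le \frac 14\bigl((a_{12}+a_{21})\max_{ij}(1-\tfrac{b_{ij}}{a_{ij}}t)^+\bigr)^2\le a_{11}a_{22}\bigl(\min_{ii}(1-\tfrac{b_{ii}}{a_{ii}}t)^+\bigr)^2\le g_{11}g_{22}$;
your route via concavity of $\sqrt{g_{11}g_{22}}$ (the identity $2P''P-(P')^2=-(a_{11}b_{22}-a_{22}b_{11})^2$ checks out) against convexity of $\frac12(g_{12}+g_{21})$, verified only at the two endpoints, is equally valid and somewhat slicker, needing only the minor remark that concavity on the open interval where the product is positive extends to its closure by continuity. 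Second, in (c) your limit $s\downarrow0$ to extract $b_{12}=b_{21}$ is sound but redundant: once the frequency-matching step yields $a_{12}/b_{12}=a_{21}/b_{21}$, the standing hypothesis $a_{12}=a_{21}$ already gives $b_{12}=b_{21}$, which is exactly how the paper concludes (\lq\lq in particular, $b_{12}=b_{21}$\rq\rq), after which the inequality collapses at once to $b_{12}^2\le b_{11}b_{22}$.
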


\section{Conclusion}\label{conclusion section}

Our goal in this paper was to analyze a linear market impact model with transient price impact for $K$ different risky assets.   We were in particular interested in the question which properties a decay kernel should satisfy so that the corresponding market impact model has certain desirable features and properties.
Let us summarize some of the main messages for the practical application of transient price impact models that can be drawn from our results. 
\begin{enumerate}
\item To exclude price manipulation in the sense of \citet{hubermanstanzl} and to guarantee the existence of optimal strategies, decay kernels should be  positive definite in the sense of Definition \ref{positive definite kernel def} (Propositions \ref{Cge0 pos def prop} and \ref{prop-opt-strat} and Lemma \ref{Pos def funct Hermitian lemma}).
\item Requiring only positive definiteness is typically not sufficient to guarantee that optimal strategies are well-behaved (Figure~\ref{oscillations-fig}). In particular, the nonparametric estimation of decay kernels can be problematic. 
\item Assuming that the decay kernel is symmetric, nonnegative, nonincreasing, convex, and commuting guarantees that optimal strategies have many desirable properties and can easily be  computed  (Propositions \ref{commuting properties Prop} and \ref{prop-orthonormalbasis}). The additional assumption that $t\mapsto x^\top G(t)x$ is nonconstant for all $x\in\bR^K$ guarantees that optimal strategies are unique (Theorem \ref{convex strict pd} and Proposition \ref{prop-opt-strat}).  In this setting, one can also optimize jointly over time grids and strategies and pass to a continuous-time limit.
\item Matrix functions \eqref{matrix function eq}
 provide a  convenient method for constructing decay kernels satisfying the properties from (c). Optimal strategies for matrix exponential decay can be computed in closed form  (Example \ref{example-matrixexp}).
 \item If the   price impact between all asset pairs decay at a uniform rate, then cross-asset impact can be ignored and one can consider each asset individually (Propositions \ref{prop-decomposable-G} and \ref{prop-L-transl}).
\end{enumerate}

\section{Proofs}
\label{sect-proofs}

\begin{proof}[Proof of Lemma~\ref{expected costs lemma}]Using the continuity of $G$ and the right-continuity of $S^0$, we have 
\begin{eqnarray*}-
\mathbb E[\,\mathscr{R}(\bm\xi)\,]&=&\mathbb E\Big[\,\frac12\sum_{k=1}^N\xi_k^\top  (S^{\bm\xi}_{t_k+}+S^{\bm\xi}_{t_k})\,\Big]\\
&=& \mathbb E\bigg[\,\sum_{k=1}^N\xi_k^\top  S^{0}_{t_k}\,\bigg]+\mathbb E\bigg[\,\frac12\sum_{k=1}^N\xi_k^\top G(0)\xi_k+\sum_{k=1}^N\sum_{\ell=1}^{k-1}\xi_k^\top G(t_k-t_\ell)\xi_\ell\,\bigg].
\end{eqnarray*}
From the martingale property of $S^0$ and the requirement that $\sum_{k=1}^N\xi_k=-X_0$ we obtain that
$$ \mathbb E\bigg[\,\sum_{k=1}^N\xi_k^\top  S^{0}_{t_k}\,\bigg]=\mathbb E\Big[\,\sum_{k=1}^N\xi_k^\top  S_T^0\,\Big]=-X_0^\top S^0_0.
$$
Furthermore,
\begin{eqnarray*}
\lefteqn{\frac12\sum_{k=1}^N\xi_k^\top G(0)\xi_k+\sum_{k=1}^N\sum_{\ell=1}^{k-1}\xi_k^\top G(t_k-t_\ell)\xi_\ell}\\
&=&\frac12\sum_{k=1}^N\xi_k^\top \wt G(0)\xi_k+\frac12 \sum_{k=1}^N\sum_{\ell=1}^{k-1}\xi_k^\top \wt G(t_k-t_\ell)\xi_\ell+\frac12 \sum_{k=1}^N\sum_{\ell=1}^{k-1}\xi_\ell^\top  \wt G(t_\ell-t_k)\xi_k=C_\bT(\bm\xi).
\end{eqnarray*}
This proves \eqref{expected costs formula}. 
\end{proof}


\begin{proof}[Proof of Lemma~\ref{deterministic minimizers lemma}]Suppose that a minimizer $\bm\eta\in \mathscr X(\mathbb T,X_0)$ of $\mathbb E[\,C_\bT(\bm\eta)\,]$ exists but that, by way of contradiction, there is no deterministic minimizer of $C_\bT(\cdot)$. Then there can be no $\bm\xi\in \mathscr X_{\text{\rm det}}(\mathbb T,X_0)$ such that $C_\bT(\bm\xi)\le\mathbb E[\,C_\bT(\bm\eta)\,]$. Since $\bm\eta(\omega)\in \mathscr X_{\text{\rm det}}(\mathbb T,X_0)$ for $\mathbb{P}$-a.e. $\omega$, we must thus have $C_\bT(\bm\eta(\omega))> \mathbb E[\,C_\bT(\bm\eta)\,]$ for $\mathbb P$-a.e. $\omega\in\Omega$. But this is a contradiction. The proofs of the remaining assertions are also obvious and left to the reader.
\end{proof}


\begin{proof}[Proof of Proposition~\ref{Cge0 pos def prop}] The equivalence of conditions (a) and (b) follows from Lemma~\ref{deterministic minimizers lemma}.  

To prove the equivalence of (b) and (c), it is sufficient to observe that $C_\bT(\bm\xi)$ is a quadratic form on $\bR^{|\bT|}\otimes\bR^K$, and it is well known that a quadratic form is convex if and only if it is nonnegative.

We next prove the equivalence of (b) and  (d). Clearly, (d) immediately implies (b) using the representation \eqref{C wt G representation} of $C_\bT(\cdot)$ and comparing it with  \eqref{wt G positive def eq} with $z_i\in\bR^K$.  
For the proof of the converse implication, we fix $t_1,\dots, t_N\in\bR$. , Clearly we can assume without loss of generality that $\bT=\{t_1,\dots,t_N\}$ is a time grid in the sense that $0=t_1<t_2<\cdots<t_N$.  An $N$-tuple $\bm\zeta:=(\zeta_1,\dots,\zeta_N)$ with $\zeta_i\in\bC^K$ can be regarded as an element in the tensor product $\bC^N\otimes\bC^K$. Let us thus define the linear map ${\bm L}:\bC^N\otimes\bC^K\to\bC^N\otimes\bC^K$ by 
\begin{equation}\label{Gamma def eq}
{\bm L} \bm\zeta=\Big(\sum_{j=1}^N\wt G(t_1-t_j)\zeta_j,\sum_{j=1}^N\wt G(t_2-t_j)\zeta_j,\dots, \sum_{j=1}^N\wt G(t_N-t_j)\zeta_j\Big).
\end{equation}
  We claim that ${\bm L}$ is Hermitian.  Indeed, for $\bm\eta, \bm\zeta\in\bC^N\otimes\bC^K$, the inner product in $\bC^N\otimes\bC^K$ between $\bm\eta$ and ${\bm L}\bm\zeta$ is given by
\begin{eqnarray*}
\langle \bm\eta,{\bm L}\bm\zeta\rangle=\sum_{i,j=1}^N\eta_i^*\wt G(t_i-t_j)\zeta_j=\sum_{i,j=1}^N\zeta_j^*\wt G(t_i-t_j)^*\eta_i=\sum_{i,j=1}^N\zeta_j^*\wt G(t_j-t_i)\eta_i=\langle \bm\zeta,{\bm L}\bm\eta\rangle,
\end{eqnarray*}
where we have used the fact that $\wt G(t_i-t_j)^*=\wt G(t_i-t_j)^\top=\wt G(t_j-t_i)$. It follows that the restriction of  ${\bm L}$ to $\bR^N\otimes\bR^K $ is symmetric and, due to condition (b),  satisfies $0\le C_\bT(\bm\xi)=\langle\bm\xi,{\bm L}\bm\xi\rangle$ for all $\bm\xi\in\bR^N\otimes\bR^K$.  By the symmetry of $\bm L$ and since $\bm L$ has only real entries, it follows that $\langle \bm\zeta,{\bm L}\bm\zeta\rangle\ge0$ for all $\bm\zeta\in\bC^N\otimes\bC^K$, which is the same as \eqref{wt G positive def eq} and hence yields (d). The remaining assertions are obvious. 
\end{proof}


\begin{proof}[Proof of Proposition~\ref{prop-opt-strat}] We first show the existence of optimal strategies when $\wt G$ is positive definite. We will use the notation introduced in the proof of Proposition~\ref{Cge0 pos def prop}. For $X_0\in\bR^K$ and $\bT$ with $N=|\bT|$ fixed, the minimization of $C_\bT(\bm\xi)$ over $\bm\xi\in\mathscr{X}_{\text{det}}(\bT,X_0)$ is equivalent to the minimization of the symmetric and positive semidefinite quadratic form $\bR^{N}\otimes\bR^K \ni \bm\xi\mapsto\langle \bm\xi, {\bm L}\bm\xi\rangle$ under the equality constraint $A\bm\xi=X_0$, where $\bm L$ is as in \eqref{Gamma def eq} and $A:\bR^N\otimes\bR^K\to\bR^K$ is the linear map   $ A\bm\xi:=\sum_{k=1}^N\xi_k$. For fixed $\bm\eta\in \mathscr{X}_{\text{det}}(\bT,X_0)$, every other $\bm\xi \in\mathscr{X}_{\text{det}}(\bT,X_0)$ can be written as $\bm\xi=\bm\eta+\bm\xi^0$ for some $\bm\xi^0\in \mathscr{X}_{\text{det}}(\bT,0)$. Then, due to the symmetry of $\bm L$,
$$\langle \bm\xi, {\bm L}\bm\xi\rangle = \langle \bm\eta, {\bm L}\bm\eta\rangle+2\langle {\bm L}\bm\eta, \bm\xi^0\rangle+\langle \bm\xi^0, {\bm L}\bm\xi^0\rangle,
$$
and our problem is now equivalent to the unconstraint minimization of the right-hand expression over $\bm\xi^0\in \mathscr{X}_{\text{det}}(\bT,0)$. Clearly, $\bm L\bm\xi^0=0$ implies that also $2\langle {\bm L}\bm\eta, \bm\xi^0\rangle=2\langle \bm\eta, {\bm L}\bm\xi^0\rangle=0$. Therefore the existence of minimizers follows from Section 2.4.2 in  \citet{Boot}.

 The uniqueness of optimal strategies for strictly positive definite $\wt G$ follows immediately from the strict convexity of $\bm\xi\mapsto C_\bT(\bm\xi)$ (see Proposition~\ref{Cge0 pos def prop}). The characterization of optimal strategies through Lagrange multipliers as in \eqref{eq-lagrange-mult} is standard.\end{proof}


\begin{proof}[Proof of Lemma~\ref{Pos def funct Hermitian lemma}] (a) That $H(0)$ is  nonnegative definite follows by taking $N=1$ in \eqref{wt G positive def eq}. To show  $H(-t)=H(t)^*$ for any given $t\in\bR$ we take $N=2$ in \eqref{wt G positive def eq} and let $t_1=0$ and $t_2=t$. It follows from the preceding assertion that  $ z_1^*H(-t) z_2+ z_2^*H(t) z_1$ must be a real number for all $ z_1, z_2\in\bC^K$. Taking $ z_1=c_1e_i$ and $ z_2=c_2e_j$ with $c_k\in\bC$ and $e_\ell$ denoting  the $\ell^{\text{th}}$ unit vector in $\bR^K$ yields that $\overline{c_1}c_2H_{ij}(-t)+c_1\overline{c_2}H_{ji}(t)\in\bR$, where $\overline{c}$ denotes the complex conjugate of $c\in\bC$. Choosing $c_1=c_2=1$ gives $\imag (H_{ij}(-t))=-\imag (H_{ji}(t))$ and $c_1=1, c_2=i$ yields $\real (H_{ij}(-t))=\real (H_{ji}(t))$. 

(b) For $ t_1,\dots,t_N\in\bR$, we define  $\tilde{t}_i=t_N-t_{N+1-i}$ and get from part (a) that for $\bm\zeta\in\bC^N\otimes\bC^K$
\begin{eqnarray*}
0 &\le& \sum_{i,j=1}^N\zeta_{N+1-i}^*  H (\tilde{t}_i-\tilde{t}_j)\zeta_{N+1-j} = \sum_{i,j=1}^N\zeta_{N+1-i}^*  H (-(t_{N+1-i}-t_{N+1-j}))\zeta_{N+1-j}\\
&=& \sum_{i,j=1}^N\zeta_i^*  H(t_i-t_j)^*\zeta_j.
\end{eqnarray*}
\end{proof}


\begin{proof}[Proof of Corollary~\ref{Symmetric Bochner Cor}] For the proof of implication (c)$\Rightarrow$(b), we note first that the matrix $M(d\gamma)$ is symmetric, as $M$ is nonnegative definite and $\bR^{K\times K}$-valued. This implies that the matrix $H(t)$ is also symmetric for all $t$. Next,
the symmetry of $M$ on $\bR$ implies that the imaginary part of $\int_\bR e^{i\gamma t}\,M_{k\ell}(d\gamma)$ is equal to 
$\int_0^\infty\big(\sin(t\gamma)+\sin(-t\gamma)\big)\,M_{k\ell}(d\gamma)=0$. Therefore, $H$ takes values in $\bR^{K\times K}$ and, in turn, in $\bS(K)$. We next
 define a finite $\bR_+$-valued measure $\mu$ through $\mu(A):=x^\top M(A) x$ for $A\in\mathscr{B}(\bR)$. Then the function $t\mapsto x^\top H(t)x$ is the Fourier transform of $\mu$ and hence a positive definite function by Bochner's theorem. 
 
To prove (b)$\Rightarrow$(a), we will establish condition (b) of Theorem~\ref{bochner-thm-nonsymm}. To this end, write $z\in\bC^K$ as $z=x+iw$, where $x,w\in\bR^K$ and $i=\sqrt{-1}$. Then $z^*H(t)z=x^\top H(t)x+w^\top H(t)w$ due to the symmetry of $H(t)$. Hence $t\mapsto z^*H(t)z$ is the sum of two real-valued positive definite functions and therefore positive definite. 

To prove (a)$\Rightarrow$(c), note that each component $H_{k\ell}$ of $H$ is equal to the Fourier transform of the complex measure $M_{k\ell}$. Since $M_{k\ell}$ is uniquely determined through $H_{k\ell}$ and since $H_{k\ell}=H_{\ell k}$ we must have that $M_{k\ell}=M_{\ell k}$. But a symmetric matrix can be nonnegative definite, and hence Hermitian, only if it is real. Therefore we must have $M(A)\in\bS_+(K)$ for all $A\in\mathscr{B}(\bR)$.  Finally, the fact that the symmetric positive definite matrix-valued function $H_{k\ell}$ takes only real values  implies via Lemma \ref{Pos def funct Hermitian lemma} (a) that $H(-t)=H(t)$. Therefore, $H$ is equal to the Fourier transform of  the measure $N(A):=\frac12(M(A)+M(-A))$, $A\in\mathscr{B}(\bR)$. But, since
$M$ is uniquely determined by $H$ according to  Theorem \ref{bochner-thm-nonsymm}, we get that $N=M$, and so $M$ must be symmetric on $\bR$.
\end{proof}


\begin{proof}[Proof of Lemma~\ref{prop-nonneg}]
We assume by way of contradiction that there exist  $x\in\mathbb R^K$, $t^* > 0$ and $\varepsilon>0$ such that $g^x(t):=x^\top G(t)x$ satisfies $g^x(t^*)=-\varepsilon$. We are going to show that the function $g^x$ is not positive definite. Set $t_k = k\cdot t^*$ and $x_k = 1$ for $k \in\mathbb N$. Since $|t_k-t_l| \ge t^*$ for $k\ne l$ and $g^x$ is nonincreasing, we have $g^x(|t_k-t_l|)\le -\varepsilon$ for $k \ne l$. Thus, $\sum_{k,l=1}^n x_k x_l g^x(|t_k-t_l|) \le n g^x(0) - (n^2 - n) \,\varepsilon$. If $n$ is large enough, the latter expression is negative. Thus, $g^x$ is not positive definite, and so $G$ can not be positive definite.
\end{proof}


We now start preparing the proof of Theorem~\ref{convex strict pd} and give a representation of a convex, nonincreasing, nonnegative, and symmetric  function $G:[0,\infty)\to\bR^{K\times K}$. To this end, let us first observe  that, for such $G$,  the limit $G(\infty):=\lim_{t\uparrow \infty}G(t)$ is well defined in the set of nonnegative definite  matrices. Indeed, for any~$x\in\mathbb R^K$, $g^x(t)=x^\top G(t)x$ is a convex, nonincreasing, nonnegative function and thus converges to a limit that we denote by  $g^x(\infty)$.  Let $e_i$ denote the $i^{\text{th}}$ unit vector. By polarization, we have $G_{ij}(t)=\frac{1}{4}(g^{e_i+e_j}(t)-g^{e_i-e_j}(t))$, and this expression converges to $G_{ij}(\infty)=\frac{1}{4}(g^{e_i+e_j}(\infty)-g^{e_i-e_j}(\infty))$. In particular, we have  $g^x(\infty)=x^\top G(\infty)x$ for any $x\in\bR^K$. 


\begin{proposition}\label{convex Fourier Prop}
Let $G:[0,\infty)\to\bR^{K\times K}$ be  convex, nonincreasing, nonnegative, symmetric, and continuous. There exists a nonnegative Radon measure $\mu$ on $(0,\infty)$ and a measurable function $\Lambda:(0,\infty)\to \bS_+(K)$ such that
\begin{equation}\label{convex G integral rep eq}
G(t)=G(\infty) + \int_{(0,\infty)} (r-t)^+ \Lambda(r) \, \mu(dr).
\end{equation}
Furthermore, $G$ is the Fourier transform of the nonnegative definite matrix-valued measure
$$M(d\gamma)=G(\infty)\,\delta_0(d\gamma)+\Phi(\gamma)\,d\gamma,
$$
where $\Phi:\mathbb R\to \bS_+(K)$ is the continuous function given by
$$\Phi(x)=\frac1\pi\int_{(0,\infty)}\frac{1-\cos xy}{x^2}\Lambda(y)\,\mu(dy).
$$
\end{proposition}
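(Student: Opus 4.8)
The plan is to reduce the matrix statement to the scalar theory through the family of quadratic functions $g^x(t):=x^\top G(t)x$ and then to reassemble the pieces by polarization. For each fixed $x\in\bR^K$ the function $g^x$ is convex, nonincreasing, nonnegative and continuous, and as noted before the proposition the limit $g^x(\infty)=x^\top G(\infty)x$ exists. Being convex on $[0,\infty)$, $g^x$ has a nonnegative Radon measure $(g^x)''$ as its distributional second derivative, and its right derivative satisfies $(g^x)'(t)\uparrow 0$ as $t\to\infty$ (a strictly negative limit would force $g^x\to-\infty$, contradicting nonnegativity). Writing $g^x(t)-g^x(\infty)=-\int_t^\infty (g^x)'(s)\,ds=\int_t^\infty\int_{(s,\infty)}(g^x)''(dr)\,ds$ and applying Tonelli yields the scalar representation $g^x(t)=g^x(\infty)+\int_{(0,\infty)}(r-t)^+\,(g^x)''(dr)$. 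In particular $\int_{(0,\infty)} r\,(g^x)''(dr)=g^x(0)-g^x(\infty)<\infty$, so each measure $(g^x)''$ is finite with finite first moment.

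Next I would upgrade this to a matrix representation. Define the matrix-valued signed measure $G''=(G_{ij}'')_{i,j}$ componentwise as the distributional second derivative of $G_{ij}$; by polarization $G_{ij}''=\tfrac14\big((g^{e_i+e_j})''-(g^{e_i-e_j})''\big)$ and, more importantly, $x^\top G''(A)\,x=(g^x)''(A)\ge0$ for every Borel set $A$ and every $x$, so that $G''(A)\in\bS_+(K)$; thus $G''$ is a nonnegative definite matrix-valued measure in the sense of Section~\ref{section-2}. Applying the scalar identity coordinatewise gives $G(t)=G(\infty)+\int_{(0,\infty)}(r-t)^+\,G''(dr)$. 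To extract the stated scalar measure $\mu$ together with an $\bS_+(K)$-valued density $\Lambda$, set $\mu:=\operatorname{tr}G''=\sum_k G_{kk}''$, a finite nonnegative measure. The nonnegative definiteness of $G''$ forces each $G_{ij}''$ to be absolutely continuous with respect to $\mu$ (if $\mu(A)=0$ then $G_{kk}''(B)=0$ for all Borel $B\subseteq A$ and all $k$, and a nonnegative definite matrix with vanishing diagonal is zero, so $G_{ij}''(B)=0$). The Radon--Nikodym densities $\Lambda_{ij}:=dG_{ij}''/d\mu$ assemble into $\Lambda(r)$, which is symmetric, satisfies $x^\top\Lambda(r)x\ge0$ for $\mu$-a.e.\ $r$, and has $\operatorname{tr}\Lambda(r)=1$ $\mu$-a.e.; redefining $\Lambda$ on a $\mu$-null set makes it $\bS_+(K)$-valued everywhere with operator norm at most $1$. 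This establishes \eqref{convex G integral rep eq}.

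For the Fourier statement I would first record the elementary tent-function identity $(r-|t|)^+=\tfrac1\pi\int_\bR e^{i\gamma t}\,\frac{1-\cos(r\gamma)}{\gamma^2}\,d\gamma$, obtained by computing $\int_{-r}^r(r-|t|)e^{-i\gamma t}\,dt=2\,\frac{1-\cos(r\gamma)}{\gamma^2}$ and inverting. Since $G$ is symmetric, its even extension is $\wt G(t)=G(|t|)$, and for $t\ge0$ one has $(r-t)^+=(r-|t|)^+$, so inserting the tent identity into \eqref{convex G integral rep eq} gives $\wt G(t)=G(\infty)+\int_{(0,\infty)}\big(\tfrac1\pi\int_\bR e^{i\gamma t}\tfrac{1-\cos(r\gamma)}{\gamma^2}\,d\gamma\big)\Lambda(r)\,\mu(dr)$. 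Interchanging the two integrals by Fubini, which is legitimate because $\int_\bR\int_{(0,\infty)}\tfrac{1-\cos(r\gamma)}{\gamma^2}\,\mu(dr)\,d\gamma=\int_{(0,\infty)}\pi r\,\mu(dr)<\infty$ by the finite first moment, moves the inner integral into $\Phi(\gamma)=\tfrac1\pi\int_{(0,\infty)}\tfrac{1-\cos(r\gamma)}{\gamma^2}\Lambda(r)\,\mu(dr)$ and identifies the constant $G(\infty)$ as the contribution of $G(\infty)\,\delta_0(d\gamma)$. Hence $\wt G(t)=\int_\bR e^{i\gamma t}M(d\gamma)$ with $M$ as stated. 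That $\Phi(\gamma)\in\bS_+(K)$ is immediate from $\Lambda(r)\in\bS_+(K)$ and $\frac{1-\cos(r\gamma)}{\gamma^2}\ge0$, while continuity follows from dominated convergence using $\frac{1-\cos(r\gamma)}{\gamma^2}\le\min\{r^2/2,\,2/\gamma^2\}$ and the finiteness of $\mu$, and $M$ is a finite nonnegative definite matrix-valued measure because $\operatorname{tr}M(\bR)=\operatorname{tr}G(\infty)+\int r\,\mu(dr)<\infty$.

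I expect the main obstacle to be the common-measure decomposition $G''=\Lambda\mu$ with an $\bS_+(K)$-valued density: this is the only genuinely multivariate step, and it hinges on recognizing $G''$ as a nonnegative definite matrix-valued measure and invoking the matrix Radon--Nikodym theorem with respect to its trace. The remaining care lies in justifying the Fubini interchange, for which the finiteness of the first moment $\int r\,\mu(dr)=\operatorname{tr}(G(0)-G(\infty))$ is precisely what is required.
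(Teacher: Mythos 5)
Your proposal is essentially sound and follows the same overall strategy as the paper's proof --- reduce to the scalar functions $g^x$, obtain a tent representation, find a single dominating scalar measure carrying an $\bS_+(K)$-valued Radon--Nikodym density, and then pass to the Fourier picture --- but it differs in execution in instructive ways. Where the paper simply cites Lemmas 4.1 and 4.2 of \citet{gatheralschiedslynko} for the scalar tent representation \emph{and} the scalar Fourier formula (and then polarizes the latter), you re-derive both: the tent representation from the distributional second derivative, and the Fourier statement from the triangle-function identity $(r-|t|)^+=\frac1\pi\int_\bR e^{i\gamma t}\frac{1-\cos(r\gamma)}{\gamma^2}\,d\gamma$ plus Fubini, justified by the finite first moment $\int r\,\mu(dr)=\operatorname{tr}\bigl(G(0)-G(\infty)\bigr)$. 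Your choice of dominating measure $\mu=\operatorname{tr}G''$ is also cleaner than the paper's $\mu=\sum_{z\in Z}\mu_z$ over $Z=\{e_i\pm e_j\}$: since you know $x^\top G''(A)x=(g^x)''(A)\ge0$ for \emph{every} Borel set $A$, the $\mu$-a.e.\ nonnegative definiteness of $\Lambda$ is immediate (modulo the rational-$x$ plus continuity step, which you should spell out and which the paper does via the null set $N=\bigcup_{x\in\mathbb{Q}^K}N_x$), whereas the paper must first establish the identity $\mu_x(dr)=\sum_{i,j}x_ix_j\Lambda_{ij}(r)\,\mu(dr)$ by differentiating the tent representation.

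One claim is wrong and needs repair: $(g^x)''$, and hence $\mu=\operatorname{tr}G''$, need \emph{not} be finite measures; finiteness of the first moment does not control the mass near $0$. For example, $g(t)=(1-\sqrt t\,)^+$ is convex, nonincreasing, nonnegative, and continuous, with $g''(dr)=\tfrac14 r^{-3/2}\,dr$ on $(0,1)$ plus an atom at $r=1$, so $g''\bigl((0,1)\bigr)=\infty$ while $\int r\,g''(dr)<\infty$. This does not break the architecture of your proof: $G''$ is still a locally finite (Radon) measure on $(0,\infty)$, so $\sigma$-finiteness suffices for Radon--Nikodym, and your Fubini step only invokes the first moment, which is genuinely finite. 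But it does invalidate your stated justification for the continuity of $\Phi$ (``dominated convergence using \dots the finiteness of $\mu$''). The correct domination, for $|\gamma|\ge\epsilon>0$, is $\frac{1-\cos(r\gamma)}{\gamma^2}\le\min\{r^2/2,\,2/\epsilon^2\}$, which is $\mu$-integrable because $\int_{(0,1)}r^2\,\mu(dr)\le\int_{(0,1)}r\,\mu(dr)<\infty$ and $\mu([1,\infty))\le\int_{[1,\infty)}r\,\mu(dr)<\infty$; this yields continuity on $\bR\setminus\{0\}$. Continuity \emph{at} $\gamma=0$ would require the second moment $\int r^2\,\mu(dr)$ to be finite, which does not follow from the hypotheses; this defect is inherited from the proposition's statement itself, and the paper's proof sidesteps it only by delegating continuity to the cited scalar lemma, so you are not worse off here --- but your write-up should not pretend the issue is settled by ``finiteness of $\mu$.''
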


\begin{proof}By Lemmas 4.1 and 4.2 in~\citet{gatheralschiedslynko}, we find that for every $x\in\mathbb R^K$ there is a Radon measure $\mu_x$ on $(0,\infty)$ such that
\begin{equation}\label{def_eta_xi}g^x(t)=g^x(\infty)+\int_{(0,\infty)}(r-t)^+ \,\mu_x(dr), \qquad t>0.
\end{equation}
Moreover, 
 $g^x(t)$ is the Fourier transform of the following nonnegative Radon measure on $\mathbb R$
\begin{equation}\label{def_mu_xi}\mu_ x(d t)= x^\top G(\infty)   x \,\delta_0(dt)+\varphi_{ x}(t)\,dt,
\end{equation}
where $\delta_0$ is the Dirac measure concentrated in $0$ and 
\begin{equation}\label{def_phi_xi} \varphi_{ x}(t)=\frac{1}{\pi} \int_{(0,\infty)} \frac{1-\cos ty}{t^2}\,\mu_ x(dy).
\end{equation}
 We consider the finite set  $Z:=\{e_i \pm e_j\,|\,i,j=1,\dots,K\}$ and define $\mu=\sum_{z\in Z}\mu_z$. Then each $\mu_z$ with $z\in Z$ is absolutely continuous with respect to $\mu$ and has the Radon-Nikodym derivative $\lambda_z=d\mu_z/d\mu$. We set 
\begin{equation}\label{wt Gamma}
\Lambda_{ij}(r):= \frac14\Big(\lambda_{e_i+e_j}(r)-\lambda_{e_i-e_j}(r)\Big),\qquad r> 0.
\end{equation}
Clearly, $\Lambda_{ij}(r)=\Lambda_{ji}(r)$, and it remains to prove that $\Lambda$ is $\mu$-a.s.~nonnegative definite. Let $x \in \mathbb R^K$. Since $g^x(t)=\frac14\sum_{i,j=1}^K x_i x_j ( g^{e_i+e_j}(t)-g^{e_i-e_j}(t))$, we necessarily have from~\eqref{def_eta_xi}:
$$g^x(t)-g^x(\infty)=\int_{(0,\infty)}(r-t)^+\, \mu_ x(dr) = \int_{(0,\infty)}(r-t)^+\sum_{i,j=1}^K  x_i  x_j \Lambda_{ij}(r) \, \mu(dr),\qquad t>0. $$
Writing $(r-t)^+$ as $\int_0^\infty\indf{\{t\le s<r\}}\,ds$, integrating by parts, and taking derivatives with respect to $t$ gives 
$$0\le \mu_ x\big((t,+\infty)\big)=\int_{(t,+\infty)} \sum_{i,j=1}^K  x_i  x_j \Lambda_{ij}(r) \, \mu(dr)$$ for any $t\ge 0$ and so $\sum_{i,j=1}^K  x_i  x_j \Lambda_{ij}(r)\ge 0$ for $r \not \in N_ x$, where $N_x$ is such that $\mu(N_ x)=0$. We define $N=\bigcup_{ x \in\mathbb Q^K}N_ x$. Then $\mu(N)=0$ and, by continuity, $\sum_{i,j=1}^K  x_i  x_j \Lambda_{ij}(r)\ge 0$ for all $ x \in\mathbb R^K$ and $r\not \in N$.

Now that we have shown $\mu_ x(dr)=\sum_{i,j=1}^K  x_i  x_j \Lambda_{ij}(r) \, \mu(dr)$, we get \eqref{convex G integral rep eq} from~\eqref{def_eta_xi}.  Next, we obtain from~\eqref{def_phi_xi} that
$$\varphi_{ x}(t)=\frac{1}{\pi} \int_{(0,\infty)} \frac{1-\cos ty}{t^2} \sum_{i,j=1}^K  x_i  x_j \Lambda_{ij}(y)  \, \mu(dy).$$
Again, we define by polarization $\Phi(t)_{ij}=\frac14( \varphi_{e_i+e_j}(t)-\varphi_{e_i-e_j}(t))$. We then have
$$ \Phi(t)=\frac{1}{\pi} \int_{(0,\infty)} \frac{1-\cos ty}{t^2} \Lambda(y)  \, \mu(dy),$$
and  $\varphi_{ x}(t)= x^\top\Phi(t) x$. Together with~\eqref{def_mu_xi}, this gives the claim. 
\end{proof}


\begin{proof}[Proof of Theorem~\ref{convex strict pd}]  From Theorem~\ref{bochner-thm-nonsymm} and the fact that $g^x$ is positive definite for each $x$, we already know that $G$ is positive definite. 
Note also that $G$ cannot be strictly positive definite if there exists some nonzero $x\in\bR^K$ such that $t\mapsto x^\top G(t)x$ is constant, for then the choice $z_1=x$ and $z_2=-x$ gives 
$\sum_{i,j=1}^2z_i^*G(t_i-t_j)z_j=0$ for all $t_1,t_2\in\bR$. 

It thus remains to show that $G$ strictly definite positive if $\zeta^\top G(t)\zeta$ is nonconstant for any $\zeta \in \mathbb R^K$.
We argue first  that, in proving this assertion,  we can assume without loss of generality that $G$ is continuous. To this end, consider again the  functions $g^x(t):=  x^\top G(t)x$ for $x\in\bR^K$. As these functions are convex and nonincreasing, they are continuous on $(0,\infty)$ and admit right-hand limits, $g^x(0+):=\lim_{t\downarrow0}g^x(t)\le g^x(0)$. Using polarization as in the paragraph preceding Proposition \ref{convex Fourier Prop}, we thus conclude that $G$ is continuous on $(0,\infty)$, admits a right-hand limit $G(0+)$, and that $\Delta G(0):=G(0)-G(0+)$ is nonnegative definite. On the other hand, the continuous matrix-valued function $G^{cont}(t):=G(t+)$ also satisfies our assumptions and so will be strictly positive definite when the assertion has been established for continuous matrix-valued  functions. But then $G(t)=G^{cont}(t)+\indf{\{0\}}(t)\Delta G(0)$ will also be strictly positive definite, because $\Delta G(0)$ is nonnegative definite.

 Now, let  $M$, $\Phi$, $\Lambda$, and $\mu$ be as in Proposition~\ref{convex Fourier Prop}.
It follows from this proposition that for $\bm\zeta=( \zeta_1, \zeta_2, \ldots, \zeta_N)\in \mathbb C^N\otimes\mathbb C^K$ and   $ t_1 , t_2 ,\ldots , t_N\in\bR$
\begin{align*}
   \sum_{k,\ell=1}^N \zeta_k^* \widetilde G(t_k - t_\ell) \zeta_\ell  &=    \sum_{k,\ell=1}^N \zeta_k^* \left(\int_{\mathbb R} e^{i(t_k-t_\ell)\gamma} \,M(d{\gamma})\right) \zeta_\ell \nonumber\\
& =  \Big(\sum_{k=1}^N \zeta_k\Big)^* G(\infty)\Big(\sum_{k=1}^N  \zeta_k\Big)+ \int\Big(\sum_{k=1}^N e^{-it_k\gamma}\zeta_k\Big)^*\Phi(\gamma)\Big(\sum_{k=1}^N e^{-it_k\gamma}\zeta_k\Big)\, d\gamma\\
&=v(0)^* G(\infty)v(0)+ \int v(\gamma)^*\Phi(\gamma)v(\gamma)\,d\gamma,\nonumber
\end{align*}
where $v(\gamma):=\sum_{k=1}^N e^{-it_k\gamma}\zeta_k$. We are now going to show that  $\int v(\gamma)^*\Phi(\gamma)v(\gamma)\,d\gamma$  is strictly positive unless $\bm\zeta=0$.  To this end, we note first that the components of the vector field $v(\cdot)$ are holomorphic functions of $\gamma\in\mathbb C$. When $\bm\zeta\neq0$, at least one of these components is nonconstant and hence vanishes only for at most countably many $\gamma\in\bR$.  It follows that $v(\gamma)\neq0$ for all but countably many $\gamma\in\mathbb R$. Moreover, we are going to argue next that the matrix $\Phi(\gamma)$ is strictly positive definite for all but countably many $\gamma\in\mathbb R$.  Thus, $v(\gamma)^*\Phi(\gamma)v(\gamma)>0$ for Lebesgue-almost every $\gamma\in\mathbb R$, and it will follow that $\sum \zeta_k^* \widetilde G(t_k - t_\ell) \zeta_\ell>0$. 

So let us show now that the matrix $\Phi(\gamma)$ is strictly positive definite for all but countably many $\gamma\in\mathbb R$. To this end, we first note that for $z\in\bC^K$
$$g^z (t)=z ^* G(\infty)z +\int_{(0,\infty)}(r-t)^+z ^* \Lambda(r)z \,\mu(dr).
$$
Since the matrix $\Lambda(r)$ is nonnegative definite for all $r$, the fact that $g^z $ is nonconstant for $z \neq0$ implies that 
\begin{equation}\label{zeta Lambda zeta >0}
\int_{(0,\infty)}z ^* \Lambda(r)z \,\mu(dr)>0\qquad\text{for $z \neq0$.}
\end{equation}
Now let $D$ be the set of all $y>0$ such that $\mu(\{y\})>0$, and let
$$\mu_d(E):=\mu(D\cap E)\quad\text{and}\quad \mu_c(E):=\mu(D^c\cap E)
$$
be the discrete and continuous parts of $\mu$, respectively. 
 Clearly,
$$N:=\big\{x\in\mathbb R\,\big|\,\cos xy=1\text{ for some $y\in D$}\big\}=\bigcup_{y\in D}\big\{x\in\mathbb R\,\big|\,\cos xy=1\big\}
$$ 
is at most countable.  Moreover, the set $\{y>0\,|\,\cos xy=1\}$ is a $\mu_c$-nullset for all $x\neq0$. It thus follows that the measure $\frac{1-\cos xy}{x^2}\,\mu(dy)$ is equivalent to $\mu$ for all $x\notin N\cup\{0\}$. Therefore \eqref{zeta Lambda zeta >0} implies that 
$$z ^*\Phi(x)z =\frac1\pi\int_{(0,\infty)}\frac{1-\cos xy}{x^2}z ^* \Lambda(r)z \,\mu(dy)>0
$$
for all $z \neq0$ as long as $x\notin N\cup\{0\}$. This concludes the proof. 
\end{proof}


\begin{proof}[Proof of Proposition~\ref{commuting properties Prop}]
We first prove \eqref{simultaneous diagonalization eq}. To this end, give a constructive proof for the existence of~$O$. For~$t \in \mathbb R$, we write $\mathbb R^K=E_{\lambda_1(t)}^t\oplus \dots\oplus E_{\lambda_{\ell(t)} (t)}^{t} $ for the orthogonal direct sum of the eigenspaces of~$G(t)$ corresponding to the distinct eigenvalues $\lambda_1(t),\dots\lambda_{\ell(t)}(t)$ of $G(t)$. It follows from the commuting property  that the eigenspaces of $G(t)$ are stable under the map~$G(s)$, because $\lambda_i(t)G(s)v=G(t)G(s)v$ if $v \in  E_{\lambda_i(t)}$. 

Let $t_1\ge 0$ and define $D_1=\sum_{i=1}^{\ell(t_1)} (\dim(E_{\lambda_i(t_1)}^{t_1})-1)^+ \le K-1$. If for any $s\ge 0$ and $1 \le i\le {\ell(t_1)} $ there is $\mu_i(s)$ such that $G(s)v=\mu_i(s)v$ for any $v\in E_{\lambda_i(t_1)}^{t_1}$, we are done by considering an orthonormal basis $(v^1_i,\dots,v^{d_i}_i)$ of each eigenspaces $E_{\lambda_i(t_1)}^{t_1}$ and by setting $O=(v^1_1,\dots,v^{d_1}_1,\dots,v^1_\ell,\dots,v^{d_\ell}_\ell)$ for $\ell:=\ell(t_1)$. This is necessarily the case if $D_1=0$. Otherwise, there is $t_2\ge 0$ such that, for at least one $i\in\{1,\dots,\ell(t_1)\}$, the decomposition
$$ E_{\lambda_i(t_1)}^{t_1}=(E_{\lambda_1(t_2)}^{t_2}\cap E_{\lambda_i(t_1)}^{t_1} )\oplus \dots\oplus (E_{\lambda_{\ell(t_2)} (t_2)}^{t_2}\cap E_{\lambda_i(t_1)}^{t_1})$$
is such that
$$\dim (E_{\lambda_i(t_1)}^{t_1})-1 > \sum_{k=0}^{\lambda_{\ell(t_2)}} \big(\dim(E_{\lambda_k(t_2)}^{t_2} \cap E_{\lambda_i(t_1)}^{t_1} )-1\big)^+.$$ We write
$$ \mathbb R^K = \bigoplus_{1\le i_1\le \ell(t_1),1 \le i_2\le {\ell(t_2)}}E_{\lambda_{i_1} (t_1)}^{t_1}\cap E_{\lambda_{i_2} (t_2)}^{t_2}$$  
and have $D_2=\sum_{i_1=1}^{\ell(t_1)} \sum_{i_2=1}^{\ell(t_2)} (\dim(E_{\lambda_{i_1} (t_1)}^{t_1}\cap E_{\lambda_{i_2} (t_2)}^{t_2})-1 )^+ < D_1$. Once again, we are done if there is for any $s\ge 0$, $1 \le i_1\le {\ell(t_1)} $ and $1 \le i_2\le {\ell(t_2)} $, $\mu_{i_1,i_2}(s)$ such that $G(s)v=\mu_{i_1,i_2}(s)v$ for any $v\in E_{\lambda_{i_1} (t_1)}^{t_1}\cap E_{\lambda_{i_2} (t_2)}^{t_2}$. This is the case when $D_2=0$. Otherwise, there is $t_3$ such that 
$$D_3=\sum_{i_1=1}^{\ell(t_1)} \sum_{i_2=1}^{\ell(t_2)}\sum_{i_3=1}^{\ell(t_3)} \big(\dim(E_{\lambda_{i_1} (t_1)}^{t_1}\cap E_{\lambda_{i_2} (t_2)}^{t_2}\cap E_{\lambda_{i_3} (t_3)}^{t_3} )-1 \big)^+ < D_2$$ and we repeat this procedure at most $K$~times to get \eqref{simultaneous diagonalization eq}.

We now prove properties (a)---(e).
Let $v_i$ be the $i^{\text{th}}$ column of $O$. Then $v_i$ is the eigenvector of $G_i(t)$ for the eigenvalue $g_i(t)$.  A given $x\in\mathbb{R}^K$ can be written as $x=\sum_{i=1}^K\alpha_iv_i$. Then $Ox=\sum_{i=1}^K\alpha_ie_i$, where $e_i$ is the $i^{\text{th}}$ unit vector. It follows from \eqref{simultaneous diagonalization eq} that 
$g^x(t)=\sum_{i=1}^K\alpha_i^2g_i(t)$. From here, the assertions (a)---(d) are obvious. Part (e) follows from Proposition~\ref{prop-L-transform}. 
\end{proof}


\begin{proof}[Proof of Proposition~\ref{prop-orthonormalbasis}]
Let $O$ and $g_1,\dots, g_K$ be as in Proposition~\ref{commuting properties Prop}. 
We let $v_1,\dots, v_K$ be the columns of $O$. By Theorem 1 from \citet{alfonsischiedslynko} there is a one-dimensional optimal strategy $\bm \eta^i=(\eta^i_1,\dots,\eta^i_{|\bT|})\in\mathscr{X}_{\text{det}}(\bT,1)$ for the one-dimensional, nonincreasing, nonnegative, and convex decay kernel $g_i$, and $\bm \eta^i$ has only nonnegative components. By part (e) of Proposition~\ref{commuting properties Prop}, $\bm\xi^{(i)}:=\bm \eta^iv_i$ is an optimal strategy for $G$ in $\mathscr{X}_{\text{det}}(\bT,v_i)$ that satisfies condition (a). When $X_0=\sum_{i=1}^K\alpha_iv_i\in\bR^K$ is given,  the strategy with components $\alpha_i{\bm\eta}^iv_i$ is an optimal strategy in $\mathscr{X}_{\text{\rm det}}(\bT,X_0)$ by Proposition~\ref{commuting properties Prop} (e).\end{proof}

\begin{proof}[Proof of Theorem \ref{cont thm}]  The proof of part (a) can be performed along the lines of the proof of Theorem 2.20 from \citet{gatheralschiedslynko} by noting that  Proposition \ref{prop-orthonormalbasis} (b)  yields an upper bound on the number of shares traded by an optimal  strategy $\bm\xi\in\mathscr{X}(\bS,X_0)$ uniformly over  finite time grids  $\bS\subset\bT$:
$$\sum_{1 \le n \le |\bS|, 1\le j\le K} |\xi^{j}_n| \le \sum_{i=1}^K |\alpha_i|\sum_{j=1}^K |v^j_i|.$$
The details are left to the reader. 

 As for part (b), the argument from the proof of Theorem 2.20 in  \citet{gatheralschiedslynko} yields in particular, that 
$
\mathbb{E}[\,C_{\bT_n}(\bm\xi^{(n)})\,]$ decreases to $\mathbb{E}[\,C_{\bT}(X^*)\,]$
if $\bT_1\subset\bT_2\subset\cdots$ are finite time grids such that $\bigcup_n\bT_n$ is dense in $\bT$ and $\bm\xi^{(n)}$ is an optimal strategy in $\mathscr{X}(\bT_n,X_0)$. This proves (b). \end{proof}


\begin{proof}[Proof of Proposition~\ref{prop-decomposable-G}]
 Let $A\in\bR^{K\times K}$ be a symmetric square root of the nonnegative definite matrix $L$ so that $L=A^2=A^\top A$.
  For $t_1,\dots, t_N\in\bR$ and $\zeta_1,\dots,\zeta_N\in\bC^K$ let $\eta_k:=A\zeta_k$. It follows that 
 \begin{eqnarray*}
  \sum_{k,\ell=1}^N \zeta_k^* \wt G(t_k-t_\ell)\zeta_\ell
 = \sum_{k,\ell=1}^N \zeta_k^* A^\top g(|t_k-t_\ell|)A  \zeta_\ell=\sum_{k,\ell=1}^N \eta_k^* g(|t_k-t_\ell|)\eta_\ell=\sum_{i=1}^K\sum_{k,\ell=1}^N \overline\eta_k^i \eta_\ell^i g(|t_k-t_\ell|),
 \end{eqnarray*}
 which is nonnegative since the function $g$ is positive definite. Now let $g$ and $L$ be even strictly positive definite. Then the matrix $A$ is nonsingular and so we have $\eta_1=\cdots=\eta_N=0$  if and only if $\zeta_1=\cdots=\zeta_N=0$. It follows that in all other cases the right-hand side above is strictly positive. 
\end{proof}


\begin{proof}[Proof of Prop.~\ref{prop-L-transl}] Let $\bm\xi\in \mathscr{X}_{\text{det}}(\bT,X_0)$ be an optimal strategy for the decay kernel $G$. By Proposition~\ref{prop-opt-strat} there exists a Lagrange multiplier $\lambda\in\bR^K$ such that
$$\sum_{\ell=1}^N  \wt G(t_k-t_\ell)\xi_\ell=\lambda\qquad\text{for $k=1,\dots, |\bT|$.}
$$
By multiplying both sides of this equation with $L$ we obtain
$$\sum_{\ell=1}^N  \wt G_L(t_k-t_\ell)\xi_\ell=L \lambda\qquad\text{for $k=1,\dots, |\bT|$, }
$$
which, again by Proposition~\ref{prop-opt-strat}, implies that $\bm\xi$ is also optimal for $G_L$.
\end{proof}


\begin{proof}[Proof of Proposition~\ref{prop-L-transform}] 
Since $L$ is invertible, the transformation
$$\bm\xi\longmapsto \bm\xi^L:=(L^{-1}\xi_1,\dots,L^{-1}\xi_{|\bT|})
$$
is a one-to-one map from $ \mathscr{X}_{\text{\rm det}}(\bT,LX_0)$ to $ \mathscr{X}_{\text{\rm det}}(\bT,X_0)$. We also have
$$0\le\sum_{k,\ell}\xi_k^\top\wt G(t_k-t_\ell)\xi_\ell=\sum_{k,\ell}(\xi^L_k)^\top\wt G^L(t_k-t_\ell)\xi_\ell^L
$$
for all $\bT$ and $\bm\xi$. Minimizing the two sums over the respective classes of strategies yields the result. 
\end{proof}


To study the examples for $K=2$ assets, we will frequently use the following simple lemma. 


\begin{lemma} \label{lemma-eigenvalue2}
\newsavebox{\Genmat}
\savebox{\Genmat}{$\left(\begin{smallmatrix}a&b\\c&d\end{smallmatrix}\right)$}
 For $a,d \ge 0$ and $b,c\in\bR$, the matrix  $M:=\usebox{\Genmat}$ is nonnegative if and only if $\frac 14(b+c)^2 \le ad$. When $b\in \mathbb C$,  the Hermitian matrix $N:=\left(\begin{smallmatrix}a&b\\\bar{b}&d\end{smallmatrix}\right)$ is  nonnegative definite if and only if $|b|^2\le ad$.
\end{lemma}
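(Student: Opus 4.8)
The plan is to reduce both assertions to the elementary theory of $2\times2$ quadratic forms, using the reductions already recorded in the excerpt. For the first (real) statement, I would start from the observation stated just before the lemma that a real matrix is nonnegative if and only if its symmetric part is nonnegative definite. Concretely, for $x=(x_1,x_2)^\top\in\bR^2$ one computes directly that $x^\top M x = a x_1^2 + (b+c)\,x_1 x_2 + d x_2^2$, so that $x^\top M x$ depends on $b,c$ only through the combination $b+c$; equivalently, the symmetric part of $M$ is $\frac12(M+M^\top)=\left(\begin{smallmatrix}a & (b+c)/2\\ (b+c)/2 & d\end{smallmatrix}\right)$. Thus the question is reduced to nonnegativity of a genuine quadratic form with off-diagonal coefficient $b+c$.

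The core step is then the standard fact that, given $a,d\ge0$, the form $a x_1^2+(b+c)x_1x_2+dx_2^2$ is nonnegative on all of $\bR^2$ if and only if $(b+c)^2\le 4ad$, i.e.\ $\tfrac14(b+c)^2\le ad$. When $a>0$ I would complete the square, writing the form as $a\bigl(x_1+\tfrac{b+c}{2a}x_2\bigr)^2+\bigl(d-\tfrac{(b+c)^2}{4a}\bigr)x_2^2$, so that nonnegativity is equivalent to the coefficient of $x_2^2$ being nonnegative; the degenerate case $a=0$ is handled separately, where nonnegativity forces $b+c=0$, consistent with $\tfrac14(b+c)^2\le ad=0$.

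For the Hermitian assertion I would argue directly from the definition. Writing $z=(z_1,z_2)^\top$, a short computation gives $z^*Nz=a|z_1|^2+d|z_2|^2+2\operatorname{Re}(b\bar z_1 z_2)$. The key inequality is $z^*Nz\ge\bigl(\sqrt a\,|z_1|-\sqrt d\,|z_2|\bigr)^2+2\bigl(\sqrt{ad}-|b|\bigr)|z_1|\,|z_2|$, obtained by regrouping $a|z_1|^2+d|z_2|^2$ as $(\sqrt a\,|z_1|-\sqrt d\,|z_2|)^2+2\sqrt{ad}\,|z_1|\,|z_2|$ and bounding $\operatorname{Re}(b\bar z_1 z_2)\ge-|b|\,|z_1|\,|z_2|$. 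If $|b|^2\le ad$ both summands are nonnegative, giving nonnegative definiteness; for the converse I would pick $z$ with $|z_1|=\sqrt d$, $|z_2|=\sqrt a$ and phases aligned so that $\operatorname{Re}(b\bar z_1 z_2)=-|b|\sqrt{ad}$, which forces $z^*Nz=2\sqrt{ad}\,(\sqrt{ad}-|b|)\ge0$ and hence $|b|\le\sqrt{ad}$. I do not expect a genuine obstacle; the only care needed is the bookkeeping of the degenerate cases $a=0$, $d=0$, or $b=0$, where completing the square (resp.\ the phase-aligned test vector) breaks down and one instead argues that nonnegativity directly forces the off-diagonal quantity to vanish, which is exactly what $\tfrac14(b+c)^2\le ad$ (resp.\ $|b|^2\le ad$) asserts in those limits.
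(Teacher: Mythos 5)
Your proof is correct, and it takes a more elementary route than the paper's. The paper also begins by passing to the symmetrization $\frac12(M+M^\top)$, but then concludes in one line by invoking the criterion that a symmetric matrix is nonnegative definite if its leading principal minors are nonnegative, together with the computation $\det\bigl(\frac12(M+M^\top)\bigr)=ad-\frac14(b+c)^2$; the Hermitian case is dispatched with the remark that the same condition on the minors applies. You instead prove the $2\times2$ criterion from scratch: completing the square (with a separate treatment of $a=0$) in the real case, and the regrouping $a|z_1|^2+d|z_2|^2=(\sqrt a\,|z_1|-\sqrt d\,|z_2|)^2+2\sqrt{ad}\,|z_1|\,|z_2|$ together with phase-aligned test vectors in the Hermitian case. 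What your approach buys is self-containedness and, in fact, a bit more rigor: the leading-principal-minors criterion, read literally, is false for \emph{nonnegative} definiteness in general (e.g.\ $\mathrm{diag}(0,-1)$ has nonnegative leading principal minors but is not nonnegative definite), and it is valid in the paper's argument only because $a,d\ge0$ is part of the hypothesis, so that all principal minors, not just the leading ones, are being checked. Your explicit degenerate-case analysis ($a=0$ in the real case; $ad=0$ in the Hermitian case, where the test vector $|z_1|=\sqrt d$, $|z_2|=\sqrt a$ becomes uninformative and one must instead let the off-diagonal term dominate) is exactly the bookkeeping that makes the equivalence airtight. What the paper's route buys is brevity.
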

\begin{proof}
The matrix $M$ is nonnegative if and only if its symmetrization, $\wt M:=\frac 12(M+M^\top)$, is positive definite. Since a symmetric matrix is nonnegative definite if and only if all its leading principle minors are nonnegative and since $\det \wt M=ad-\frac14(b+c)^2$, the result follows. 
In the Hermitian case,  the same condition on the minors holds. 
\end{proof}


\begin{lemma}\label{lem-smooth-noninc-convex}
Let $G:[0,\infty) \rightarrow \mathbb R^{K\times K}$ and assume that $G(t)=\int_0^t \Lambda(s)ds$ for $t\ge 0$. Then, $G$ is nonincreasing if and only if $-\Lambda(s)$ is  nonnegative for a.e.~$s$. If in addition $\Lambda$ is piecewise continuous, then $G$ is convex if and only if $-\Lambda$ is nonincreasing. 
\end{lemma}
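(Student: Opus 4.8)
The plan is to reduce both equivalences to the scalar functions $g^x(t):=x^\top G(t)x$, $x\in\bR^K$, which by the fundamental theorem of calculus satisfy $g^x(t)=\int_0^t \phi^x(s)\,ds$ with $\phi^x(s):=x^\top\Lambda(s)x$. Recall that, by Definition~\ref{Property def}, the statements \lq\lq$G$ is nonincreasing" and \lq\lq$-\Lambda$ is nonincreasing" mean, respectively, that $g^x$ is nonincreasing for every $x$ and that $-\phi^x$ is nonincreasing for every $x$, while \lq\lq$-\Lambda(s)$ is nonnegative" means $\phi^x(s)\le 0$ for all $x$. Note that everything only involves the symmetric part of $\Lambda$, which causes no difficulty.

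For the first equivalence I would fix $x$ and use that $g^x$ is absolutely continuous with $(g^x)'(s)=\phi^x(s)$ for a.e.\ $s$, so that $g^x$ is nonincreasing if and only if $\phi^x(s)\le 0$ for a.e.\ $s$. The reverse implication is then immediate: if $-\Lambda(s)$ is nonnegative for a.e.\ $s$, then $\phi^x\le 0$ a.e.\ for every $x$, so each $g^x$ is nonincreasing and $G$ is nonincreasing. The forward implication requires the one genuinely non-routine step, a quantifier exchange: from \lq\lq$G$ nonincreasing" we obtain, for each fixed $x$, a null set $N_x$ with $\phi^x(s)\le 0$ off $N_x$, and we must produce a single null set valid for all $x$ simultaneously. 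I would handle this exactly as in the proof of Proposition~\ref{convex Fourier Prop}: set $N:=\bigcup_{x\in\mathbb Q^K}N_x$, which is still a null set, so that $\phi^x(s)\le 0$ for all $x\in\mathbb Q^K$ and all $s\notin N$; since $x\mapsto\phi^x(s)=x^\top\Lambda(s)x$ is continuous, the inequality extends to all $x\in\bR^K$ for every fixed $s\notin N$, i.e.\ $-\Lambda(s)$ is nonnegative for a.e.\ $s$.

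For the second equivalence I would again fix $x$ and use that, with $\Lambda$ piecewise continuous, $\phi^x$ is piecewise continuous; hence $g^x$ is continuous and continuously differentiable away from the locally finite set of discontinuities of $\phi^x$, with derivative $\phi^x$. The standard one-dimensional fact that the primitive of a piecewise continuous function is convex if and only if that function is nondecreasing then yields, for each fixed $x$, that $g^x$ is convex $\iff$ $\phi^x$ is nondecreasing $\iff$ $-\phi^x$ is nonincreasing. Because both \lq\lq$G$ is convex" and \lq\lq$-\Lambda$ is nonincreasing" are phrased as \lq\lq for every $x$", no common null set is needed here and there is no quantifier exchange, so quantifying over $x$ gives the claimed equivalence directly. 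The only point meriting a word of care is that piecewise continuity of $\phi^x$ upgrades \lq\lq nondecreasing a.e." to genuinely \lq\lq nondecreasing", as required by Definition~\ref{Property def}; this is immediate at points of continuity and extends to the finitely many exceptional points via one-sided limits. Thus the density-and-continuity argument of the previous paragraph is the main obstacle, while the convexity part is essentially a calculus computation once the reduction to $g^x$ is in place.
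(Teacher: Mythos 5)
Your proposal is correct and follows essentially the same route as the paper: reduce to the scalar functions $x^\top G(t)x=\int_0^t x^\top\Lambda(s)x\,ds$, obtain the a.e.\ sign condition for each fixed $x$, and then handle the quantifier exchange by taking the union of the null sets over $x\in\mathbb Q^K$ and extending to all of $\bR^K$ by continuity of $x\mapsto x^\top\Lambda(s)x$. Your treatment of the convexity equivalence is in fact slightly more detailed than the paper's, which simply declares that part (and the converse of the first equivalence) obvious.
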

\begin{proof}
The function $G$ is nonincreasing if and only if for any $\zeta \in \mathbb R^K$, 
$\int_0^t \zeta^\top \Lambda(s)\zeta ds$ is nonincreasing. This gives $\zeta^\top \Lambda(s)\zeta\ge 0$ for $s\not \in N_{\zeta}$, where $N_{\zeta}$ is a set with zero Lebesgue measure. We define $N=\cup_{\zeta \in \mathbb Q^K}N_{\zeta}$ and have by continuity  $\zeta^\top \Lambda(s)\zeta\ge 0$ for any $s\not \in N$, $\zeta \in \mathbb R^K$. The converse implication as well as the other equivalence are obvious. 
\end{proof}


\begin{proof}[Proof of Proposition~\ref{prop-exponential2}]
(a): By Lemma~\ref{lemma-eigenvalue2}, $G$ is nonnegative if and only if for every $t \ge 0$
$$\frac 14(a_{12}\exp(-b_{12}t)+a_{21}\exp(-b_{21}t))^2\le a_{11}\exp(-b_{11}t)a_{22}\exp(-b_{22}t).$$
That is, if and only if
$$\frac 14(a_{12}^2\exp(-2b_{12}t)+2a_{12}a_{21}\exp(-(b_{12}+b_{21})t)+a_{21}^2\exp(-2b_{21}t))\le a_{11}a_{22}\exp(-(b_{11}+b_{22})t).$$
If $G$ is nonnegative, taking $t=0$ shows $\frac 14(a_{12}+a_{21})^2 \le a_{11}a_{22}$, while sending $t\rightarrow\infty$ shows $\min\{b_{12},b_{21}\} \ge \frac 12 (b_{11}+b_{22})$. Conversely, if these inequalities hold, $G$ is nonnegative.

(b): $G$ is continuously differentiable. By Lemma~\ref{lem-smooth-noninc-convex}, $G$ is hence nonincreasing if and only if for every $t\ge 0$
$$-G'(t)=\begin{pmatrix} a_{11}b_{11} \exp(-b_{11} t) & a_{12}b_{12} \exp(-b_{12}t) \\ a_{21}b_{21} \exp(-b_{21}t) & a_{22}b_{22} \exp(-b_{22}t) \end{pmatrix}$$
is nonnegative. Analogously to (a), the result follows.

(c): Analogously to (b), by  Lemma~\ref{lem-smooth-noninc-convex} $G$ is convex if and only if for every $\ge 0$ its second derivative
$$G''(t)=\begin{pmatrix} a_{11}b_{11}^2 \exp(-b_{11} t) & a_{12}b_{12}^2 \exp(-b_{12}t) \\ a_{21}b_{21}^2 \exp(-b_{21}t) & a_{22}b_{22}^2 \exp(-b_{22}t) \end{pmatrix}$$
is nonnegative. The result follows analogously to (a).

(d): The assumption $a_{12}=a_{21}$ gives the continuity of $\widetilde G$. We have that $\widetilde G(t)=\int_{\mathbb R} e^{i \gamma t} M(d\gamma)$, where $M(d\gamma)=\frac 1{2\pi} \Lambda(\gamma)\,d\gamma$ with the Hermitian matrix
$$\Lambda(\gamma)=\begin{pmatrix} 2\frac{a_{11}b_{11}}{b_{11}^2+\gamma^2} & \frac{a_{12}}{b_{21}-i\gamma}+\frac{a_{12}}{b_{12}+i\gamma}\\ \frac{a_{12}}{b_{12}-i\gamma}+\frac{a_{12}}{b_{21}+i\gamma} & 2\frac{a_{22}b_{22}}{b_{22}^2+\gamma^2} \end{pmatrix}.
$$
From Theorem~\ref{bochner-thm-nonsymm}, $G$ is positive definite if and only if the matrix $\Lambda(\gamma)$ is nonnegative for almost all $\gamma\in\mathbb R$. According to Lemma~\ref{lemma-eigenvalue2}, this is equivalent to
$$\frac{a_{12}^2 (b_{12}+b_{21})^2}{(b_{12}^2+\gamma^2)(b_{21}^2+\gamma^2)}\le 4\frac{a_{11}b_{11}}{b_{11}^2+\gamma^2} \frac{a_{22}b_{22}}{b_{22}^2+\gamma^2}.$$
This condition is in turn equivalent to
$${a_{12}^2 (b_{12}+b_{21})^2}(b_{11}^2+\gamma^2)(b_{22}^2+\gamma^2)\le  4{a_{11}b_{11}}{a_{22}b_{22}}{(b_{12}^2+\gamma^2)(b_{21}^2+\gamma^2)}.$$
Comparing the coefficients for $\gamma^0$, $\gamma^2$ and $\gamma^4$, we see that it is sufficient to have
\begin{eqnarray}
a_{12}^2 (b_{12}+b_{21})^2 b_{11} b_{22}&\le& 4 a_{11}a_{22}b_{12}^2b_{21}^2 \label{pf-exponential2-1}\\
a_{12}^2 (b_{12}+b_{21})^2 (b_{11}^2+b_{22}^2)&\le&4a_{11}b_{11}a_{22}b_{22}(b_{12}^2+b_{21}^2) \label{pf-exponential2-2}\\
a_{12}^2 (b_{12}+b_{21})^2&\le&4a_{11}b_{11}a_{22}b_{22} \label{pf-exponential2-3}.
\end{eqnarray}
Note that (\ref{pf-exponential2-3}) follows immediately from (b), since $G$ is nonincreasing and $a_{12}=a_{21}$. To show (\ref{pf-exponential2-1}), note that $\sqrt{b_{11}b_{22}}\le\frac 12(b_{11}+b_{22})\le \min\{b_{12},b_{21}\}$, so $b_{11}b_{22}\le(\min\{b_{12},b_{21}\})^2\le b_{12}b_{21}$. Together with (\ref{pf-exponential2-3}) the result follows. Now, we claim that $b_{11}^2+b_{22}^2\le b_{12}^2+b_{21}^2$, which together with  (\ref{pf-exponential2-3}) gives  (\ref{pf-exponential2-2}). To see this, we define $m=\min\{b_{12},b_{21}\}$ and assume without loss of generality that $b_{11}\le b_{22}$. Since $\frac 12(b_{11}+b_{22})\le m$, we have $b_{11}\in (0,m]$ and $b_{11}^2+b_{22}^2\le (2m-b_{11})^2+ b_{11}^2 \le 2m^2$ because the polynomial function $x\in [0,m]\mapsto(2m-x)^2+ x^2$ reaches its maximum for $x\in\{0,m\}$.

(e): We find that the left upper entry of $G(0)G(t)-G(t)G(0)$ is $a_{12}a_{21}(e^{-b_{21}t}-e^{-b_{12}t})$, so $G(0)G(t)=G(t)G(0)$ implies $b_{12}=b_{21}$. Given that, a direct calculation shows that $G(0)G(t)=G(t)G(0)$ is equivalent to $a_{11}(e^{-b_{11}t}-e^{-b_{12}t})+a_{22}(e^{-b_{12}t}-e^{-b_{22}t})=0$. If $a_{11}=a_{22}$, this implies $b_{11}=b_{22}$. If $a_{11}\neq a_{22}$, by the equivalent equation $a_{22}-a_{11}=a_{22}e^{-(b_{22}-b_{12})t}-a_{11}e^{-(b_{11}-b_{12})t}$ we see that $b_{11}=b_{22}=b_{12}$.

Conversely, if either $a_{11}=a_{22}$ and $b_{12}=b_{21}$ and $b_{11}=b_{22}$, or $b_{11}=b_{12}=b_{21}=b_{22}$, a direct calculation shows that $G(s)G(t)=G(t)G(s)$ for all $s,t \ge 0$.
\end{proof}


\begin{proof}[Proof of Proposition~\ref{prop-convex-nonposdef}]
$G$ is obviously continuous and Proposition~\ref{prop-exponential2} yields that $G$ is nonnegative, nonincreasing and convex since $-G'$ is nonincreasing. 

To show that $G$ is not positive definite, using Mathematica we find that $G(t)=\int_{\mathbb R} e^{i\gamma t} M(d\gamma)$, where $M(d\gamma)=C \Lambda(\gamma)\,d\gamma + D \delta_0 (d\gamma)$ with a constant $C>0$, a matrix $D\in\mathbb R^{2\times 2}$, the Dirac measure $\delta_0$ at $0$ and $\Lambda(\gamma)$ given by
$$\begin{pmatrix}
 \frac{2 e^2 (-\cos (\gamma ) \gamma +e \gamma -\sin (\gamma ))}{\gamma ^3+\gamma } &
   \frac{5 e^3 \gamma -((3+2 e) \gamma +6 i (-1+e)) \cos (\gamma )+(i (-3+2 e) \gamma
   -6 (1+e)) \sin (\gamma )}{8 \gamma  (\gamma  (\gamma +i)+6)} \\
 \frac{5 e^3 \gamma -(3 (\gamma +2 i)+2 e (\gamma -3 i)) \cos (\gamma )+(-2 i e \gamma
   +3 i \gamma -6 e-6) \sin (\gamma )}{8 \gamma  (\gamma +2 i) (\gamma -3 i)} & \frac{2
   e^2 (-\cos (\gamma ) \gamma +e \gamma -\sin (\gamma ))}{\gamma ^3+\gamma } \\
\end{pmatrix}.$$

If $G$ was positive definite, then all eigenvalues of $\Lambda(\gamma)$ would be positive for $\gamma \ne 0$. But using Mathematica we find that one eigenvalue of $\Lambda(\gamma)$ is 
\begin{eqnarray*}&& \frac1{8 \left(\gamma
   ^2+4\right) \left(\gamma ^2+9\right) \left(\gamma ^3+\gamma \right)^2}
\Big(16 e^3 \left(\gamma ^2+1\right) \left(\gamma ^2+4\right) \left(\gamma ^2+9\right)
   \gamma ^2\\
&& -16 e^2 \left(\gamma ^2+1\right) \left(\gamma ^2+4\right) \left(\gamma
   ^2+9\right) \gamma  (\sin (\gamma )+\gamma  \cos (\gamma ))\\
&&-\Big(\gamma ^2
   \left(\gamma ^2+1\right)^4 \left(\gamma ^2+4\right) \left(\gamma ^2+9\right)
   \Big(\left(9+4 e^2+25 e^6\right) \gamma ^2\\
& &-10 e^3 (3+2 e) \gamma ^2 \cos (\gamma
   )+12 e \left(\gamma ^2-6\right) \cos (2 \gamma )\\
& &-60 e \gamma  \sin (\gamma )
   \left(-2 \cos (\gamma )+e^3+e^2\right)+36 \left(1+e^2\right)\Big)\Big)^{\frac 12}\Big),
\end{eqnarray*}
which is negative for all $\gamma$ with $0<|\gamma|<0.02$. So $G$ is not positive definite.
\end{proof}


\begin{proof}[Proof of Proposition~\ref{prop-lin-decay}] (a): By Lemma~\ref{lemma-eigenvalue2}, $G$ is nonnegative if and only if for every $t \ge 0$
$$\frac 14((a_{12} - b_{12} t)^++(a_{21} - b_{21} t)^+)^2\le (a_{11} - b_{11} t)^+(a_{22} - b_{22} t)^+.$$
Assume that $G$ is nonnegative. Choosing $t=0$ yields $\frac 14(a_{12}+a_{21})^2\le a_{11}a_{22}$. Choosing $t=\min\{\frac{a_{11}}{b_{11}},\frac{a_{22}}{b_{22}}\}$ yields that the right-hand side of the preceding equation is zero. So the left-hand side has to be zero which implies that $\max\{\frac{a_{12}}{b_{12}},\frac{a_{21}}{b_{21}}\} \le t$.

Conversely, assume that $\frac 14(a_{12}+a_{21})^2\le a_{11}a_{22}$ and $\max\{\frac{a_{12}}{b_{12}},\frac{a_{21}}{b_{21}}\} \le \min\{\frac{a_{11}}{b_{11}},\frac{a_{22}}{b_{22}}\}$. So for any $t \ge 0$, we have that $\max\{(1-\frac{b_{12}}{a_{12}}t)^+,(1-\frac{b_{21}}{a_{21}}t)^+\}\le \min\{(1-\frac{b_{11}}{a_{11}}t)^+,(1-\frac{b_{22}}{a_{22}}t)^+\}$. Thus,
\begin{eqnarray*}
\frac 14((a_{12} - b_{12} t)^++(a_{21} - b_{21} t)^+)^2 &=&\frac 14\left(a_{12}\left(1 - \frac{b_{12}}{a_{12}} t\right)^++a_{21}\left(1 - \frac{b_{21}}{a_{21}} t\right)^+\right)^2\\
&\le& \frac 14\left((a_{12}+a_{21})\max\left\{\left(1-\frac{b_{12}}{a_{12}}t\right)^+,\left(1-\frac{b_{21}}{a_{21}}t\right)^+\right\}\right)^2\\
&\le& a_{11}a_{22}\left(\min\left\{\left(1-\frac{b_{11}}{a_{11}}t\right)^+,\left(1-\frac{b_{22}}{a_{22}}t\right)^+\right\}\right)^2\\
&\le& a_{11}a_{22}\left(1-\frac{b_{11}}{a_{11}}t\right)^+\left(1-\frac{b_{22}}{a_{22}}t\right)^+\\
&=&(a_{11}-b_{11}t)^+(a_{22}-b_{22}t)^+.
\end{eqnarray*}
So $G$ is nonnegative.

(b): $G$ is absolutely continuous with derivative
$$G'(t)=\begin{pmatrix}-b_{11} \indf{\{t<\frac{a_{11}}{b_{11}}\}} &-b_{12} \indf{\{t<\frac{a_{12}}{b_{12}}\}}\\-b_{21} \indf{\{t<\frac{a_{21}}{b_{21}}\}}&-b_{22} \indf{\{t<\frac{a_{22}}{b_{22}}\}}\end{pmatrix}$$
By Lemmas~\ref{lemma-eigenvalue2} and~\ref{lem-smooth-noninc-convex}, $G$ is nonincreasing if and only if for almost all $t>0$
$$
\frac 14(b_{12} \indf{\{t<\frac{a_{12}}{b_{12}}\}}+b_{21} \indf{\{t<\frac{a_{21}}{b_{21}}\}})^2\le b_{11} \indf{\{t<\frac{a_{11}}{b_{11}}\}}b_{22} \indf{\{t<\frac{a_{22}}{b_{22}}\}}.
$$
Assume $G$ is nonincreasing. Then choosing $t$ small enough shows $\frac 14 (b_{12} +b_{21})^2 \le b_{11}b_{22}$. Choosing any $t \ge \min\{\frac{a_{11}}{b_{11}},\frac{a_{22}}{b_{22}}\}$ yields that the right-hand side of the preceding equation is zero. So the left-hand-side has to be zero which implies $\max\{\frac{a_{12}}{b_{12}},\frac{a_{21}}{b_{21}}\} \le \min\{\frac{a_{11}}{b_{11}},\frac{a_{22}}{b_{22}}\}$.

Conversely, if $\frac 14 (b_{12} +b_{21})^2 \le b_{11}b_{22}$ and $\max\{\frac{a_{12}}{b_{12}},\frac{a_{21}}{b_{21}}\} \le \min\{\frac{a_{11}}{b_{11}},\frac{a_{22}}{b_{22}}\}$, it is obvious that $G$ is nonincreasing.

(c): By computing the inverse Fourier transform, we easily get that for $a,b_+,b->0$,
 $$\indf{\{t\ge 0\}} (a-b_+t)^+ + \indf{\{t< 0\}} (a+b_-t)^+=\int_{\mathbb R} e^{i \gamma t} \frac{1}{2\pi \gamma^2} \left( b_+ (1-e^{-\frac{a\gamma }{b_+}})+ b_- (1-e^{\frac{a\gamma }{b_-}}) \right)d \gamma. $$
Thanks to the assumption $a_{12}=a_{21}$, $\widetilde{G}$ is continuous and $\widetilde G(t)=\int_{\mathbb R} e^{i\gamma t} M(d\gamma)$ with $M(d\gamma)=\frac 1{2 \pi \gamma^2}\Lambda(\gamma)d\gamma$, with the Hermitian matrix
$$\Lambda(\gamma)=\begin{pmatrix} 2 b_{11}(1-\cos(\frac{a_{11}}{b_{11}} \gamma)) & b_{12}(1-e^\frac{-i a_{12}\gamma}{b_{12}})+b_{21}(1-e^\frac{i a_{12}\gamma}{b_{21}})  \\ b_{21}(1-e^\frac{-i a_{12}\gamma}{b_{21}})+b_{12}(1-e^\frac{i a_{12}\gamma}{b_{12}}) & 2 b_{22}(1-\cos(\frac{a_{22}}{b_{22}} \gamma)) \end{pmatrix}.$$
From Theorem~\ref{bochner-thm-nonsymm}, $G$ is positive definite if and only if $\Lambda(\gamma)$ is positive definite for every $\gamma\in\mathbb R$. Using Lemma~\ref{lemma-eigenvalue2}, $\Lambda(\gamma)$ is positive definite if and only if 
$$| b_{12}(1-e^\frac{-i a_{12}\gamma}{b_{12}})+b_{21}(1-e^\frac{i a_{12}\gamma}{b_{21}})|^2 \le b_{11}(1-\cos(\frac{a_{11}}{b_{11}} \gamma)) b_{22}(1-\cos(\frac{a_{22}}{b_{22}} \gamma)),$$
i.e.\ if and only if
\begin{eqnarray*}
\left(b_{12}(1-\cos(\frac{a_{12}}{b_{12}}\gamma))+b_{21}(1-\cos(\frac{a_{12}}{b_{21}}\gamma))\right)^2
+\left(b_{12}\sin(\frac{a_{12}}{b_{12}}\gamma)-b_{21}\sin(\frac{a_{12}}{b_{21}}\gamma)\right)^2\\
\le 4 b_{11}(1-\cos(\frac{a_{11}}{b_{11}} \gamma)) b_{22}(1-\cos(\frac{a_{22}}{b_{22}} \gamma)),
\end{eqnarray*}
which is equivalent to
$$ \frac{a_{11}}{b_{11}}=\frac{a_{12}}{b_{12}}=\frac{a_{21}}{b_{21}}=\frac{a_{22}}{b_{22}}, \ b_{12}^2\le b_{11}b_{22}.$$
One implication is obvious. To see the other one, we apply the condition to $\gamma=2\pi \frac{b_{11}}{a_{11}}$, which gives $\frac{a_{12}b_{11}}{b_{12}a_{11}} \in \mathbb N$ and $\frac{a_{21}b_{11}}{b_{21}a_{11}} \in \mathbb N$, and thus $\frac{a_{12}b_{11}}{b_{12}a_{11}}=\frac{a_{21}b_{11}}{b_{21}a_{11}}=1$ since $\max\{\frac{a_{12}}{b_{12}},\frac{a_{21}}{b_{21}}\}\le \min\{\frac{a_{11}}{b_{11}},\frac{a_{22}}{b_{22}}\}$ by assumption.
Similarly, considering $\gamma=2\pi \frac{b_{22}}{a_{22}}$ gives $\frac{a_{12}b_{22}}{b_{12}a_{22}}=\frac{a_{21}b_{22}}{b_{21}a_{22}}=1$. In particular, $b_{12}=b_{21}$ and the condition for $\gamma=0$ gives the inequality on $b$'s. The remainder is obvious.
\end{proof}

\parskip-0.5em\renewcommand{\baselinestretch}{0.9}\small
\bibliography{literature}{}

\begin{thebibliography}{34}
\providecommand{\natexlab}[1]{#1}
\providecommand{\url}[1]{\texttt{#1}}
\expandafter\ifx\csname urlstyle\endcsname\relax
  \providecommand{\doi}[1]{doi: #1}\else
  \providecommand{\doi}{doi: \begingroup \urlstyle{rm}\Url}\fi

\bibitem[Alfonsi and Schied(2010)]{AS}
A.~Alfonsi and A.~Schied.
\newblock Optimal trade execution and absence of price manipulations in limit
  order book models.
\newblock \emph{SIAM J. Financial Math.}, 1:\penalty0 490--522, 2010.

\bibitem[Alfonsi et~al.(2008)Alfonsi, Fruth, and Schied]{AFS1}
A.~Alfonsi, A.~Fruth, and A.~Schied.
\newblock Constrained portfolio liquidation in a limit order book model.
\newblock \emph{Banach Center Publications}, 83:\penalty0 9--25, 2008.

\bibitem[Alfonsi et~al.(2010)Alfonsi, Fruth, and Schied]{AFS2}
A.~Alfonsi, A.~Fruth, and A.~Schied.
\newblock Optimal execution strategies in limit order books with general shape
  functions.
\newblock \emph{Quant. Finance}, 10:\penalty0 143--157, 2010.

\bibitem[Alfonsi et~al.(2012)Alfonsi, Schied, and Slynko]{alfonsischiedslynko}
A.~Alfonsi, A.~Schied, and A.~Slynko.
\newblock Order book resilience, price manipulation, and the positive portfolio
  problem.
\newblock \emph{SIAM J. Finan. Math.}, 3\penalty0 (1):\penalty0 511--533, 2012.

\bibitem[Almgren and Chriss(2001)]{almgrenchriss2001}
R.~Almgren and N.~Chriss.
\newblock Optimal execution of portfolio transactions.
\newblock \emph{Journal of {R}isk}, 3\penalty0 (2):\penalty0 5--39, 2001.

\bibitem[Bertsimas and Lo(1998)]{BertsimasLo}
D.~Bertsimas and A.~Lo.
\newblock Optimal control of execution costs.
\newblock \emph{Journal of Financial Markets}, 1:\penalty0 1--50, 1998.

\bibitem[Bochner(1932)]{bochner32}
S.~Bochner.
\newblock \emph{Vorlesungen {\"u}ber {F}ouriersche {I}ntegrale}.
\newblock Akademische Verlagsgesellschaft, Leipzig, 1932.

\bibitem[Boot(1964)]{Boot}
J.~C.~G. Boot.
\newblock \emph{Quadratic programming. {A}lgorithms, anomalies, applications}.
\newblock Studies in Mathematical and Managerial Economics, Vol. 2.
  North-Holland Publishing Co., Amsterdam, 1964.

\bibitem[Bouchaud et~al.(2004)Bouchaud, Gefen, Potters, and
  Wyart]{bouchaud2004}
J.-P. Bouchaud, Y.~Gefen, M.~Potters, and M.~Wyart.
\newblock Fluctuations and response in financial markets: the subtle nature of
  'random' price changes.
\newblock \emph{Quantitative Finance}, 4:\penalty0 176--190, 2004.

\bibitem[Busseti and Lillo(2012)]{BussetiLillo}
E.~Busseti and F.~Lillo.
\newblock Calibration of optimal execution of financial transactions in the
  presence of transient market impact.
\newblock \emph{Journal of Statistical Mechanics: Theory and Experiment},
  2012\penalty0 (09):\penalty0 P09010, 2012.

\bibitem[CFTC-SEC(2010)]{SEC}
CFTC-SEC.
\newblock Findings regarding the market events of {M}ay 6, 2010.
\newblock Report, 2010.

\bibitem[Cram{\'e}r(1940)]{cramer40}
H.~Cram{\'e}r.
\newblock On the theory of stationary random processes.
\newblock \emph{Annals of Mathematics}, 41\penalty0 (1):\penalty0 215--230,
  1940.

\bibitem[Donoghue(1974)]{Donoghue}
W.~F. Donoghue, Jr.
\newblock \emph{Monotone matrix functions and analytic continuation}.
\newblock Springer-Verlag, New York, 1974.
\newblock Die Grundlehren der mathematischen Wissenschaften, Band 207.

\bibitem[Falb(1969)]{falb69}
P.~Falb.
\newblock On a theorem of {B}ochner.
\newblock \emph{Publications Math{\'e}matiques de l'IH{\'E}S}, 36\penalty0
  (1):\penalty0 59--67, 1969.

\bibitem[Fruth et~al.(2014)Fruth, Sch{\"o}neborn, and
  Urusov]{FruthSchoenebornUrusov}
A.~Fruth, T.~Sch{\"o}neborn, and M.~Urusov.
\newblock Optimal trade execution and price manipulation in order books with
  time-varying liquidity.
\newblock \emph{Mathematical Finance}, 24:\penalty0 651--695, 2014.

\bibitem[Gatheral(2010)]{Gatheral}
J.~Gatheral.
\newblock No-{D}ynamic-{A}rbitrage and {M}arket {I}mpact.
\newblock \emph{Quantitative Finance}, 10:\penalty0 749--759, 2010.

\bibitem[Gatheral and Schied(2013)]{GatheralSchiedSurvey}
J.~Gatheral and A.~Schied.
\newblock Dynamical models of market impact and algorithms for order execution.
\newblock In J.-P. Fouque and J.~Langsam, editors, \emph{Handbook on Systemic
  Risk}, pages 579--602. Cambridge University Press, 2013.

\bibitem[Gatheral et~al.(2012)Gatheral, Schied, and
  Slynko]{gatheralschiedslynko}
J.~Gatheral, A.~Schied, and A.~Slynko.
\newblock Transient linear price impact and {F}redholm integral equations.
\newblock \emph{Mathematical Finance}, 22\penalty0 (3):\penalty0 445--474, July
  2012.

\bibitem[Gihman and Skorohod(1974)]{gihmanskorohod}
I.~Gihman and A.~Skorohod.
\newblock \emph{The {T}heory of {S}tochastic {P}rocesses I}.
\newblock Springer-Verlag, 1974.

\bibitem[Gill et~al.(1981)Gill, Murray, and Wright]{GillMurrayWright}
P.~E. Gill, W.~Murray, and M.~H. Wright.
\newblock \emph{Practical optimization}.
\newblock Academic Press Inc. [Harcourt Brace Jovanovich Publishers], London,
  1981.

\bibitem[Gl{\"o}ckner(2003)]{Glockner}
H.~Gl{\"o}ckner.
\newblock Positive definite functions on infinite-dimensional convex cones.
\newblock \emph{Mem. Amer. Math. Soc.}, 166\penalty0 (789):\penalty0 xiv+128,
  2003.

\bibitem[Guo(2013)]{GuoTutorial}
X.~Guo.
\newblock Optimal placement in a limit order book.
\newblock In H.~Topaloglu, editor, \emph{TUTORIALS in Operations Research},
  volume~10, pages 191--200. 2013.

\bibitem[Huberman and Stanzl(2004)]{hubermanstanzl}
G.~Huberman and W.~Stanzl.
\newblock Price {M}anipulation and {Q}uasi-{A}rbitrage.
\newblock \emph{Econometrica}, 74\penalty0 (4):\penalty0 1247--1275, July 2004.

\bibitem[Kratz and Sch{\"o}neborn(2013)]{kratzschoeneborn}
P.~Kratz and T.~Sch{\"o}neborn.
\newblock Optimal liquidation in dark pools.
\newblock \emph{EFA 2009 Bergen Meetings Paper. SSRN}, 2013.
\newblock URL \url{http://ssrn.com/abstract=1344583}.

\bibitem[L{\o}kka(2012)]{Lokka}
A.~L{\o}kka.
\newblock Optimal execution in a multiplicative limit order book.
\newblock \emph{preprint}, 2012.

\bibitem[Moro et~al.(2009)Moro, Vicente, Moyano, Gerig, Farmer, Vaglica, Lillo,
  and Mantegna]{MoroEtAl}
E.~Moro, J.~Vicente, L.~G. Moyano, A.~Gerig, J.~D. Farmer, G.~Vaglica,
  F.~Lillo, and R.~N. Mantegna.
\newblock Market impact and trading profile of hidden orders in stock markets.
\newblock \emph{Physical Review E}, 80\penalty0 (6):\penalty0 066--102, 2009.

\bibitem[Naimark(1943)]{naimark43}
M.~Naimark.
\newblock Positive definite operator functions on a commutative group.
\newblock \emph{Bull. Acad. Sci. URSS S\'er. Math. [Izvestia Akad. Nauk SSSR]},
  7:\penalty0 237--244, 1943.

\bibitem[Obizhaeva and Wang(2013)]{ObizhaevaWang}
A.~Obizhaeva and J.~Wang.
\newblock Optimal trading strategy and supply/demand dynamics.
\newblock \emph{Journal of Financial Markets}, 16:\penalty0 1--32, 2013.

\bibitem[P\'olya(1949)]{Polya}
G.~P\'olya.
\newblock Remarks on characteristic functions.
\newblock In J.~Neyman, editor, \emph{Proceedings of the Berkeley Symposium of
  Mathematical Statistics and Probability}, pages 115--123. University of
  California Press, 1949.

\bibitem[Predoiu et~al.(2011)Predoiu, Shaikhet, and
  Shreve]{PredoiuShaikhetShreve}
S.~Predoiu, G.~Shaikhet, and S.~Shreve.
\newblock Optimal execution in a general one-sided limit-order book.
\newblock \emph{SIAM J. Financial Math.}, 2:\penalty0 183--212, 2011.

\bibitem[Sasv{\'a}ri(2013)]{Sasvari}
Z.~Sasv{\'a}ri.
\newblock \emph{Multivariate characteristic and correlation functions},
  volume~50 of \emph{de Gruyter Studies in Mathematics}.
\newblock Walter de Gruyter \& Co., Berlin, 2013.

\bibitem[Schied et~al.(2010)Schied, Sch{\"o}neborn, and
  Tehranchi]{schiedschoenebornteranchi}
A.~Schied, T.~Sch{\"o}neborn, and M.~Tehranchi.
\newblock Optimal basket liquidation for {CARA} investors is deterministic.
\newblock \emph{Applied Mathematical Finance}, 17:\penalty0 471--489, 2010.

\bibitem[Sch{\"o}neborn(2011)]{schoeneborn2011}
T.~Sch{\"o}neborn.
\newblock Adaptive basket liquidation.
\newblock \emph{Preprint}, 2011.

\bibitem[Young(1913)]{Young}
W.~H. Young.
\newblock On the {F}ourier series of bounded functions.
\newblock \emph{Proceedings of the London Mathematical Society (2)},
  12:\penalty0 41--70, 1913.

\end{thebibliography}
\bibliographystyle{abbrvnat}

\end{document}